\date{\today}
\newcommand*{\emptycomment}[1]{}
\newtheorem{theorem}{Theorem}[section]
\newtheorem{lemma}[theorem]{Lemma}
\theoremstyle{definition}
\newtheorem{definition}[theorem]{Definition}
\newtheorem{example}[theorem]{Example}
\newtheorem{proposition}[theorem]{Proposition}
\newtheorem{corollary}[theorem]{Corollary}
\newcommand{\git}{/\!\!/}
\newcommand{\refl}{\text{\textnormal{ref}}}
\newcommand{\Irr}{\text{\textnormal{Irr}}}
\newcommand{\Stab}{\text{\textnormal{Stab}}}
\newcommand{\cpt}{\text{\textnormal{cpt}}}
\theoremstyle{remark}
\newtheorem{remark}[theorem]{Remark}
\numberwithin{equation}{section}
\DeclareMathOperator {\Map}{Map}
\DeclareMathOperator {\QR} {QEllR}
\DeclareMathOperator {\pt} {pt}
\DeclareMathOperator {\id} {Id}
\def \Z {\mathbb{Z}}
\def \C {\mathbb{C}}
\def \R {\mathbb{R}}
\def \T {\mathbb{T}}
\begin{document}
\title{ Quasi-elliptic cohomology of 4-spheres} 

\author{Zhen Huan}

\address{Zhen Huan, Center for Mathematical Sciences,
Huazhong University of Science and Technology, Hubei 430074, China} \curraddr{}
\email{2019010151@hust.edu.cn}

\date{\today}

\keywords{Elliptic cohomology. M5-branes. KR-theory.}
\subjclass[2020]{Primary: 55N34; Secondary: 81T30, 19L50}

\begin{abstract}    
Quasi-elliptic cohomology is conjectured in \cite{SatiSchreiber2022}
as a particularly suitable approximation to equivariant
$4$-th $Cohomotopy$, which classifies the
charges carried by M-branes in M-theory in a way that is analogous to
the traditional idea that complex K-theory classifies the charges of D-branes in
string theory. 
In this paper we compute quasi-elliptic cohomology of 4-spheres under the action by some finite subgroups that are the most interesting isotropy groups where the M5-branes may sit.
\end{abstract}

\maketitle

\tableofcontents

\section{Introduction}

In this paper we compute Real and complex quasi-elliptic cohomology of 4-spheres under specific action of some finite subgroups of $\mathrm{Spin}(5)$, which aims to give an approximation to the equivariant unstable 4th Cohomotopy, which is especially difficult to compute. Cohomotopy theory is conjectured to be the actual cohomology theory of relevance for classifying brane charges in M-theory. 

To interpret the relation between the computation and cohomotopy, we start the story by classifying spaces for cohomology theories.
For a given cohomology theory $E^{\ast}(-)$ with classifying space $E$, we have, for any good enough space $X$, \[ E^0(X) = \pi_0 \Map(X, E). \]
Here we can regard a map $X\rightarrow E$ as a "cocycle" for the $E$-cohomology, and a homotopy between such maps as a "boundary" in $E$-cohomology. Generally,  the classifying space of an abelian cohomology theory is its spectrum at level $0$. A classical example is  complex  topological $K$-theory
$K(-)$, whose classifying space can be taken to be $KU = BU \times \Z$. 

In addition, instead of using the whole spectrum of $E$, with only the classifying space we can define a generalized  non-abelian cohomology theory \[ E(X) := \pi_0 \Map(X, E)\] which makes good sense. One issue is that computing such cohomology theories is generally difficult. 
One method is approximating the cohomology theory $E$ by another one $E'$, which is better understood and easier to compute. The method is clearly possible whenever there is a map of classifying spaces $ E \longrightarrow E'$ because it induces evidently a cohomology operation 
\[ E(-) \longrightarrow E'(-),\]
which provides an image of the less-understood $E$-cohomology in the better-understood $E'$-cohomology. 

The archetypical example here is the Chern-Dold character map \cite{Dold_Adams_Shepherd_1972} \cite{FSS_char_nonabelian_cohomology_2023}, which approximates any generalized cohomology theory by a rational
cohomology theory. For instance,  the ordinary Chern character on $K(-)$
  \[ K(-) \longrightarrow H^{ev}(-; \mathbb{Q}) \] with $H^{ev}(-; \mathbb{Q}):= \prod\limits_{n\in\mathbb{N}} H^{2n}(-; \mathbb{Q})$, 
is represented by a map of classifying spaces 
 \[  BU \times  \Z  \longrightarrow   \prod\limits_{n \in \mathbb{N}} K(\mathbb{Q}, 2n).  \] 
This map of classifying spaces is itself a cocycle in the rational cohomology of the classifying space $BU\times \Z$. In other words, the Chern character itself can be viewed as an element in \[H^{ev}(BU\times \Z; \mathbb{Q}).\] 

Generally, a map of classifying spaces $E\longrightarrow E'$, inducing a cohomology operation $E(-) \longrightarrow E'(-) $,
 is itself a cocycle in the $E'$-cohomology $E'(E)$ of the classifying space $E$. 
Thus, in order to understand $E$-cohomology, we may try to understand  the $E'$-cohomology of its classifying space $E$ for suitable alternative
cohomology theories $E'$. 

Now we consider the cohomology theory, the $n$-th cohomotopy theory \[nCohomotopy(-),\]  whose classifying space is an $n$-sphere $S^n$.
Each  cocycle in the $E'$-cohomology $E'(S^n)$ is represented by a map $S^n \longrightarrow E'$. From it, 
 we get a cohomology operation
\[ nCohomotopy(-) \longrightarrow E'(-) ,\]
which provides us images of $nCohomotopy$ in $E'$-cohomology similarly to
how the Chern character provides images of $K$-cohomology in ordinary
rational cohomology.

It is suggested by \textit{Hypothesis H} \cite{FSS2024} \cite{SS_2020_103775} \cite{SS:2021uhj} that, specifically, $\mathrm{Spin}(5)$-twisted equivariant unstable $4Cohomotopy$ classifies the
charges carried by M-branes in M-theory in a way that is analogous to
the traditional idea that $K(-)$ classifies the charges of D-branes in
string theory.
Therefore, it's essential to compute the $4Cohomotopy$ of spacetime domains
relevant in M-theory. This can be hard, in particular once we remember
that all of these need to be done in twisted equivariant generality.
Thus, we apply the idea to approximate $4Cohomotopy$ of spacetime by using the cocycles $$S^4\longrightarrow E'$$
in $E'(S^4)$ for some suitable cohomology theory $E'$. Instead of $4Cohomotopy$ itself, we will study the image of the corresponding 
cohomology operation \[ 4Cohomotopy(-) \longrightarrow E'(-).\] Some information of the actual $4Cohomotopy$ may be lost but what they retain can still be valuable and is expected to be better understandable.

Specifically, the classifying spaces for equivariant $4Cohomotopy$ are
orbifolds $S^4\git G$ of the $4$-sphere acted by a group $G$, i.e. the orbifolds of
representation 4-spheres.  Hence the elements of the
$G$-equivariant $E'$-cohomology  $E'_G(S^4)$ 
serve, in the above way, as "generalized equivariant characters" on
equivariant $4Cohomotopy$, namely as equivariant cohomology operation
\[4Cohomotopy_G(-) \longrightarrow  E'_G(-).\]

As conjectured in \cite{SatiSchreiber2022}, the choice 
\[  E'_G(-)  :=  QEll_G(-) \]
should be a particularly suitable approximation to equivariant
$4Cohomotopy$ for the purpose of computing M-brane charge. 
One motivation for this is that the Witten elliptic genus,  which was
originally discussed for string \cite{Witten1988},  actually makes sense for
M5-branes \cite{ISM2004} \cite{KS2005} \cite{DGY_2007} \cite{GPP2021_4TMF} \cite{AHKRW_2015},
so that one should expect that it is actually part of the charges
carried by M5-branes. But these charges should also be in $Cohomotopy$,
and hence, it is conjectured in \cite{SatiSchreiber2022} that there is a useful approximation of
$4Cohomotopy$ by elliptic cohomology, and specifically by
quasi-elliptic cohomology.

This is the motivation for computing the quasi-elliptic cohomology for
representation 4-spheres. Moreover, as indicated in \cite{FSS_char_nonabelian_cohomology_2023}, the particular choice of
equivariance groups $G$  as finite subgroups of $\mathrm{Spin}(5)$ comes from the
fact that these are the most interesting isotropy groups for the
orbifolds on which these M5-branes may sit. We describe the interesting groups and their action on 4-spheres below.

The space $\mathbb{H}$ of quaternions is isomorphic to $\mathbb{R}^4$ as a real vector space. In addition,  the group of the unit quaternions is isomorphic to the special unitary group $SU(2)$, which is isomorphic to $\mathrm{Spin}(3)$. It can be identified with a subgroup of $\mathrm{Spin} (5)$ via the composition
\[ \mathrm{Spin}(3) \buildrel{p_1}\over\hookrightarrow \mathrm{Spin}(3)\times \mathrm{Spin}(3) \cong \mathrm{Spin}(4) \hookrightarrow \mathrm{Spin}(5)\] where the first homomorphism is the inclusion into the first factor.  
Under quaternion multiplication, there are two choices of group action by $\mathbb{H}$ on $\mathbb{R}^4$ that we are especially interested in.

  \begin{equation}\label{spin(4)_Act_12}
    \begin{tikzcd}[
      row sep=0pt,
      column sep=0pt
    ]
      \mathrm{Spin}(4)
      \ar[
        r,
        phantom,
        "{\simeq}"
      ]
      &[8pt]
      \mathrm{Spin}(3)
      \times
      \mathrm{Spin}(3)
      \ar[rr, ->>]
      &&
      \mathrm{SO}(\mathbb{H})
      \ar[
        r,
        phantom,
        "{ \simeq }"
      ]
      &[-12pt]
      \mathrm{SO}(4)
      \\
      &
      (e_1, 1)
      &\mapsto&
      \big(
        q
        \,\mapsto\,
        e_1 \cdot q 
      \big)
        \\
       &
      (e_1, e_1)
      &\mapsto&
      \big(
        q
        \,\mapsto\,
        e_1 \cdot q \cdot e_1^\ast
      \big)
    \end{tikzcd}
  \end{equation}

The group action can extend to $S^4$ by keeping the north pole and the south pole fixed. In \cite[Section 6]{huan2020_v2} we compute complex quasi-elliptic cohomology of $S^4$ under the first group action in \eqref{spin(4)_Act_12}. In Section \ref{Real_spin3_s4} we compute the Real quasi-elliptic cohomology for that. Moreover, in Section \ref{prod_rc_qell}, we compute some examples of complex and Real quasi-elliptic cohomology of $S^4$ under the second group action in \eqref{spin(4)_Act_12}.

In the appendix, we give some corollaries of the decomposition formula for complex equivariant $K$-theories in \cite{ngel2017EquivariantCB} and 
the Mackey decomposition formula for Freed-Moore $K$-theories in \cite{huanyoung2022}. They are used in the computation in Section \ref{Real_spin3_s4} and
\ref{prod_rc_qell} respectively.

In addition, before we present the computation of quasi-elliptic cohomology, we review in Section \ref{qell} and \ref{QEllR_rev} quasi-elliptic cohomology and twisted Real quasi-elliptic cohomology respectively.

\section*{Acknowledgments} 

This research is based upon the work  supported by  the General Program of the National Natural Science Foundation of China (Grant No. 12371068)  for the project 
"Quasi-elliptic cohomology and its application in topology and mathematical physics", the National Science Foundation under Grant Number DMS 1641020,  and the research funding from Huazhong University of Science and Technology.

The author thanks Hisham Sati and Urs Schreiber for suggesting the author compute quasi-elliptic cohomology of spheres and helpful discussion, and thanks Center for Quantum and Topological Systems at New York University Abu Dhabi for hospitality and support. In addition, the author thanks Beijing International Center for Mathematical Research and Peking University for hospitality and support. Part of this work was done during the author's visit at BICMR.

\section{Quasi-elliptic cohomology}\label{qell}

In this section we recall the definition of quasi-elliptic
cohomology in term of equivariant K-theory and state the
conclusions that we need in this paper. For more details on quasi-elliptic
cohomology, please refer to \cite{Huan2018a}. 

Let $G$ be a compact Lie group and  $X$ a $G$-space. Let $G^{tors}\subseteq G$ denote the set of
torsion elements of $G$. For any $g\in G^{tors}$, the fixed point
space $X^{g}$ is a $C_G(g)-$space where $C_G(g)$ is the centralizer $\{h\in G\mid h g= g h\}$. This group action can be extended to that by the group
\[\Lambda_G(g):=C_G(g)\times \mathbb{R}/\langle(g, -1)\rangle,\] which is given explicitly by 
\begin{equation} [h, t]\cdot x:=h \cdot x,  \label{lambda_act_def}\end{equation} for any $[h, t] \in \Lambda_G(g)$ and $x\in X^{g}$.

To give a complete description of the loop groupoid $\Lambda(X/\!\!/G)$, we need the following definitions.

\begin{definition}
\begin{enumerate}\item Let $g$, $g'$ be two elements in $G$. Define $C_G(g, g')$ to be the set 
$\{ h\in G\mid g'h=hg\}.$

\item Let $\Lambda_G(g, g')$ denote
the quotient of $C_G(g, g')\times \mathbb{R}/l\mathbb{Z}$ under
the equivalence $$(\alpha, t)\sim (g'\alpha, t-1)=(\alpha g,
t-1),$$ where $l$ is the order of $g$ in $G$. \end{enumerate} \end{definition}

\begin{definition} Define $\Lambda(X/\!\!/G)$ to be the groupoid with
\begin{itemize}
\item \textbf{objects}: the space $\coprod\limits_{g\in G^{tors}}X^{g}$
\item 
\textbf{morphisms}: the space
$$\coprod\limits_{g, g'\in G^{tors}}\Lambda_G(g, g')\times X^g.$$ \end{itemize}
For an object $x\in X^g$, the morphism
$([\alpha, t], x)\in \Lambda_G(g, g')\times X^g$ is an arrow from $x$ to $\alpha \cdot x\in X^{g'}.$ The composition of the morphisms is
defined by
\begin{equation}([\alpha_1, t_1], \alpha_2\cdot x) \circ ([\alpha_2,
t_2], x) = ([\alpha_1\alpha_2, t_1+t_2], x).\end{equation} 
Let $\mathbb{T}$ denote the circle group $\mathbb{R}/\mathbb{Z}$. We have a
homomorphism of orbifolds
$$\pi: \Lambda(X/\!\!/G)\longrightarrow B\mathbb{T}$$ sending all the objects to the single object in $B\mathbb{T}$, and
a morphism $([\alpha,t], x)$ to $e^{2\pi it}$ in $\mathbb{T}$.

\end{definition}

\begin{definition} The quasi-elliptic cohomology $QEll^*_G(X)$ is defined to be
$K^*_{orb}(\Lambda(X/\!\!/G))$. \label{defqell1}\end{definition}

The groupoid $\Lambda(X\git G)$ is equivalent to the disjoint union of action groupoids \begin{equation}\coprod_{g\in \pi_0(G^{tors} \git G)} X^g \git \Lambda_G(g) \label{lambda_decomp} \end{equation} where
$G^{tors}\git G$ is the conjugation quotient groupoid.   Thus,
we can unravel Definition \ref{defqell1} and
express it via equivariant K-theory.


\begin{definition}
\begin{equation}\label{defqell} QEll^*_G(X):=\prod_{g\in \pi_0(
G^{tors} \git G) }K^{\ast}_{\Lambda_G(g)}(X^{g})=\bigg(\prod_{g\in
G^{tors}}K^*_{\Lambda_G(g)}(X^{g})\bigg)^G.\end{equation} 
\end{definition}

Consider the composition
$$\mathbb{Z}[q^{\pm}]=K_{\mathbb{T}}(\mbox{pt})\buildrel{\pi^*}\over\longrightarrow K_{\Lambda_G(g)}(\mbox{pt})\longrightarrow
K_{\Lambda_G(g)}(X)$$ where $\pi: \Lambda_G(g)\longrightarrow
\mathbb{T}$ is the projection $[a, t]\mapsto e^{2\pi i t}$ and the
second map is defined via the collapsing map $X\longrightarrow \mbox{pt}$. Via it, 
$QEll_G^{\ast}(X)$ is naturally a
$\mathbb{Z}[q^{\pm}]-$algebra. 

\begin{proposition}The relation between quasi-elliptic cohomology and equivariant 
Tate K-theory $K^{\ast}_{Tate}(- \git G)$ is
\begin{equation}QEll^{\ast}_G(X) \otimes_{\mathbb{Z}[q^{\pm}]}\mathbb{Z}((q)) \cong K^{\ast}_{Tate}(X\git G).
\label{tateqellequiv}\end{equation}\end{proposition}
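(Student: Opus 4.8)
The plan is to reduce the claimed isomorphism to a purely local statement indexed by the torsion conjugacy classes, and then to a representation-theoretic comparison between the groups $\Lambda_G(g)$ and $C_G(g)$. By \eqref{defqell}, $QEll^*_G(X)=\big(\prod_{g\in G^{tors}}K^*_{\Lambda_G(g)}(X^g)\big)^G$, whereas equivariant Tate K-theory, defined via the Tate curve in the sense of the Devoto--Ganter construction, has the parallel shape $K^*_{Tate}(X\git G)=\big(\prod_{g\in G^{tors}}K^*_{C_G(g)}(X^g)\otimes(\cdots)\big)^G$, with $C_G(g)$ in place of $\Lambda_G(g)$ and the coefficient ring completed to $\mathbb{Z}((q))$. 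In all cases of interest $G$ is finite, so the product is finite; moreover $\mathbb{Z}((q))$ is flat over $\mathbb{Z}[q^\pm]$, being a localization of the $(q)$-adic completion $\mathbb{Z}[[q]]$ of the Noetherian ring $\mathbb{Z}[q]$. Hence the base change $-\otimes_{\mathbb{Z}[q^\pm]}\mathbb{Z}((q))$ is exact and commutes both with the finite product over conjugacy classes and with the finite-group invariants $(-)^G$ (the latter being a kernel, i.e.\ a finite limit). It therefore suffices to produce, naturally in $g$, an isomorphism of the single factors after base change.

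Fix a torsion element $g$ of order $l$. First I would analyze $\Lambda_G(g)$ through the extension
\[ 1\longrightarrow C_G(g)\longrightarrow \Lambda_G(g)\buildrel{\pi}\over\longrightarrow \mathbb{T}\longrightarrow 1, \]
where $C_G(g)\hookrightarrow\Lambda_G(g)$ is $h\mapsto[h,0]$ (injective, with image the kernel of $\pi$) and $\pi$ is $[h,t]\mapsto e^{2\pi it}$. The defining relation gives $[g,0]=[e,1]$, so the central element $g$ is identified with the generator of the rotation circle; concretely the circle contributes an invertible class $u$ satisfying $u^{l}=\pi^{*}q$ (so that $u$ plays the role of $q^{1/l}$) whose restriction to $\langle g\rangle$ is the tautological character. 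Because $g$ acts trivially on $X^g$, the central operator $g$ acts fibrewise on every $C_G(g)$-equivariant bundle, giving an eigenbundle decomposition
\[ K^*_{C_G(g)}(X^g)=\bigoplus_{j\in\mathbb{Z}/l}K^*_{C_G(g)}(X^g)_j \]
by $l$-th roots of unity. Matching the rotation weight of a $\Lambda_G(g)$-representation with the required $g$-eigenvalue then identifies
\[ K^*_{\Lambda_G(g)}(X^g)\;\cong\;\bigoplus_{k\in\mathbb{Z}}K^*_{C_G(g)}(X^g)_{\bar k}\,q^{k/l}, \]
a $\mathbb{Z}[q^\pm]$-lattice inside $K^*_{C_G(g)}(X^g)\otimes\mathbb{Z}[q^{\pm 1/l}]$, where $\bar k$ denotes $k\bmod l$.

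It remains to base change this lattice. Using $\mathbb{Z}[q^{\pm1/l}]\otimes_{\mathbb{Z}[q^\pm]}\mathbb{Z}((q))\cong\mathbb{Z}((q^{1/l}))$, obtained by regrouping Laurent coefficients according to their residue modulo $l$, one gets
\[ K^*_{\Lambda_G(g)}(X^g)\otimes_{\mathbb{Z}[q^\pm]}\mathbb{Z}((q))\;\cong\;\bigoplus_{j\in\mathbb{Z}/l}K^*_{C_G(g)}(X^g)_j\otimes\mathbb{Z}((q))\,q^{j/l}, \]
which is precisely the $g$-local summand of $K^*_{Tate}(X\git G)$. Reassembling over $\pi_0(G^{tors}\git G)$ and passing to $G$-invariants, legitimate by the flatness noted above, then yields the asserted isomorphism.

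The main obstacle I anticipate is the local identification in the second step together with its compatibility with the chosen definition of Tate K-theory. The delicate points are the appearance of the fractional powers $q^{j/l}$ and the fact that, in the Tate curve $\mathbb{G}_m/q^{\mathbb{Z}}$, the element $q^{1/l}$ is a genuine $l$-torsion point, so the $g$-eigenvalue grading on $K^*_{C_G(g)}(X^g)$ must be made to correspond exactly with the torsion-point structure entering the definition of $K^*_{Tate}$; the relevant structure maps are grading-shifts rather than honest ring homomorphisms $R(\langle g\rangle)\to\mathbb{Z}((q^{1/l}))$, and tracking this is where care is needed. One must also ensure $K^*_{C_G(g)}(X^g)$ is well-behaved enough (e.g.\ finitely generated over $R(C_G(g))$) for the regrouping identities to hold, and check that the $G$-action permuting the conjugacy-class factors intertwines the factorwise isomorphisms so that the comparison descends canonically to invariants.
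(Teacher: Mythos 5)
The paper states this proposition without proof, deferring to \cite{Huan2018a}; your argument reconstructs essentially that standard proof: the weight decomposition of $K^{\ast}_{\Lambda_G(g)}(X^g)$ under the central circle of $\Lambda_G(g)$ identifies it with the $\mathbb{Z}[q^{\pm}]$-lattice of Laurent coefficients lying in the matching $g$-eigenspaces inside $K^{\ast}_{C_G(g)}(X^g)\otimes\mathbb{Z}[q^{\pm 1/l}]$, and flat base change along $\mathbb{Z}[q^{\pm}]\to\mathbb{Z}((q))$ then recovers the Devoto--Ganter factor of $K^{\ast}_{Tate}(X\git G)$. This is correct as written for finite $G$ (the only case used in this paper, and the only case where the product over $\pi_0(G^{tors}\git G)$ is finite so that base change commutes with it); the one step you should make explicit is the bundle-level lemma that a $\Lambda_G(g)$-equivariant bundle on $X^g$ is the same datum as a $C_G(g)$-equivariant bundle together with the eigenspace grading matching the rotation weights, which is exactly what the cited reference proves.
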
 This is the main reason why the theory is called quasi-elliptic cohomology.

In addition, we give an example computing quasi-elliptic cohomology, which is \cite[Example 3.3]{Huan2018a}. The conclusions in Example \ref{ex3.3:huan2018} are applied in the computation of Section \ref{Real_spin3_s4} and Section \ref{prod_rc_qell}.
\begin{example}[$G=\mathbb{Z}/N$] \label{ex3.3:huan2018}
Let $G=\mathbb{Z}/N$ for $N\geq 1$, and let $\sigma\in
G$. Given an integer $k\in\mathbb{Z}$ which projects to
$\sigma\in\mathbb{Z}/N$, let $x_k$ denote the
representation of $\Lambda_G(\sigma)$ defined by
\begin{equation}\begin{CD}\Lambda_{G}(\sigma)=(\mathbb{Z}\times\mathbb{R})/(\mathbb{Z}(N,0)+\mathbb{Z}(k,1))
@>{[a,t]\mapsto[(kt-a)/N]}>> \mathbb{R}/\mathbb{Z}=\mathbb{T}
@>{q}>> U(1).\end{CD}\label{xk}\end{equation} $R\Lambda_G(\sigma)$
is isomorphic to the ring $\mathbb{Z}[q^{\pm}, x_k]/(x^N_k-q^k)$.
\bigskip

For any finite abelian group 
$G=\mathbb{Z}/N_1\times\mathbb{Z}/N_2\times\cdots\times\mathbb{Z}/N_m$,
let $\sigma=(k_1, k_2, \cdots k_n)\in G.$ We have
$$\Lambda_G(\sigma)\cong\Lambda_{\mathbb{Z}/N_1}(k_1)\times_{\mathbb{T}}\cdots\times_{\mathbb{T}}\Lambda_{\mathbb{Z}/N_m}(k_m).$$ Then
\begin{align*}R\Lambda_G(\sigma)&\cong
R\Lambda_{\mathbb{Z}/N_1}(k_1)\otimes_{\mathbb{Z}[q^{\pm}]}\cdots\otimes_{\mathbb{Z}[q^{\pm}]}R\Lambda_{\mathbb{Z}/N_m}(k_m)\\
&\cong\mathbb{Z}[q^{\pm}, x_{k_1}, x_{k_2},\cdots
x_{k_m}]/(x^{N_1}_{k_1}-q^{k_1},x^{N_2}_{k_2}-q^{k_2}, \cdots
x^{N_m}_{k_m}-q^{k_m})\end{align*} where all the $x_{k_j}$'s are
defined as $x_k$ in (\ref{xk}).\label{ppex}
\end{example}

\section{Twisted Real quasi-elliptic cohomology} \label{QEllR_rev}

In this section, we review the definition and properties of twisted Real quasi-elliptic cohomology. For more details, please refer to \cite{huanyoung2022}.

\begin{definition}
Let $G$ be a finite group. 
A \emph{$\Z/2$-graded group} is a group homomorphism $\pi: \hat{G} \rightarrow \Z/2$. The \emph{ungraded group} of $\hat{G}$ is $G = \ker \pi$. When $\pi$ is non-trivial,  $\hat{G}$ is called a \emph{Real structure} on $G$. The group $\hat{G}$ acts on $G$ by Real conjugation, \[\varsigma \cdot g = \varsigma g^{\pi(\varsigma)} \varsigma^{-1},\] $g \in G, \;\varsigma \in \hat{G}$. The Real centralizer of $g\in G$ is
\[
C^R_{\hat{G}}(g) = \{ \varsigma \in \hat{G} \mid \varsigma g ^{\pi(\varsigma)}\varsigma^{-1} = g \}.
\]
The group $C^R_{\hat{G}}(g)$ is $\Z/2$-graded with ungraded group the centralizer $C_{G}(g)$. \end{definition}

\begin{example}
The terminal $\Z/2$-graded group is $\id: \Z/2 \rightarrow \Z/2$ and is denoted simply by $\Z/2$. If $\Z/2$ acts on a group $\hat{H}$, then so does any $\Z/2$-graded group $\hat{G}$ and the resulting semi-direct product $\hat{H} \rtimes_{\pi} \hat{G}$ is naturally $\Z/2$-graded.
\end{example}

\begin{example}
The dihedral group $D_{2n}$ \[\langle r, s \mid r^n=1, s^2=1, (sr)^2 =1 \rangle. \] is a Real structure on $\Z/n$. The subgroup $\langle r \rangle\cong \Z/n$ is a normal subgroup of $D_{2n}$ and we have the short exact sequence\[ 1\longrightarrow \Z/n \longrightarrow D_{2n} \longrightarrow \Z/2 \longrightarrow 1\] with a generator of $\Z/n$ mapped to the rotation $r$. 
\end{example}

\begin{example}
    As computed in \cite[Example 1.8]{huanyoung2022}, the Real representation ring $RR(\Z/n)$ w.r.t. to the Real structure $D_{2n}$ is isomorphic to complex representation ring $R(\Z/n) \cong \Z[\zeta]/\langle \zeta^n-1 \rangle.$ 
\end{example}

\begin{example}
    For any $g\in G$, the Real centralizer $C^R_{\hat{G}}(g)$ is $\Z/2$-graded with ungraded group the centralizer $C_{G}(g)$. It is a Real Structure on $C_G(g)$.
    
In addition, the element $(-1,g) \in \R \rtimes_{\pi} C_{\hat{G}}^R(g)$ is Real central and so generates a normal subgroup isomorphic to $\Z$. This leads to the definition of the  \emph{Real enhanced centralizer} of $g$.
\[
\Lambda^R_{\hat{G}}(g)
:=
\left( \R \rtimes_{\pi} C_{\hat{G}}^R(g) \right) \slash \langle (-1,g) \rangle.
\] It is a Real structure on the group $\Lambda_G(g) $.

\end{example}

The set of connected components $\pi_0(G \git G)$ of the conjugation quotient groupoid is the set of conjugacy classes of $G$. Given a Real structure $\hat{G}$, Real conjugation defines an involution of $\pi_0( G \git  G)$. This defines a partition
\begin{equation}
\label{eq:conjClassesDecomp}
\pi_0(G \git G)
=
\pi_0(G \git G)_{-1} \sqcup \pi_0(G \git G)_{+1}
\end{equation}
with $\pi_0(G \git G)_{-1}$ the fixed point set of the involution. The conjugacy class of $g \in G$ is fixed by the involution if and only if $C_{\hat{G}}^R(g) \setminus C_{G}(g) \neq \varnothing$. The set $\pi_0(G \git_R \hat{G})$ of Real conjugacy classes of $G$ inherits from \eqref{eq:conjClassesDecomp} a partition
\begin{equation}
\label{eq:RealConjClasses}
\pi_0(G \git_R \hat{G})
=
\pi_0(G \git G)_{-1} \sqcup \pi_0(G \git G)_{+1} \slash \Z/2.
\end{equation}

\bigskip

Let  $X$ be a $\hat{G}$-space.  Note that for each $g\in G$, the fixed point space $X^g$ is a $C^R_{\hat{G}}(g)$-space. In addition, the $\Lambda_G(g) $-action on $X^g$ as defined in \eqref{lambda_act_def} can extend to an action by $\Lambda^R_{\hat{G}}(g)$: \begin{equation}
    [r, \alpha]\cdot x:= \alpha\cdot x.
\end{equation} for any element $[r, \alpha] \in \Lambda^R_{\hat{G}}(g)$, any $x\in X^g$.

The Real loop groupoid $\hat{\Lambda} (X \git \hat{G}) $, as defined in \cite[Definition 2.6]{huanyoung2022}, adds the involution as morphisms into the groupoid $\Lambda(X\git G)$. And it is a double cover of the groupoid $\Lambda(X\git G)$.
In addition,
we have the Real version of the decomposition \eqref{lambda_decomp}, i.e. the decomposition of the groupoid $\hat{\Lambda} (X \git \hat{G})$ corresponding to the partition \eqref{eq:RealConjClasses}.
\begin{proposition}
\label{prop:loopGrpdModel}
There is an equivalence of $B\Z/2$-graded groupoids
\begin{equation}
\label{eq:conjClassDecompLambdaUnori}
\hat{\Lambda} (X \git \hat{G})
\cong
\coprod_{g \in \pi_0(G \git G)_{-1}} X^g \git \Lambda_{\hat{G}}^R(g)
\sqcup
\coprod_{g \in \pi_0(G \git G)_{+1} \slash \Z/2}
X^g \git \Lambda_{G}(g).
\end{equation}
\end{proposition}

The twisted Real quasi-elliptic cohomology is defined in \cite[Definition 3.2, Proposition 3.3]{huanyoung2022} in terms of  Freed-Moore K-theories.
\begin{definition}
\begin{equation}
\label{eq:QRDecomp}
\QR^{\ast+ \hat{\alpha}}(X \git G)
:= KR^{\bullet + \tilde{\tau}^{\refl}_{\pi}(\hat{\alpha})}(\Lambda ( X \git G )) \cong 
\prod_{g \in \pi_0(G \git_R \hat{G})} {^{\pi}}K^{\ast + \tilde{\tau}_{\pi}^{\refl}(\hat{\alpha})}_{\Lambda^R_{\hat{G}}(g)}(X^g), 
\end{equation} where $\hat{\alpha}$ is a fixed element in $ H^4(B\hat{G}; \Z)$ and $\tilde{\tau}_{\pi}^{\refl}$ is the Real transgression map.
\end{definition}

By the property of the Freed-Moore K-theory \cite{freed2013b}, if the Real structure $\widehat{G}$ splits, each factor in \eqref{eq:QRDecomp}  is 
the equivariant $KR$-theory defined by Atiyah and Segal \cite{atiyah1969}.

In addition, using the partition \eqref{eq:RealConjClasses}, the isomorphism \eqref{eq:QRDecomp} can be written as
\begin{equation}
\label{eq:QRDecompExpli}
\QR^{\ast + \hat{\alpha}}(X \git G)
\cong
\prod_{g \in \pi_0(G \git G)_{-1}}
KR^{\ast+\tilde{\tau}_{\pi}^{\refl}(\hat{\alpha})}_{\Lambda_{G}(g)}(X^g)
\times
\prod_{g \in \pi_0(G \git G)_{+1} \slash \Z/2}
K^{\ast+\tau(\alpha)}_{\Lambda_{G}(g)}(X^g).
\end{equation}

The $B \Z/2$-graded morphism $\hat{\Lambda} (X \git \hat{G}) \longrightarrow B  \mathrm{O}(2)$ which tracks loop rotation and reflection makes $\QR^{\ast}(X \git G)$ into a $KR^{\ast}_{\mathbb{T}}(\pt)$-algebra and, in particular, a module over $\Z[q^{\pm }] \subset KR^{\ast}_{\mathbb{T}}(\pt)$. 

\begin{theorem}
\label{thm:TateVsQR}
Assume that $\hat{G}$ is non-trivially $\Z/2$-graded. The relation between twisted Real quasi-elliptic cohomology 
 and twisted Real equivariant Tate K-theory is 
\[
KR_{Tate}^{\ast + \hat{\alpha}}(X \git G)
\cong
\QR^{\ast + \hat{\alpha} }(X \git G) \otimes_{KR^{\ast}(\pt)[q^{\pm }]} KR^{\ast}(\pt)((q)).
\]
\end{theorem}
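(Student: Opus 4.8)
The plan is to mirror the non-equivariant/complex comparison between quasi-elliptic cohomology and Tate K-theory, transporting it across the Real refinement encoded in the decomposition \eqref{eq:QRDecompExpli}. The key structural input is that both sides are assembled factor-by-factor over the Real conjugacy classes $\pi_0(G \git_R \hat{G})$, and that tensoring with $KR^{\ast}(\pt)((q))$ over $KR^{\ast}(\pt)[q^{\pm}]$ commutes with the product (the index set is finite, $G$ being a finite group), so it suffices to establish the isomorphism on each factor separately. First I would recall the defining formula \eqref{eq:QRDecomp}, which exhibits $\QR^{\ast + \hat{\alpha}}(X \git G)$ as the Freed--Moore $KR$-theory of the Real loop groupoid; the analogue on the Tate side is obtained by the same localization that produces ordinary Tate K-theory from quasi-elliptic cohomology, namely inverting and completing the loop-rotation parameter $q$.

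The heart of the argument is to identify, on each factor, what the base-change $- \otimes_{KR^{\ast}(\pt)[q^{\pm}]} KR^{\ast}(\pt)((q))$ does. The module structure over $\Z[q^{\pm}] \subset KR^{\ast}_{\mathbb{T}}(\pt)$ comes from the $B\Z/2$-graded map $\hat{\Lambda}(X \git \hat{G}) \to B\mathrm{O}(2)$ tracking loop rotation and reflection, exactly as in the complex case where the map to $B\mathbb{T}$ supplies the $\Z[q^{\pm}]$-algebra structure used in \eqref{tateqellequiv}. I would argue that completing at $q$ replaces the finite loop-rotation data $\Lambda_G(g) = C_G(g) \times \R / \langle (g,-1) \rangle$ by its full formal-loop refinement, which is precisely the passage recording the ``delocalized'' contributions that define Tate K-theory. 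Concretely, on a factor indexed by $g \in \pi_0(G\git G)_{+1}/\Z/2$ the comparison reduces to the established complex isomorphism \eqref{tateqellequiv} applied to $X^g \git \Lambda_G(g)$; on a factor indexed by $g \in \pi_0(G\git G)_{-1}$ the same reduction holds with ordinary equivariant K-theory replaced by $KR$-theory of $\Lambda^R_{\hat{G}}(g)$, and the twist $\tilde{\tau}^{\refl}_{\pi}(\hat{\alpha})$ carried along unchanged since base change is $KR^{\ast}(\pt)((q))$-linear and does not interact with the transgressed twist.

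The step I expect to be the main obstacle is verifying that the Real transgression twist $\tilde{\tau}^{\refl}_{\pi}(\hat{\alpha})$ is compatible with the localization, i.e.\ that inverting and completing $q$ commutes with the twisting and does not alter the twist class on the Tate side. In the complex case the twist is transgressed from $H^4(B\hat{G};\Z)$ in a way that lands in the $q$-independent part of the coefficients, so I would check that the reflection-refined transgression likewise factors through data unaffected by the $\R$-direction being completed; the assumption that $\hat{G}$ is non-trivially $\Z/2$-graded guarantees each Real factor genuinely uses the $KR$-theoretic (rather than purely complex) transgression, which is where \cite{huanyoung2022} must be invoked for the precise functoriality of $\tilde{\tau}^{\refl}_{\pi}$ under the localization. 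Once this compatibility is in hand, assembling the factorwise isomorphisms through \eqref{eq:QRDecompExpli} and matching them with the corresponding decomposition of $KR^{\ast+\hat{\alpha}}_{Tate}(X \git G)$ completes the proof.
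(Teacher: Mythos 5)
You should first note that this paper does not actually prove Theorem \ref{thm:TateVsQR}: Section \ref{QEllR_rev} is a review section and the theorem is imported verbatim from \cite{huanyoung2022}, so there is no in-paper argument to compare against. Judged on its own, your factorwise strategy — decompose both sides over $\pi_0(G\git_R\hat{G})$ using \eqref{eq:QRDecompExpli}, reduce the classes in $\pi_0(G\git G)_{+1}/\Z/2$ to the complex isomorphism \eqref{tateqellequiv} and the classes in $\pi_0(G\git G)_{-1}$ to a $KR$-analogue, then commute the finite product with the base change — is the correct skeleton and is essentially how the cited reference proceeds.

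The genuine gap is that you never engage with the left-hand side. Twisted Real equivariant Tate K-theory $KR^{\ast+\hat\alpha}_{Tate}(X\git G)$ is an independently defined object (a Real, twisted refinement of Ganter's construction via the formal/ghost loop groupoid), and the substantive content of the theorem is precisely that \emph{it} admits a decomposition over Real conjugacy classes whose factor at $g$ is $KR^{\ast+\tilde\tau^{\refl}_\pi(\hat\alpha)}_{\Lambda^R_{\hat G}(g)}(X^g)\otimes_{KR^{\ast}(\pt)[q^{\pm}]}KR^{\ast}(\pt)((q))$ (resp.\ the complex analogue on the $+1$ classes). Your proposal asserts this at the very end as ``matching them with the corresponding decomposition of $KR^{\ast+\hat\alpha}_{Tate}(X\git G)$'' without establishing that such a decomposition exists; and your intermediate claim that completing at $q$ ``replaces the finite loop-rotation data by its full formal-loop refinement, which is precisely the passage \ldots that defines Tate K-theory'' takes the conclusion as the definition, making the argument circular. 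By contrast, the step you single out as the main obstacle — compatibility of $\tilde\tau^{\refl}_\pi(\hat\alpha)$ with the localization — is essentially automatic, since the twist lives on the Real loop groupoid and the base change only touches coefficients. To close the gap you would need to quote the definition of $KR_{Tate}$ from \cite{huanyoung2022} and prove (or cite) its Real-conjugacy-class decomposition with $((q))$-completed coefficients; that is where all the work lies.
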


In addition, we give an example computing Real quasi-elliptic cohomology, which is \cite[Example 3.7]{huanyoung2022}. The conclusions in Example \ref{QEllRcyclic} are applied in the computation of Section \ref{Real_spin3_s4} and Section \ref{prod_rc_qell}.
\begin{example} \label{QEllRcyclic}
Let $G = \Z/n = \langle r \rangle$ and $\hat{G} = D_{2n}$. The $\Z/2$-action on $\pi_0(\Z/n \git \Z/n) =\Z/n$ is trivial. By the isomorphism
\eqref{eq:QRDecompExpli}, 
\begin{equation}
\label{eq:RealZnDecompDihed}
\QR^{\ast}(\pt\git \Z/n)
\cong 
\prod_{m=0}^{n-1}
KR^{\ast}_{\Lambda_{\Z/n}(r^m)}(\pt).
\end{equation}
As discussed in \cite[Example 3.7]{huanyoung2022}, 
\begin{equation}
\label{eq:QRZnPt}
KR^{\ast}_{\Lambda_{\Z/ n}(r^m)}(\pt)
\cong 
KR^{\ast}(\pt)[q^{\pm}, x_m] / \langle x_m^n - q^m \rangle.
\end{equation}
\end{example}

\section{Real Quasi-elliptic cohomology of $S^4$ acted by a finite subgroup of $\mathrm{Spin}(3)$} \label{Real_spin3_s4}

In this section, we compute all the Real quasi-elliptic cohomology theories \[\QR^{\ast}_G(S^4)\] where $G$ goes over all the finite subgroups of $SU(2)\cong \mathrm{Spin}(3)$.

First we explain how the group $G$ acts on $S^4$.
We have the standard orthogonal $\mathrm{SO}(5)$-action on $\R^5$ and also on the subspace $S^4\subset \R^5$. The covering map \[\mathrm{Spin}(5) \longrightarrow \mathrm{SO}(5)\] makes $S^4$ a well-defined $\mathrm{Spin}(5)$-space. 
The $G$-action on $S^4$ is induced by  the composition 
\begin{equation} i_G: G\hookrightarrow \mathrm{Spin}(3)\buildrel{p_1}\over\longrightarrow \mathrm{Spin}(3)\times \mathrm{Spin}(3) =\mathrm{Spin}(4) \hookrightarrow \mathrm{Spin}(5) \label{gpact:def}\end{equation}  where $p_1$ is the projection to the first factor of the product group. 

We give the explicit formula of the $G$-action below.
The group $S(\mathbb{H})$ of unit quaternions is isomorphic to $SU(2)\cong \mathrm{Spin}(3)$ via the correspondence \[a+bi+cj+dk \mapsto \left[ {\begin{array}{cc} a+bi & c+di \\  -c+di & a-bi
\end{array}} \right].  \] In view of this, 
$\mathrm{Spin}(4)$ can be described as the group \[ \{ \left[ {\begin{array}{cc} q & 0 \\  0 & r
\end{array}} \right] \mid q, r\in \mathbb{H}, |q|=|r|=1.\}, \] and $\mathrm{Spin}(5)$ can be identified with the  quaternionic unitary group.
Thus, as indicated in \cite[pp.263]{porteous_1995}, the inclusion from $\mathrm{Spin}(4)\hookrightarrow \mathrm{Spin}(5)$ is given by the formula
\begin{equation} \left[ {\begin{array}{cc} q & 0 \\  0 & r
\end{array}} \right] \mapsto \left[ {\begin{array}{cc} q & 0 \\  0 & r
\end{array}} \right]. \label{spin4act} \end{equation}

In addition, as shown in \cite[pp.151]{porteous_1995}, the rotation of $\mathbb{R}^4$ represented by 
\[\left[ {\begin{array}{cc} q & 0 \\  0 & r
\end{array}} \right] \in \mathrm{Spin}(4)\] is given by the map \begin{equation} \label{spin(4)action} \left[ {\begin{array}{cc} y & 0 \\  0 & \overline{y}
\end{array}} \right] \mapsto 
\left[ {\begin{array}{cc} q & 0 \\  0 & r
\end{array}} \right] \left[ {\begin{array}{cc} y & 0 \\  0 & \overline{y}
\end{array}} \right] \widehat{\left[ {\begin{array}{cc} q & 0 \\  0 &r
\end{array}} \right]}^{-1} = \left[ {\begin{array}{cc} qy\overline{r} & 0 \\  0 & r\overline{y}\overline{q}
\end{array}} \right].\end{equation} where $\R^4$ is identified with the linear space \[\{ \left[ {\begin{array}{cc} y & 0 \\  0 & \overline{y}
\end{array}} \right] \mid  y\in \mathbb{H}.\} .\]

Then, the group $\mathrm{Spin}(4)\subset \mathrm{Spin}(5)$ acts on $S^4\subset \mathbb{R}^5$ via the composition \begin{equation} \mathrm{Spin}(4)\rightarrow \mathrm{SO}(4)\xrightarrow{A\mapsto \left[ {\begin{array}{cc} A & 0 \\  0 & 1
\end{array}} \right]}\mathrm{SO}(5) \label{spins5} \end{equation}
and the standard orthogonal action.

\bigskip

In the rest part of the paper, we will use the symbol \[A_\theta\] to denote the matrix \[
\left[ {\begin{array}{cc} e^{\theta i} & 0 \\  0 &e^{-\theta i} 
\end{array}} \right] ,\] and the symbol \[B_{\theta}\] to denote the matrix \[\left[ {\begin{array}{cc} \cos \theta & -\sin \theta  \\ \sin \theta & \cos \theta
\end{array}} \right]\]


First we need to pick a Real structure $(\widehat{SU(2)}, \pi)$  on the group $SU(2)$ as well as on all its finite subgroups by equipping the group with a reflection $s$. The choice is definitely not unique. Next, we define the reflection action on $S^4$ and, thus, together with \eqref{spin(4)action}, we define the action on $S^4$ by $\widehat{SU(2)}$.

\begin{example} \label{Real_str_SU(2)}
    

Motivated by the Real structure \[1\rightarrow \Z/n\rightarrow D_{2n}\buildrel{\pi}\over\rightarrow \Z/2 \rightarrow 1\] of the cyclic group $\Z/n < SU(2)$, 
we want to pick a Real structure $(\widehat{SU(2)}, \pi)$ on $SU(2)$ making the diagrams below commute.

\begin{equation} \label{Real:comm}
    \xymatrix{ 1 \ar[r] &\Z/n \ar[r] \ar@{^{(}->}[d] & D_{2n} \ar[r]\ar@{^{(}->}[d] &\Z/2 \ar[r] \ar@{=}[d] &1   \\
1\ar[r] &\mathrm{SO}(2) \ar@{^{(}->}[d] \ar[r] & O(2) \ar[r]^{\det} \ar@{^{(}->}[d] &\Z/2 \ar[r] \ar@{=}[d]& 1 \\
1\ar[r] &SU(2) \ar[r] & \widehat{SU(2)} \ar[r]^{\pi} &\Z/2 \ar[r]& 1
}
\end{equation} where the horizontal sequences are all exact. In the left column, the generator $r$ of the rotation group $ \Z/n < D_{2n}$ is mapped to the rotation  $B_{ \frac{2\pi  }{n}}$ in $\mathrm{SO}(2)$. 
The lower left vertical map can be chosen to map the rotation $B_{\frac{2\pi}{n}}$ to $A_{\frac{2\pi}{n}} \in SU(2)$. In addition, the reflection in $D_{2n}$ can be mapped to \[s:= \left[ {\begin{array}{cc} 0 & 1 \\ 1 & 0
\end{array}} \right] \in U(2).\] It's straightforward to check that $(sA_{\theta})^2$ is identity for any $\theta$.
In addition, we can take the action of $s$ on $\mathbb{R}^4\cong \mathbb{H}$ to be \begin{equation}\label{ref:su2:1}(a+bi+cj+dk) \mapsto (a-bi+cj-dk). \end{equation} Note that under the reflection \eqref{ref:su2:1}, the north and south poles of $S^4$ are still fixed. On the $\mathbb{R}^4$-plane,  the two pairs of points \[(0, 1, 0,0) \mbox{  and } (0, -1, 0, 0)\] \[ (0, 0, 0, 1) \mbox{   and  } (0, 0, 0, -1)\] are switched by the reflection respectively.  
It is straightforward to check that   $ (sA_{\theta})^2$ acts as identity on $\mathbb{R}^4$ for any $\theta$. 
Thus, it's reasonable to take the Real structure to be the subgroup \[ SU(2)\langle s \rangle \] of $U(2)$ and take the projection   to be the determinant
map $\det$.

\bigskip

Instead, we can map the rotation $r$ to the matrix $B_{\frac{2\pi}{n}}$, which is a conjugation of $A_{\frac{2\pi}{n}}$. We have \[ A^{-1} B_{\theta} A = A_{\theta}
\] where $A= \frac{1}{\sqrt{2}}\left[ {\begin{array}{cc} 1 & -i \\  -i & 1
\end{array}} \right] $ and $\theta$ is any real number. In addition, the reflection $s$ is fixed under the conjugation. The corresponding Real structure of $SU(2)$ is still $SU(2)\langle s \rangle $ and the diagram \eqref{Real:comm} still commutes.

\bigskip

Moreover, we'd like to mention a different  choice of the Real structure $\widehat{SU(2)}$. In the diagram \eqref{Real:comm}, we  map the rotation  $B_{ \frac{2\pi  }{n}}$  in $\mathrm{SO}(2)$ to the same matrix in $SU(2)$ but map the reflection   to \[s':= \left[ {\begin{array}{cc} 1 & 0 \\ 0 & -1
\end{array}} \right] \in U(2).\]  Note that $A^{-1}s' A \neq s'$, i.e. $s'$ is not a fixed point under the conjugation taking $B_{\theta}$ to $A_{\theta}$.  
We can check, for any $\theta$,  $(s' B_{\theta})^2 = I$.  The action of $s'$ on $\R^4$ can be  defined as \begin{equation}\label{ref:su2:2} (a+bi+cj+dk) \mapsto (a+bi-cj-dk). \end{equation}  Under the reflection \eqref{ref:su2:2}, the north and south poles are also fixed.  On $\R^4$,  the two pairs of points \[(0, 0, 1, 0) \mbox{  and } (0, 0, -1, 0)\] \[ (0, 0, 0, 1) \mbox{   and  } (0, 0, 0, -1)\] are switched by the reflection respectively.   It's straightforwards to check that $(s'B_{\theta})^2$ acts as identity on $\R^4$ for any $\theta$. 
Thus, it's reasonable to take the Real structure to be the subgroup \[ SU(2)\langle s' \rangle \] of $U(2)$ and the projection $\pi$  to be the determinant $\det$.

Since $SU(2)$ is a normal subgroup of $U(2)$,  both Real structures, $SU(2)\langle s\rangle$ and $SU(2)\langle s'\rangle$, split.


\end{example}

\begin{example} \label{Real:G:SU(2)}
    For any finite subgroup $G$ of $SU(2)$, \[ \hat{G}:= (G\langle s\rangle, \det) \]  is the restriction of the Real structure \[ (SU(2)\langle s\rangle, \det) \] of $SU(2)$ to $G$. It  defines a Real structure on $G$.

    Similarly, \[  \hat{G}':= (G\langle s'\rangle, \det) \] defines a Real structure on $G$. 
\end{example}

\begin{remark} We give in Example \ref{Real_str_SU(2)} some reasonable choices of reflection on the representation sphere $S^4$, which all keep the north pole and the south pole fixed.
We didn't find a canonical choice of reflection that switches the north pole and the south pole.

As indicated in \cite[p.215]{MarzantowiczPrieto04},   for $V$ a real vector space equipped with a linear $G$-action, stereographic projection exhibits a $G$-equivariant homeomorphism between the representation sphere $S^V := V_{\mathrm{cpt}}$ (the one-point compactification) and the unit sphere $S(V \oplus \mathbb{R}_{\mathrm{triv}})$ (where the $\mathbb{R}$-summand is equipped with the trivial $G$-action):
  \[
    S^V
    \,\simeq_{{}_G}\,
    S\big(
      V \oplus \mathbb{R}_{\mathrm{triv}}
    \big)
    \,.
  \]

A better choice of reflection on $S(V \oplus \mathbb{R}_{\mathrm{triv}})$ is that sending a point $(v, r)\in S(V \oplus \mathbb{R}_{\mathrm{triv}})$ to $(v, -r)$. The map corresponding to that on $S^V$, which is \[ \begin{cases} v \mapsto \frac{1}{\parallel v \parallel  } v , &\text{ if  }v\neq 0, \infty;
\\ \text{the north pole} \mapsto \text{ the south pole }, &\text{ if  }v= \infty; \\
\text{the south pole} \mapsto \text{ the north pole }, &\text{ if  }v = 0, \end{cases}\] wherer $\parallel  v \parallel $ is the lenghth of the vector. The map preserves angle  but not the length of the vector when it is not $1$, and, especially, it is not linear. 
    
\end{remark}

\begin{remark}
    We'd like to mention that the choice of the reflection in the Real structure is definitely not unique, neither is the choice of the action of it on $S^4$. Though different choices of the Real structure may lead to different $\QR^{\ast}_{G}(S^4)$, different choices of reflection action may lead to little difference. Indeed, in the computation of $\QR^{\ast}_{G}(S^4)$ with $G$ a finite subgroup of $SU(2)$, for most elements $g\in \pi_{0}(G\git G)$, the fixed point space $(S^4)^g$ consists only the north pole and the south pole, where the reflections, those in Example \ref{Real_str_SU(2)}, etc., act trivially. 
    
    In addition, for the identity element $e\in G$, $(S^4)^e = S^4$ is a representation sphere of the group $\Lambda_G(e)$. Thus, by  \cite[Theorem 5.1]{atiyah1968_Bott}, the computation of the corresponding factor $KR^{0}_{\Lambda_{G}(e)}((S^4)^e)$ can be reduced to that of the Real representation ring of $\Lambda_G(e)\cong G\times \mathbb{T}$.
    
\end{remark}

To compute the Real quasi-elliptic cohomology of 4-spheres \begin{equation}
\QR^{\ast}(S^4 \git G)
\cong
\prod_{g \in \pi_0(G \git G)_{-1}} KR^{\ast}_{\Lambda_{G}(g)}((S^4)^g) \times \prod_{g \in \pi_0(G \git G)_{+1}/\Z/2} K^{\ast}_{\Lambda_{G}(g)}((S^4)^g),
\end{equation} acted by a finite subgroup of \[G<SU(2)\cong \mathbb{H},\] we need to find all the fixed points in $ G $ under the involution, i.e. the Real conjugation. 
Below is a conclusion that makes the computation easier.


\begin{proposition} \label{fixedpt_invo}
    If we take the Real structure $\hat{G}'$ on a finite subgroup $G$ of $SU(2)$, for any element $\beta$ in $G$, we have the conclusions below.
    \begin{enumerate}
        \item $\beta$ is a fixed point under the involution $s'$ if and only if $s'\beta^{-1} s'$ is in the conjugacy class of $\beta$ in $G$.

        \item If there is an element in the conjugacy class of $\beta$ which is a unit quaternion and its coefficient of $i$ is zero, then we have $s'\beta^{-1}s' =\beta$ and $\beta$ is a fixed point under the involution. 
    \end{enumerate}
\end{proposition}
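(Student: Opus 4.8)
The plan is to reduce both assertions to the description of the involution on conjugacy classes induced by Real conjugation, and then to a short quaternionic coordinate computation. First I would record how $s'$ acts: since $\det s' = -1$ we have $\pi(s') = -1$, so by the Real conjugation formula $\varsigma\cdot g = \varsigma g^{\pi(\varsigma)}\varsigma^{-1}$ the action of $s'$ on $G$ is $s'\cdot\beta = s'\beta^{-1}(s')^{-1}$, and because $(s')^2 = I$ this is simply $s'\beta^{-1}s'$. Hence the involution on $\pi_0(G\git G)$ appearing in the partition \eqref{eq:conjClassesDecomp} sends the class $[\beta]$ to the class $[s'\beta^{-1}s']$, and $\beta$ is a fixed point of the involution exactly when these two classes coincide.

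For part (1) this is now a direct unwinding: $[\beta]$ is fixed if and only if $[s'\beta^{-1}s'] = [\beta]$, i.e. if and only if $s'\beta^{-1}s'$ is $G$-conjugate to $\beta$. To line this up with the criterion $C^R_{\hat{G}'}(\beta)\setminus C_G(\beta)\neq\varnothing$ recorded just after \eqref{eq:conjClassesDecomp}, I would note that $\hat{G}'\setminus G$ consists of the elements $hs'$ with $h\in G$, each having $\det(hs') = -1$; then $hs'\in C^R_{\hat{G}'}(\beta)$ means $hs'\beta^{-1}(hs')^{-1} = \beta$, which rearranges to $s'\beta^{-1}s' = h^{-1}\beta h$. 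Thus an ungraded-nontrivial element of the Real centralizer exists precisely when $s'\beta^{-1}s'$ is $G$-conjugate to $\beta$, and both implications are immediate.

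For part (2) the crux is to compute $\beta\mapsto s'\beta^{-1}s'$ in quaternionic coordinates. Writing $\beta = a+bi+cj+dk$ as a unit quaternion, its inverse is the quaternionic conjugate $\beta^{-1} = a-bi-cj-dk$, which negates the $i,j,k$ coefficients; and conjugation by the matrix $s' = \mathrm{diag}(1,-1)$ realizes on $SU(2)$ exactly the reflection \eqref{ref:su2:2}, which negates the $j,k$ coefficients. Composing the two, I expect $s'\beta^{-1}s' = a-bi+cj+dk$, so Real conjugation by $s'$ negates only the $i$-coefficient. Consequently a unit quaternion with vanishing $i$-coefficient is fixed by this map: if $\beta$ itself has zero $i$-coefficient then $s'\beta^{-1}s' = \beta$ outright, while if only a conjugate representative $\gamma$ of $[\beta]$ has zero $i$-coefficient then $s'\gamma^{-1}s' = \gamma$, and part (1) applied to $\gamma$ shows that $[\beta]=[\gamma]$ is fixed under the involution.

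The only genuine obstacle is the careful bookkeeping of conventions in this last computation: one must use consistently the matrix–quaternion dictionary recalled at the start of the section, the fact that the inverse of a unit quaternion is its quaternionic conjugate, and the relation $(s')^2 = I$, so that the three sign-reflections compose to the single flip of the $i$-coefficient. It is worth flagging that for $SU(2)$ itself the involution on conjugacy classes is trivial, since $\beta$ and $s'\beta^{-1}s'$ share the real part $a$ and hence the trace; the content of (2) is therefore that passing to a representative in the $jk$-plane certifies fixedness already at the level of the \emph{finite} group $G$, where $G$-conjugacy is the more restrictive condition.
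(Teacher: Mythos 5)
Your proof is correct and follows essentially the same route as the paper: part (1) is the unwinding of the criterion $C^R_{\hat{G}'}(\beta)\setminus C_G(\beta)\neq\varnothing$ into the existence of some $y\in G$ with $s'\beta^{-1}s' = y^{-1}\beta y$, and part (2) comes down to verifying that $\beta\mapsto s'\beta^{-1}s'$ negates exactly the $i$-coefficient of the unit quaternion $\beta$, which is the same computation the paper performs by solving a matrix equation. Your compositional presentation (quaternionic inverse negates $i,j,k$; conjugation by $s'$ negates $j,k$) and your explicit treatment of the case where only a conjugate representative lies in the $jk$-plane are minor improvements in bookkeeping rather than a different argument.
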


\begin{proof}
A given element $\beta \in G$ is a fixed point under the involution if and only if the set $C_G^R(\beta) \setminus C_G(\beta)$ is nonempty, i.e.  there is an element $x= s'y$ for some $y\in G$   satisfying 
\[x\beta x^{-1}= \beta^{-1}. \] So we get the first conclusion.  



Since $\beta$ is an element in $SU(2)$, thus, it has a quaternion representation $\beta = a +bi+cj+dk$. In (ii), we discuss a very special case that $s'\beta s'= \beta^{-1}$ exactly.
We start the computation below. 
\[ s' \beta s'^{-1}= 
    \left[ {\begin{array}{cc} 1 & 0 \\ 0 & -1
\end{array}} \right]  \left[ {\begin{array}{cc} a+bi & c+di \\ -c+di & a-bi
\end{array}} \right]  \left[ {\begin{array}{cc} 1 & 0 \\ 0 & -1
\end{array}} \right]  =  \left[ {\begin{array}{cc} a+bi & -c-di \\ c-di & a-bi
\end{array}} \right] 
\] The right hand side should be the inverse of $\beta$. So we establish the equation..
\[\left[ {\begin{array}{cc} a+bi & c+di \\ -c+di & a-bi
\end{array}} \right] \left[ {\begin{array}{cc} a+bi & -c-di \\ c-di & a-bi
\end{array}} \right] = \left[ {\begin{array}{cc} 1 & 0 \\ 0 & 1
\end{array}} \right]   \] Solving the equation, we get \[ \begin{cases}
    b&=0 \\
    a^2+c^2+d^2 &= 1
\end{cases}\] i.e. \[s'\beta s' = \beta^{-1}\] if and only if $\beta = a+bi+cj+dk$ is a unit quaternion with $b=0$.

\end{proof}

Similarly, we have the conclusion. 

\begin{proposition} \label{fixedpt_invo2}
    If we take the Real structure $\hat{G}$ on a finite subgroup $G$ of $SU(2)$, for any element $\beta$ in $G$, we have the conclusions below.
    \begin{enumerate}
        \item $\beta$ is a fixed point under the involution $s$ if and only if $s\beta^{-1} s$ is in the conjugacy class of $\beta$ in $G$.

        \item If there is an element in the conjugacy class of $\beta$ which is a unit quaternion and its coefficient of $k$ is zero, then we have $s\beta^{-1}s =\beta$ and $\beta$ is a fixed point under the involution.
    \end{enumerate}
\end{proposition}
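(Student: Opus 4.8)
The plan is to follow the proof of Proposition \ref{fixedpt_invo} essentially line by line, replacing the reflection $s'$ by $s=\left[{\begin{array}{cc} 0 & 1 \\ 1 & 0\end{array}}\right]$ throughout and tracking how the change of reflection alters the resulting arithmetic condition. For part (i), I would argue exactly as before: $\beta$ is fixed under the involution precisely when $C^R_{\hat{G}}(\beta)\setminus C_{G}(\beta)\neq\varnothing$, and since $\hat{G}=G\langle s\rangle$, every element outside $G$ has the form $x=sy$ with $y\in G$. The Real-centralizer condition $x\beta^{-1}x^{-1}=\beta$ rewrites as $x\beta x^{-1}=\beta^{-1}$, and using $s^{-1}=s$ this reads $s(y\beta y^{-1})s=\beta^{-1}$, i.e. $y\beta y^{-1}=s\beta^{-1}s$. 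Hence such an $x$ exists if and only if $s\beta^{-1}s$ lies in the conjugacy class of $\beta$, which is the assertion of part (i).

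For part (ii), the one genuine computation is to evaluate the $s$-conjugate of $\beta$ through the quaternion–matrix correspondence. Writing $\beta=a+bi+cj+dk$ and multiplying out $s\,\beta\,s$ with $s=\left[{\begin{array}{cc}0&1\\1&0\end{array}}\right]$, I expect to obtain the matrix of the quaternion $a-bi-cj+dk$; in other words, $s$-conjugation negates the $i$- and $j$-components and fixes the real part and the $k$-component. This is the structural difference from $s'$, whose conjugation instead fixed the $i$-component, which is precisely why the vanishing condition migrates from the coefficient of $i$ to the coefficient of $k$. Since $\beta$ is a unit quaternion we have $\beta^{-1}=a-bi-cj-dk$, so imposing $s\beta s=\beta^{-1}$ equates the $k$-components as $d=-d$, forcing $d=0$. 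Taking the representative of the conjugacy class with vanishing $k$-coefficient then gives $s\beta^{-1}s=\beta$, and part (i) immediately yields that $\beta$ is a fixed point.

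I anticipate no serious obstacle, since the statement is a mirror image of Proposition \ref{fixedpt_invo}; the only point requiring care is the bookkeeping in the final step. If, as in the previous proof, one instead sets up the matrix identity $\beta\cdot(s\beta s)=I$ and reads off entries, the $(1,1)$-entry produces $a^2+b^2+c^2-d^2=1$ together with $cd=0$, and it is only after combining this with the unit-norm relation $a^2+b^2+c^2+d^2=1$ that $d=0$ is isolated. I would therefore make sure to invoke $|\beta|=1$ explicitly at that step, exactly as the analogous computation forced $b=0$ in the case of $s'$.
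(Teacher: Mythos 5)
Your proposal is correct and is exactly the argument the paper intends: the paper gives no separate proof, stating only that it is analogous to Proposition \ref{fixedpt_invo}, and you carry out that analogy faithfully, with the key computation $s\beta s^{-1} = a - bi - cj + dk$ correctly showing why the vanishing condition moves from the $i$-coefficient to the $k$-coefficient.
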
 The proof is analogous to that of Proposition \ref{fixedpt_invo}.

\bigskip

Next we will compute $\QR^{\ast} (S^4\git G)$ with $G$ a finite subgroup of $SU(2)$ one by one. 
Before that  we recall the  classification of the finite subgroups of $\mathrm{Spin}(3)\cong SU(2)$. There are many references for the classification, \cite[Chapter XIII]{dickson2014algebraic}, \cite{Stekolshchik_Coxeter_Mckay}, \cite{nlab:finite_rotation_group} etc. The finite subgroups of $SU(2)$ are classified as: \begin{itemize}\item 
the cyclic group of order $n$\[G_n:= \{\left[ {\begin{array}{cc} \cos\frac{2\pi k}{n} & \sin\frac{2\pi k}{n} \\  -\sin\frac{2\pi k}{n} & \cos\frac{2\pi k}{n}
\end{array}} \right] \mid k\in\Z \};\]
\item  the dicyclic group of order $4n$ \[2D_{2n}:=\langle A_{\frac{2\pi}{2n}}, \left[ {\begin{array}{cc} 0 & 1 \\  -1 & 0
\end{array}} \right] \rangle; \]  
\item  the binary tetrahedral group $E_6$;
\item the binary octahedral group $E_7$; 
\item the binary icosahedral group $E_8$; \end{itemize}
where $n$
is any positive integer.

\begin{example} \label{G_nDihedralS^4} 

In this example we compute $\QR^{\ast} (S^4\git G_n)$ where $G_n$ is the finite cyclic subgroup
 \[\{\left[ {\begin{array}{cc} \cos\frac{2\pi k}{n} & \sin\frac{2\pi k}{n} \\  -\sin\frac{2\pi k}{n} & \cos\frac{2\pi k}{n}
\end{array}} \right] \mid k\in\Z \} < SU(2).\] 
We take the Real structure $\hat{G}'_n$ as defined in Example \ref{Real:G:SU(2)}, i.e. the group below together with the determinant map $\det$
\[ \langle G_n, \left[ {\begin{array}{cc} 1 & 0 \\  0 & -1 
\end{array}} \right] \rangle. \] It is isomorphic to the dihedral group $D_{2n}$. 
The involution on $\pi_0(G_n\git G_n)$ is trivial.

Thus, by \cite[Example 3.7]{huanyoung2022}, we get directly that \[\QR^{\ast}(S^4\git G_n) \cong \prod_{m=0}^{n-1} KR^{\ast}_{\Lambda_{G_n} (B_{ \frac{2\pi  m}{n}})} ((S^4)^{B_{ \frac{2\pi m }{n}}})
\cong KR^{\ast}_{G_n\times \mathbb{T}} (S^4)\oplus \prod_{m=1}^{n-1} KR^{\ast}_{\Lambda_{G_n} (B_{ \frac{2\pi m }{n}})} (S^0) \] where  $S^0$ consists of the fixed points, i.e. the south pole and the north pole of $S^4$.
Thus, \[ \prod_{m=1}^{n-1} KR^{\ast}_{\Lambda_{G_n} (B_{ \frac{2\pi m }{n}})} (S^0)  \cong  \prod_{m=1}^{n-1} KR^{\ast}_{\Lambda_{G_n} (B_{ \frac{2\pi  m}{n}})} (\pt) \oplus KR^{\ast}_{\Lambda_{G_n} (B_{ \frac{2\pi m }{n}})} (\pt)\] and by \cite[Example 3.7]{huanyoung2022}, the right hand side is isomorphic to  \[  \prod_{m=1}^{n-1} KR^{\ast}(\pt)[x, q^{\pm}] \slash \langle x^n-q^m \rangle \oplus  KR^{\ast}(\pt)[x, q^{\pm}] \slash \langle x^n-q^m \rangle. \]

In addition, by \cite[Theorem 5.1]{atiyah1968_Bott}, 
\begin{align*}  KR^{\ast}_{G_n\times \mathbb{T}} (S^4) & \cong KR^{\ast}_{G_n\times \mathbb{T}}(S^0)  \cong KR^{\ast}_{G_n\times \mathbb{T}}(\pt) \oplus
KR_{G_n\times \mathbb{T}}(\pt) \\ &\cong KR^{\ast}(\pt)[x, q^{\pm}] \slash \langle x^n-1  \rangle \oplus  KR^{\ast}(\pt)[x, q^{\pm}] \slash \langle x^n-1  \rangle. \end{align*}

In conclusion, \[\QR^{\ast}(S^4\git G_n) \cong   \prod_{m=0}^{n-1} KR^{\ast}(\pt)[x, q^{\pm}] \slash \langle x^n-q^m \rangle \oplus  KR^{\ast}(\pt)[x, q^{\pm}] \slash \langle x^n-q^m \rangle.\]
\end{example}

\begin{example} \label{2D2nS^4}

In this example we compute $\QR^{\ast} (S^4\git 2D_{2n})$ where  $2D_{2n}$ is the dicyclic group \[\langle A_{\frac{2\pi}{2n}}, \tau \rangle,\]
where $\tau$ is the reflection \[\left[ {\begin{array}{cc} 0 & -1 \\  1 & 0
\end{array}} \right] .\]
We take the Real structure $\hat{2D_{2n}}$ on $2D_{2n}$, as defined in Example \ref{Real:G:SU(2)}.  

In $2D_{2n}$  there are $n+3$ conjugacy classes. They are:
\begin{enumerate}
\item $\{I\}$, \item $\{-I\}$, \item 
$\{A_{\frac{\pi}{n}}, A^{-1}_{\frac{\pi}{n}}\}$, 
$\{A^2_{\frac{\pi}{n}}, A^{-2}_{\frac{\pi}{n}}\}$, $\cdots$, $\{A^{n-1}_{\frac{\pi}{n}}, A^{-(n-1)}_{\frac{\pi }{n}}\}$,  
\item 
$\{\tau, \tau A^2_{\frac{\pi}{n}}, \tau A^4_{\frac{\pi}{n}}\cdots \tau A^{2n-2}_{\frac{\pi}{n}}\} $, 
\item $\{\tau A_{\frac{\pi}{n}}, \tau A^3_{\frac{\pi}{n}}, \cdots \tau A^{2n-1}_{\frac{\pi}{n}} \}$, \end{enumerate} where the first two form the centre of the group. 

By Proposition \ref{fixedpt_invo2}, all the conjugacy classes  are fixed points under the reflection $s$.
Next we compute  below the factor in $\QR^{\ast}(S^4 \git {2D_{2n}})$ corresponding to each conjugacy class below.

\begin{enumerate}
    \item First we consider the Real conjugacy class represented by $I$.
The centralizer $C_{2D_{2n}}(I)= 2D_{2n}$ and the Real centralizer is the same \[C^R_{2\hat{D}_{2n}}(I)=2\hat{D}_{2n}.\] The group $\Lambda^R_{2\hat{D}_{2n}}(I) = \mathbb{R}\rtimes_{\pi} 2\hat{D}_{2n} /\langle(-1, I) \rangle.$ 
By \cite[Theorem 5.1]{atiyah1968_Bott}, 
\begin{align*}
    KR^{\ast}_{\Lambda_{2D_{2n}}(I)}((S^4)^I) &\cong KR^{\ast}_{\mathbb{T}\times  2D_{2n}}(S^4)
    \cong KR^{\ast}_{\mathbb{T}\times  2D_{2n}}(S^0) \\
    &\cong KR^{\ast}_{\mathbb{T}\times 2D_{2n}}(\pt)\oplus KR^{\ast}_{\mathbb{T}\times 2D_{2n}}(\pt) \\
    &\cong KR^{\ast}_{2D_{2n}}(\pt)[q^{\pm}]\oplus KR^{\ast}_{2D_{2n}}(\pt)[q^{\pm}].
\end{align*} Note that $\Lambda^R_{2\hat{D}_{2n}}(I)$ is a Real structure on $\mathbb{T}\times 2D_{2n}$.

  \item  Then we consider the Real conjugacy class represented by $-I$. 
In this case, the centralizer $C_{2D_{2n}}(- I)= 2D_{2n}$ and the Real centralizer \[C^R_{2\hat{D}_{2n}}(- I)=2\hat{D}_{2n}.\]
We have the Real central extension \[ 1\longrightarrow \Z/2 \longrightarrow \Lambda^R_{2\hat{D}_{2n}}(-I) \longrightarrow \Lambda^R_{\hat{D}_{2n}}(I)\longrightarrow 1\] 

By Corollary   \ref{RM:Z2},
    \begin{align*}
    KR^{\ast}_{\Lambda_{2D_{2n}}(-I)}((S^4)^{-I}) &\cong KR^{\ast}_{\Lambda_{2D_{2n}}(-I)} (S^0) \\
     &\cong \prod^2_1 KR^{\ast}_{\Lambda_{2D_{2n}}(-I)} (\pt)\\ 
    &\cong \prod_{1}^{2} KR^{\ast }_{\Lambda_{D_{2n}}(I)}(\pt)\oplus KR^{\ast + \hat{\nu}_{\Lambda^R_{2\hat{D}_{2n}}(-I), sign}}_{\Lambda_{D_{2n}}(I)}(\pt) \\
    & \cong \prod_{1}^{2} KR^{\ast }_{D_{2n}}(\pt)[q^{\pm}]\oplus KR^{\ast + \hat{\nu}_{\Lambda^R_{2\hat{D}_{2n}}(-I), sign}}_{D_{2n}}(\pt)[q^{\pm}],
\end{align*} where $sign$ is the sign representation of $\Z/2$.

\item Then we compute the factor in $\QR^{\ast}(S^4 \git {2D_{2n}})$ corresponding to $A_{\frac{2\pi m}{2n}}$ which is not $\pm I$.

The centralizer $C_{2D_{2n}}(A_{\frac{2\pi m}{2n}})$ is the cyclic group $\langle A_{\frac{2\pi}{2n}}\rangle \cong \Z/(2n)$. The Real centralizer \[C^R_{2\hat{D}_{2n}} (A_{\frac{2\pi m }{2n}} ) =D_{4n}\] is the dihedral group of order $4n$. 
In this case, by \cite[Example 3.7]{huanyoung2022}, \begin{align*}
KR^{\ast}_{\Lambda_{2D_{2n}}(A_{\frac{2\pi m}{2n}} )}(S^4)^{A_{\frac{2\pi m}{2n}} } &\cong KR^{\ast}_{\Lambda_{2D_{2n}}(A_{\frac{2\pi m}{2n}} )}(S^0) \cong KR^{\ast}_{\Lambda_{2D_{2n}}(A_{\frac{2\pi m}{2n}} )}(\pt) \oplus KR^{\ast}_{\Lambda_{2D_{2n}}(A_{\frac{2\pi m }{2n}} )}(\pt) \\
& \cong KR^{\ast}(\pt)[x, q^{\pm}] /\langle x^{2n} - q^{2m}\rangle \oplus KR^{\ast}(\pt)[x, q^{\pm}] /\langle x^{2n} - q^{2m}\rangle.
\end{align*}

\item Then we compute the factor corresponding to the conjugacy class represented by $\tau$. The centralizer  $C_{2D_{2n}}(\tau)= \langle \tau \rangle \cong \Z/4$ and the Real centralizer \[C^R_{2\hat{D}_{2n}}(\tau) = \langle \tau, s  \rangle \cong D_4. \] Thus, \begin{align*}
    KR^{\ast}_{\Lambda_{2D_{2n}}(\tau)} (S^4)^{\tau}&\cong KR^{\ast}_{\Lambda_{\Z/4} (1)}(S^0)  \\
    &\cong RR\Lambda_{\Z/4}(1) \oplus RR\Lambda_{\Z/4}(1) \\
    &\cong KR^{\ast}(\pt)[x, q^{\pm}]/\langle x^4- q\rangle \oplus KR^{\ast}(\pt)[x, q^{\pm}]/\langle x^4- q\rangle .
\end{align*}

    \item 
    For the conjugacy class represented by $\tau A_{\frac{\pi}{n}}$, the centralizer $C_{2D_{2n}}(\tau A_{\frac{\pi}{n}})= \langle \tau A_{\frac{\pi}{n}} \rangle 
\cong \Z\slash 4$ and the Real centralizer  $C^R_{2\hat{D}_{2n}}(\tau A_{\frac{\pi}{n}})= \langle \tau A_{\frac{\pi}{n}}, s\tau \rangle 
\cong D_4$. 
Then, the factor corresponding to $\tau A_{\frac{\pi}{n}} $ is \begin{align*}
    KR^{\ast}_{\Lambda_{2D_{2n}}(\tau A_{\frac{\pi}{n}})} (S^4)^{\tau A_{\frac{\pi}{n}}}&\cong KR^{\ast}_{\Lambda_{\Z/4} (1)}(S^0)  \\
    &\cong RR\Lambda_{\Z/4}(1) \oplus RR\Lambda_{\Z/4}(1) \\
    &\cong KR^{\ast}(\pt)[x, q^{\pm}]/\langle x^4- q\rangle \oplus KR^{\ast}(\pt) [x, q^{\pm}]/\langle x^4- q\rangle .
\end{align*}
\end{enumerate}
Thus, in conclusion,
\begin{align*}
    \QR^{\ast} (S^4 \git 2D_{2n}) = &KR^{\ast}_{\Lambda_{2D_{2n}}(I)}((S^4)^I) \times KR^{\ast}_{\Lambda_{2D_{2n}}(-I)}((S^4)^{-I})  \\
    &\times \prod_{m=1}^{n-1}KR^{\ast}_{\Lambda_{2D_{2n}}(A^m_{\frac{\pi }{n}})} ((S^4)^{A^m_{\frac{\pi }{n}}}) \\
    &\times KR^{\ast}_{\Lambda_{2D_{2n}}(\tau)}((S^4)^{\tau}) 
    \times  KR^{\ast}_{\Lambda_{2D_{2n}}(\tau  A_{\frac{\pi}{n}})}((S^4)^{\tau  A_{\frac{2\pi}{2n}}}) \\
    \cong  &KR^{\ast}_{2D_{2n}}(\pt)[q^{\pm}]\oplus KR^{\ast}_{2D_{2n}}(\pt)[q^{\pm}] \\ 
    & \times \prod_{1}^{2} KR^{\ast }_{D_{2n}}(\pt)[q^{\pm}]\oplus KR^{\ast + \hat{\nu}_{\Lambda^R_{2\hat{D}_{2n}}(-I), sign}}_{D_{2n}}(\pt)[q^{\pm}]\\ 
    &\times \prod_{m=1}^{n-1}  KR^{\ast}(\pt)[x, q^{\pm}]/\langle x^{2n}- q^{2m}\rangle \oplus KR^{\ast}(\pt)[x, q^{\pm}]/\langle x^{2n}- q^{2m}\rangle \\ &\times  KR^{\ast}(\pt)[x, q^{\pm}]/\langle x^4-q\rangle \oplus KR^{\ast}(\pt)[x, q^{\pm}]/ \langle x^4-q\rangle \\
    & \times KR^{\ast}(\pt)[x, q^{\pm}]/\langle x^4-q\rangle \oplus KR^{\ast}(\pt)[x, q^{\pm}]/\langle x^4-q\rangle,
\end{align*}  where $sign$ is the sign representation of $\Z/2$.


\end{example}

\begin{example} \label{E6_S4_Real}

In this example we compute $ \QR^{\ast}(S^4\git E_6) $ where $E_6$
is  the binary tetrahedral group $E_6$.
We take the Real structure $\hat{E_6'}$ on it, i.e. \[\hat{E_6'}=  E_6\langle s' \rangle.\] 
The quaternion representation of $E_6$ is given explicitly at \cite{Phillips_Tetrahedral} and  \cite{QR:BiTetraGrp}.

We can compute the conjugacy classes in $E_6$ explicitly. A multiplication table for the binary tetrahedral group is given here \cite{multi:BiTetraGrp}. For the convenience of the readers, we apply the same symbols of the elements as those in \cite{multi:BiTetraGrp} and \cite{QR:BiTetraGrp}.  A list of representatives are given in Figure \ref{E6:conj}. This list can be obtained by direct computation. In addition, by Proposition \ref{fixedpt_invo}, an element in $E_6$  represents a fixed point in $\pi_0(E_6\git_R \hat{E_6})$ if and only if it is $\pm I$, $\pm i$, $\pm j$ or $\pm k$. 
Note that,  for $E_6$, if we take the Real structure $\hat{E_6}$, we will get the same set of fixed points under the reflection.

\begin{figure} \begin{center} 
\begin{tabular}{|c | c | c | c|} 
 \hline 
 A representative of   & Conjugacy class & Order & Fixed point under \\
 the conjugacy class  &&& the involution?
 \\ \hline
$1$ & $ \{ 1 \}$ & $1$ & Y \\
$-1$ & $\{-1\}$ & $2$ & Y \\
$j$ & $\{\pm i, \pm j,  \pm k\} $ & $4$ & Y \\
$a$ & $\{a, b, c, d\} $ & $6$  & N \\
$-a$ & $\{-a, -b, -c, -d\}$ & $3$  & N \\
$a^2$ & $\{a^2, b^2, c^2, d^2\}$ & $3$  & N \\
$-a^2$ & $ \{-a^2, -b^2, -c^2, -d^2\} $  & $6$  & N \\
 \hline
\end{tabular} \caption{Conjugacy classes of $E_6$}\label{E6:conj}
\end{center} \end{figure}

Below we compute the factors of $\QR_{E_6}(S^4)$ corresponding to each conjugacy class respectively.

\begin{enumerate}
\item For the conjugacy class represented by $I$, the Real centralizer $C^R_{\hat{E_6'}}(I) = \hat{E_6'}$. By \cite[Theorem 5.1]{atiyah1968_Bott}, we have
\begin{align*}
KR^{\ast }_{\Lambda_{E_6}(I)} ((S^4)^I) &\cong KR^{\ast}_{E_6\times \mathbb{T}}(S^4) \cong KR^{\ast}_{E_6\times \mathbb{T}}(S^0) \\
&\cong KR^{\ast}_{E_6\times \mathbb{T}}(\pt) \oplus KR^{\ast}_{E_6\times \mathbb{T}}(\pt)  \\
&\cong KR^{\ast}_{E_6}(\pt) [q^{\pm}] \oplus KR^{\ast}_{E_6}(\pt) [q^{\pm}]
\end{align*}

\item For the conjugacy class represented by $-I$, we have $(S^4)^{-I} = S^0$. 

Let $\hat{T'_6}$ denote the group $T_6\langle  s' \rangle$. We have the short exact sequence \[1\rightarrow \Z/2 \rightarrow \hat{T_6'} \rightarrow T_6\rightarrow 1\]
Especially, we have the commutative diagram below:
\begin{equation}
\xymatrix{ 0\ar[r] &\mathbb{Z}/2 \ar[r] \ar@{=}[d] &E_6 \ar[r]^{\pi} \ar@{^{(}->}[d] &T_6\ar[r] \ar@{^{(}->}[d] &0 \\ 
0\ar[r] &\mathbb{Z}/2 \ar[r] &\hat{E'_6} \ar[r]^{\pi} & \hat{T_6'}\ar[r] &0 }
\end{equation}

Note that we have the short exact sequence
\[0\rightarrow \Z/2 \longrightarrow \Lambda^R_{\hat{E_6'}}(-I) \buildrel{[(\pi, id), id ]}\over\longrightarrow  \Lambda^R_{\hat{T_6'}}(I)\longrightarrow 0.  \] By   Corollary \ref{RM:Z2}, 
\begin{align*}
KR^{\ast}_{\Lambda_{E_6}(-I)} ((S^4)^{-I}) &\cong KR^{\ast}_{\Lambda_{E_6}(-I)} (S^0) \\
&\cong   \prod_1^2KR^{\ast}_{\Lambda_{E_6}(-I)}(\pt)\\
 &\cong
\prod_{1}^{2} KR^{\ast}_{T_{6}\times \mathbb{T}}(\pt)\oplus KR^{\ast + \hat{\nu}_{\Lambda^R_{\hat{E_6'}(-I)}, sign}}_{T_{6}\times \mathbb{T}}(\pt) \\
&\cong \prod_{1}^{2} KR^{\ast}_{T_{6}}(\pt)[q^{\pm}]\oplus KR^{\ast + \hat{\nu}_{\Lambda^R_{\hat{E_6'}(-I)}, sign}}_{T_{6}}(\pt) [q^{\pm}], 
\end{align*} where $sign$ is the sign representation of $\Z/2$.

\item For the conjugacy class represented by $j$, $(S^4)^j=S^0$.
The centralizer $C_{E_6}(j) = \langle j \rangle \cong \Z/4$ and the Real centralizer 
\[ C^R_{\hat{E'_6}}(j) =C_{E_6}(j)\langle s' \rangle \cong D_4.\] Thus,
\begin{align*}
KR^{\ast}_{\Lambda_{E_6}(j)}((S^4)^j) &\cong KR^{\ast}_{\Lambda_{\mathbb{Z}/4}(1)}(S^0) \cong KR^{\ast}_{\Lambda_{\mathbb{Z}/4}(1)}(\pt) \oplus KR^{\ast}_{\Lambda_{\mathbb{Z}/4}(1)}(\pt) 
\\ &\cong KR^{\ast}(\pt)[x, q^{\pm}]/\langle x^4- q\rangle \oplus KR^{\ast}(\pt)[x, q^{\pm}]/\langle x^4- q\rangle.
\end{align*}

\item For the conjugacy class represented by $a$, we have 
\begin{align*}
K_{\Lambda_{E_6}(a)}((S^4)^a) &\cong K_{\Lambda_{\mathbb{Z}/6}(1)}(S^0) \cong R(\Lambda_{\mathbb{Z}/6}(1)) \oplus R(\Lambda_{\mathbb{Z}/6}(1)) 
\\ &\cong \Z[x, q^{\pm}]/\langle x^6- q\rangle \oplus \Z[x, q^{\pm}]/\langle x^6- q\rangle.
\end{align*}

\item For the conjugacy class represented by $-a$, we have \begin{align*}
K_{\Lambda_{E_6}(-a)} ((S^4)^{-a}) &\cong K_{\Lambda_{\Z/6}(4)} (S^0) \cong R(\Lambda_{\Z/6}(4))\oplus R (\Lambda_{\Z/6}(4)) \\
&\cong \Z[x, q^{\pm}] /\langle x^6- q^4\rangle \oplus \Z[x, q^{\pm}] /\langle x^6- q^4\rangle.    
\end{align*}

\item For the conjugacy class represented by $a^2$, we have \begin{align*}
K_{\Lambda_{E_6}(a^2)} ((S^4)^{a^2}) &\cong K_{\Lambda_{\Z/6}(2)} (S^0) \cong R(\Lambda_{\Z/6}(2))\oplus R (\Lambda_{\Z/6}(2)) \\
&\cong \Z[x, q^{\pm}] /\langle x^6- q^2\rangle \oplus \Z[x, q^{\pm}] /\langle x^6- q^2\rangle    
\end{align*}

\item For the conjugacy class represented by $-a^2$, we have 
\begin{align*}
K_{\Lambda_{E_6}(-a^2)}((S^4)^{-a^2}) &\cong K_{\Lambda_{\mathbb{Z}/6}(5)}(S^0) \cong  R(\Lambda_{\mathbb{Z}/6}(5)) \oplus R(\Lambda_{\mathbb{Z}/6}(5))
\\ &\cong \Z[x, q^{\pm}]/\langle x^6- q^5\rangle \oplus \Z[x, q^{\pm}]/\langle x^6- q^5\rangle .  
\end{align*}

\end{enumerate}

Thus, in conclusion, \begin{align*}
    \QR^{\ast}(S^4 \git E_6) = &KR^{\ast}_{\Lambda_{E_6}(1)} ((S^4)^1) \times KR^{\ast}_{\Lambda_{E_6}(-1)} ((S^4)^{-1}) \times KR^{\ast}_{\Lambda_{E_6}(j)}((S^4)^j) \\
    & \times K^{\ast}_{\Lambda_{E_6}(a)}((S^4)^a) \times K^{\ast}_{\Lambda_{E_6}(-a)} ((S^4)^{-a}) 
       \times K^{\ast}_{\Lambda_{E_6}(a^2)} ((S^4)^{a^2}) \\ &\times K^{\ast}_{\Lambda_{E_6}(-a^2)}((S^4)^{-a^2})
\\
\cong 
   & KR^{\ast}_{E_6}(\pt) [q^{\pm}] \oplus KR^{\ast}_{E_6}(\pt) [q^{\pm}] \\
  &\times \prod_{1}^{2} KR^{\ast}_{T_{6}}(\pt)[q^{\pm}]\oplus KR^{\ast + \hat{\nu}_{\Lambda^R_{\hat{E_6'}(-I)}, sign}}_{T_{6}}(\pt) [q^{\pm}]
   \\ & \times   KR^{\ast}(\pt)[x, q^{\pm}]/\langle x^4- q\rangle \oplus  KR^{\ast}(\pt)[x, q^{\pm}]/\langle x^4- q\rangle \\
   & \times K^{\ast}(\pt)[x, q^{\pm}]/\langle x^6- q\rangle \oplus K^{\ast}(\pt)[x, q^{\pm}]/\langle x^6- q\rangle \\
    &\times  K^{\ast}(\pt)[x, q^{\pm}] /\langle x^6- q^4\rangle \oplus K^{\ast}(\pt)[x, q^{\pm}] /\langle x^6- q^4\rangle \\
    &\times K^{\ast}(\pt)[x, q^{\pm}] /\langle x^6- q^2\rangle \oplus K^{\ast}(\pt)[x, q^{\pm}] /\langle x^6- q^2\rangle\\
&    \times K^{\ast}(\pt)[x, q^{\pm}]/\langle x^6- q^5\rangle \oplus K^{\ast}(\pt)[x, q^{\pm}]/\langle x^6- q^5\rangle.
\end{align*} where $sign$ is the sign representation of $\Z/2$.

\end{example}

\begin{example} \label{E7_S4_Real}
In this example we compute $\QR^{\ast}(S^4 \git E_7)$ where $E_7$ is the  binary octahedral
group. We take the Real structure $\hat{E_7'}$ on it, i.e. $E_7\langle s'\rangle$.

A presentation of $E_7$ is given as  
\[E_7= \langle \theta, t \mid r^2=\theta^3= t^4 = r\theta t = -1\rangle.  \]
We can get immediately that $r=\theta t$. Equivalently, there is a quaternion presentation of $E_7$ given by the embedding  \[E_7 \rightarrow  \mathbb{H}\] 
 sending $\theta$ to
$\frac{1}{2}(1+i+j+k)$,  $t$ to $\frac{1}{\sqrt{2}}(1+i)$, and $r$ to $\frac{1}{\sqrt{2}} (i+j)$. 

By \cite{McKay1980} and direct computation, 
we get Figure \ref{E7:conj:c:fps}, which provides a list of the representatives of the conjugacy classes of $E_7$, the centralizers of each representative, and the corresponding fixed point spaces.

\begin{figure}
\begin{center}
\begin{tabular}{|c | c | c | c| c|} 
 \hline Representatives $\beta$  &Centralizers  & Conjugacy class & Fixed points under & $(S^4)^{\beta}$  \\
 of Conjugacy classes & $C_{E_7}(\beta)$ & & the involution? & 
 \\ \hline
$1$ & $ E_7$  & $\{ 1 \} $ &Y & $S^4$ \\
$-1$ & $E_7$ & \{ -1\} &Y & $ S^0$  \\
$j= \theta t^2\theta^{-1}$ & $\langle \theta t \theta^{-1} \rangle \cong \mathbb{Z}/8$ & $\{\pm i, \pm j, \pm k\}$ & Y  & $S^0$  \\
$\theta$ & $\langle \theta \rangle \cong \mathbb{Z}/6$ & $\{ \frac{(1 \pm i \pm j \pm k)}{2} \}$ &Y & $S^0$\\
$-\theta  = \theta^4$ & $\langle \theta \rangle \cong \mathbb{Z}/6$ & $\{ \frac{(-1 \pm i \pm j \pm k)}{2} \} $ & Y & $S^0$ \\
$r$ & $\langle r\rangle \cong \mathbb{Z}/4$ &$ \{ \frac{1}{\sqrt{2}} (\pm i \pm j), \frac{1}{\sqrt{2}} (\pm i\pm k ),$ &Y & $S^0$ \\
&&$ \frac{1}{\sqrt{2}} (\pm j \pm k)\}$ &&\\
$t$ & $\langle t\rangle \cong \mathbb{Z}/8$ & $\{\frac{ 1\pm i}{\sqrt{2}},  \frac{ 1\pm j}{\sqrt{2}}, \frac{  1\pm k}{\sqrt{2}}\}$ &Y & $S^0$ \\
$-t = t^5$ & $\langle t\rangle \cong \mathbb{Z}/8$ & $ \{\frac{ - 1\pm i}{\sqrt{2}},  \frac{ - 1\pm j}{\sqrt{2}}, \frac{ -  1\pm k}{\sqrt{2}}\}$& Y  & $S^0$ \\
 \hline
\end{tabular} \caption{Conjugacy classes, centralizers and fixed point spaces} \label{E7:conj:c:fps}
\end{center} \end{figure}

Below we give the factor of $\QR^{\ast}(S^4 \git E_7)$ corresponding to each conjugacy class.
\begin{enumerate}
    \item For the conjugacy class represented by $I$, the Real centralizer $C^R_{\hat{E_7'}}(I) = \hat{E_7'}$. The factor corresponding to $I$
    \begin{align*}KR^{\ast}_{\Lambda_{E_7}(I)}((S^4)^I) &\cong KR^{\ast}_{E_7\times \mathbb{T}} (S^4) \cong KR^{\ast}_{E_7\times \mathbb{T}} (S^0)  \\
    &\cong KR^{\ast}_{E_7\times \mathbb{T}}(\pt) \oplus KR^{\ast}_{E_7\times \mathbb{T}}(\pt) \cong KR^{\ast}_{E_7}(\pt)[q^{\pm}]
    \oplus  KR^{\ast}_{E_7}(\pt)[q^{\pm}].\end{align*}

\item For the conjugacy class represented by $-I$, the Real centralizer $C^R_{\hat{E_7'}}(-I)= \hat{E_7'}$. 
Let  $T_7$ denote the chiral octahedral group and $\hat{T_7'}$ the Real structure $T_7\langle s'\rangle $.
And we have the commutative diagram 
\begin{equation}
\xymatrix{ 0\ar[r] &\mathbb{Z}/2 \ar[r] \ar@{=}[d] &E_7 \ar[r]^{\pi} \ar@{^{(}->}[d] &T_7\ar[r] \ar@{^{(}->}[d] &0 \\ 
0\ar[r] &\mathbb{Z}/2 \ar[r] &\hat{E_7'} \ar[r]^{(\pi, id)} &\hat{T_7'}\ar[r] &0 }
\end{equation}

Thus, by Corollary \ref{RM:Z2}, 
\begin{align*}
    KR^{\ast}_{\Lambda_{E_7}(-I)}(S^4)^{-I} &\cong KR^{\ast}_{\Lambda_{E_7}(-I)}(S^0) \\
    &\cong \prod_1^2 KR^{\ast}_{\Lambda_{E_7}(-I)}(\pt)
\\ &\cong 
    \prod_{1}^{2}KR^{\ast}_{T_{7}\times \mathbb{T}}(\pt) \oplus KR^{\ast + \hat{\nu}_{\Lambda^R_{\hat{E_7'}(-I)}, sign}}_{T_{7}\times \mathbb{T}}(\pt) 
    \\ &\cong     \prod_{1}^{2}KR^{\ast}_{T_{7}}(\pt) [q^{\pm}] \oplus KR^{\ast + \hat{\nu}_{\Lambda^R_{\hat{E_7'}(-I)}, sign}}_{T_{7}}(\pt) [q^{\pm}]
\end{align*} where $sign$ is the sign representation of $\Z/2$.

    \item 

    For the conjugacy class represented by $j$ is $\{\pm i, \pm j, \pm k\}$, its Real centralizer \[C_{\hat{E'_7}}^R(i) \cong D_8.\]
    Thus, $KR^{\ast}_{\Lambda_{E_7}(i)} ((S^4)^i) $ is isomorphic to \begin{align*}
        & KR^{\ast}_{\Lambda_{\mathbb{Z}/8}(2) }(S^0) \cong KR^{\ast}_{\Lambda_{\mathbb{Z}/8}(2) }(\pt)  \oplus KR^{\ast}_{\Lambda_{\mathbb{Z}/8}(2) }(\pt) \\
      \cong   & KR^{\ast}(\pt)[x, q^{\pm }] /\langle x^8-q^2\rangle \oplus KR^{\ast}(\pt)[x, q^{\pm }] /\langle x^8-q^2\rangle.
    \end{align*}

    \item For the conjugacy class represented by $\theta= \frac{1}{2}(1+i+j+k) $, the Real centralizer \[C_{\hat{E'_7}}^R(\theta) = \langle \theta, \frac{j+k}{\sqrt{2}} s' \rangle \cong D_6.\] Note that $(\frac{j+k}{\sqrt{2}} s')^2 =1 $ and $(\frac{j+k}{\sqrt{2}} s' \theta)^2 =1$. 
    Then $KR^{\ast}_{\Lambda_{E_7}(\theta)} ((S^4)^{\theta}) $ is isomorphic to \begin{align*}
        & KR^{\ast}_{\Lambda_{\mathbb{Z}/6}(1) }(S^0) \cong KR^{\ast}_{\Lambda_{\mathbb{Z}/6}(1) }(\pt ) \oplus KR^{\ast}_{\Lambda_{\mathbb{Z}/6}(1) }(\pt)\\
      \cong   & KR^{\ast}(\pt)[x, q^{\pm }] /\langle x^6-q\rangle \oplus KR^{\ast}(\pt)[x, q^{\pm }] /\langle x^6-q\rangle.
    \end{align*}

\item For the conjugacy class represented  $-\theta= -\frac{1}{2}(1+i+j+k) $,  the Real centralizer \[C_{\hat{E_7'}}^R(-\theta) = \langle -\theta, \frac{j+k}{\sqrt{2}} s' \rangle\cong D_6.\] Then $KR^{\ast}_{\Lambda_{E_7}(-\theta)} ((S^4)^{-\theta}) $ is isomorphic to \begin{align*}
        & KR^{\ast}_{\Lambda_{\mathbb{Z}/6}(4) }(S^0) \cong KR^{\ast}_{\Lambda_{\mathbb{Z}/6}(4) }(\pt) \oplus KR^{\ast}_{\Lambda_{\mathbb{Z}/6}(4) }(\pt) \\
      \cong   & KR^{\ast}(\pt)[x, q^{\pm }] /\langle x^6-q^4\rangle \oplus KR^{\ast}(\pt)[x, q^{\pm }] /\langle x^6-q^4\rangle.
    \end{align*}

\item For the conjugacy class represented by $r=\frac{1}{\sqrt{2}} (i+j)$, the Real centralizer \[ C^R_{\hat{E'_7}}(r)  \cong D_4.\] Thus, 
$KR^{\ast}_{\Lambda_{E_7}(r)} ((S^4)^{r}) $ is isomorphic to \begin{align*}
        & KR^{\ast}_{\Lambda_{\mathbb{Z}/4}(1) }(S^0) \cong KR^{\ast}_{\Lambda_{\mathbb{Z}/4}(1) }(\pt )  \oplus KR^{\ast}_{\Lambda_{\mathbb{Z}/4}(1) }(\pt)   \\
      \cong   & KR^{\ast}(\pt)[x, q^{\pm }] /\langle x^4-q\rangle \oplus KR^{\ast}(\pt)[x, q^{\pm }] /\langle x^4-q\rangle.
    \end{align*}

\item For the conjugacy class represented by $t= \frac{1}{\sqrt{2}}(1+i)$, its Real centralizer \[ C^R_{\hat{E'_7}}(t)  \cong D_8.\]
Thus, $KR^{\ast}_{\Lambda_{E_7}(t)} ((S^4)^{t}) $ is isomorphic to \begin{align*}
        & KR^{\ast}_{\Lambda_{\mathbb{Z}/8}(1) }(S^0) \cong KR^{\ast}_{\Lambda_{\mathbb{Z}/8}(1) }(\pt) \oplus KR^{\ast}_{\Lambda_{\mathbb{Z}/8}(1) }(\pt) \\
      \cong   & KR^{\ast}(\pt)[x, q^{\pm }] /\langle x^8-q\rangle \oplus KR^{\ast}(\pt)[x, q^{\pm }] /\langle x^8-q\rangle.
    \end{align*}

    \item For the conjugacy class represented by  $-t$,  its Real centralizer  \[ C^R_{\hat{E'_7}}(-t)  \cong D_8.\]
Thus, $KR^{\ast}_{\Lambda_{E_7}(-t)} ((S^4)^{-t} $ is isomorphic to \begin{align*}
        & KR^{\ast}_{\Lambda_{\mathbb{Z}/8}(1) }(S^0) \cong KR^{\ast}_{\Lambda_{\mathbb{Z}/8}(1) }(\pt) \oplus KR^{\ast}_{\Lambda_{\mathbb{Z}/8}(1) }(\pt) \\
      \cong   & KR^{\ast}(\pt)[x, q^{\pm }] /\langle x^8-q^5\rangle \oplus KR^{\ast}(\pt)[x, q^{\pm }] /\langle x^8-q^5\rangle.
    \end{align*}
    
\end{enumerate}

Thus, in conclusion, \begin{align*}
\QR^{\ast}(S^4\git E_7)  = &KR^{\ast}_{\Lambda_{E_7}(I)}((S^4)^I)  \times  KR^{\ast}_{\Lambda_{E_7}(-I)}(S^4)^{-I} \times KR^{\ast}_{\Lambda_{E_7}(i)} ((S^4)^i) \\
&\times 
 KR^{\ast}_{\Lambda_{E_7}(s)} ((S^4)^s)  \times  KR^{\ast}_{\Lambda_{E_7}(-s)} ((S^4)^{-s}) \times KR^{\ast}_{\Lambda_{E_7}(r)} ((S^4)^{r}) \\
 &\times    KR^{\ast}_{\Lambda_{E_7}(t)} ((S^4)^{t})   \times KR^{\ast}_{\Lambda_{E_7}(-t)} ((S^4)^{-t})\\
    \cong &KR^{\ast}_{E_7}(\pt)[q^{\pm}]
    \oplus  KR^{\ast}_{E_7}(\pt)[q^{\pm}] \\
    &\times    \prod_{1}^{2}KR^{\ast}_{T_{7}}(\pt) [q^{\pm}] \oplus KR^{\ast + \hat{\nu}_{\Lambda^R_{\hat{E_7'}(-I)}, sign}}_{T_{7}}(\pt) [q^{\pm}] \\
    &\times KR^{\ast}(\pt)[x, q^{\pm }] /\langle x^8-q^2\rangle \oplus KR^{\ast}(\pt)[x, q^{\pm }] /\langle x^8-q^2\rangle \\
    & \times KR^{\ast}(\pt)[x, q^{\pm }] /\langle x^6-q\rangle \oplus KR^{\ast}(\pt)[x, q^{\pm }] /\langle x^6-q\rangle \\
    & \times KR^{\ast}(\pt)[x, q^{\pm }] /\langle x^6-q^4\rangle \oplus KR^{\ast}(\pt)[x, q^{\pm }] /\langle x^6-q^4\rangle \\
    & \times  KR^{\ast}(\pt)[x, q^{\pm }] /\langle x^4-q\rangle \oplus KR^{\ast}(\pt)[x, q^{\pm }] /\langle x^4-q\rangle \\
    & \times KR^{\ast}(\pt)[x, q^{\pm }] /\langle x^8-q\rangle \oplus KR^{\ast}(\pt)[x, q^{\pm }] /\langle x^8-q\rangle \\
    & \times KR^{\ast}(\pt)[x, q^{\pm }] /\langle x^8-q^5\rangle \oplus KR^{\ast}(\pt)[x, q^{\pm }] /\langle x^8-q^5\rangle,
\end{align*}  where $sign$ is the sign representation of $\Z/2$.

\end{example}

\begin{example} 

In this example we compute $\QR^{\ast}(S^4\git E_8)$ where $E_8$ is the binary icosahedral group.  A presentation of this group is $$\langle r, s, t \mid (st)^2= s^3= t^5 = -1.\rangle. $$
The cardinality of $E_8$ is 120.
In this example, 
we use $\tau$ to denote $\frac{1+\sqrt{5}}{2}$ and   $\sigma$ to denote the number $\frac{1-\sqrt{5}}{2}$.
We take the Real structure $\hat{E_8}'$ on $E_8$, i.e. 
$E_8\langle s'\rangle$.  

By \cite[page 7635, Table 1]{KocaAKo} and direct computation, 
we obtain a list of the representatives of the conjugacy classes of $E_8$, the centralizers of each representative, whether it's fixed under the involution or not,  and the corresponding fixed point spaces in Figure \ref{E8:conj:c:fps}.

\begin{figure}
\begin{center}
\begin{tabular}{|c | c | c | c| } 
 \hline Representatives $\xi$  &Centralizers  & Fixed points under  & $(S^4)^{\xi}$\\
 of Conjugacy classes & $C_{E_8}(\xi)$ & the involution?  &
 \\ \hline
$1$ & $ E_8$ & Y& $S^4$\\
$-1$ & $E_8$ & Y &$ S^0$\\
$y_3:= \frac{1}{2}(\tau + i + \sigma k)$ & $\langle  y_3 \rangle \cong \mathbb{Z}/10$ &Y  & $S^0$\\
$y_4:= y_5^2 = \frac{1}{2}(-\tau + \sigma i-j)$ & $\langle  y_5 \rangle \cong \mathbb{Z}/10$ &Y & $S^0$\\
$y_5:= \frac{1}{2}(\sigma + i + \tau j)$ & $\langle  y_5 \rangle \cong \mathbb{Z}/10$ & Y & $S^0$ \\
$y_6:= y_3^2 = \frac{1}{2}(-\sigma + \tau i -k)$ & $\langle  y_3 \rangle \cong \mathbb{Z}/10 $  & Y & $S^0$ \\
$y_7:= \frac{1}{2}(1 + i +  j +k)$ & $\langle  y_7 \rangle \cong \mathbb{Z}/6$ & Y & $S^0$ \\
$y_8:= y_7^2=\frac{1}{2}(- 1  + i + j + k)$ & $\langle  y_7 \rangle \cong \mathbb{Z}/6$ & Y & $S^0$ \\
$y_9:= i $  &$ \langle  y_9 \rangle \cong \mathbb{Z}/4$ &Y  & $S^0$ \\
 \hline
\end{tabular} \caption{Conjugacy classes, centralizers and fixed point spaces} \label{E8:conj:c:fps}
\end{center}
\end{figure}


Next we compute each factor of $\QR^{\ast}(S^4\git E_8)$ corresponding to each conjugacy class of $ E_8$.
\begin{enumerate}
    \item For the conjugacy class $\{I\}$, the Real centralizer $C^R_{\hat{E_8}'}(I) = \hat{E_8}'$. Thus, by \cite[Theorem 5.1]{atiyah1968_Bott},
    \begin{align*}KR^{\ast}_{\Lambda_{E_8}(I)}((S^4)^I) &\cong KR^{\ast}_{E_8\times \mathbb{T}} (S^4) \cong KR^{\ast}_{E_8\times \mathbb{T}} (S^0)  \\
    &\cong KR^{\ast}_{E_8\times \mathbb{T}} (\pt) \oplus KR^{\ast}_{E_8\times \mathbb{T}} (\pt) \cong KR^{\ast}_{E_8} (\pt)[q^{\pm}] \oplus KR^{\ast}_{E_8} (\pt)[q^{\pm}].\end{align*}

\item For the conjugacy class $\{-I\}$, the Real centralizer $C^R_{\hat{E_8}'}(-I)= \hat{E_8}'$. Thus, by Corollary \ref{RM:Z2}, 
\begin{align*}
    KR^{\ast}_{\Lambda_{E_8}(-I)}(S^4)^{-I} &\cong KR^{\ast}_{\Lambda_{E_8}(-I)} (S^0) \\
    & \cong \prod_1^2 KR^{\ast}_{\Lambda_{E_8}(-I)} (\pt) \oplus KR^{\ast}_{\Lambda_{E_8}(-I)} (\pt)\\
    &\cong \prod_{1}^{2}KR^{\ast}_{T_{8}\times \mathbb{T}}(\pt) \oplus KR^{\ast + \hat{\nu}_{\Lambda^R_{\hat{E_8'}(-I)}, sign}}_{T_{8}\times \mathbb{T}}(\pt) \\ & \cong \prod_{1}^{2}KR^{\ast}_{T_{8}}(\pt) [q^{\pm}]\oplus KR^{\ast + \hat{\nu}_{\Lambda^R_{\hat{E_8'}(-I)}, sign}}_{T_{8}}(\pt)[q^{\pm}],
\end{align*} where $sign$ is the sign representation of $\Z/2$.

\item 
    For the conjugacy class represented by  $y_3$,  its  Real centralizer \[C^R_{\hat{E_8}}(y_3) \cong D_{10}.\]
    Thus, $KR^{\ast}_{\Lambda_{E_8}(y_3)} ((S^4)^{y_3}) $ is isomorphic to \begin{align*}
        & KR^{\ast}_{\Lambda_{\mathbb{Z}/10}(1) }(S^0) \cong KR^{\ast}_{\Lambda_{\mathbb{Z}/10}(1) }(\pt)  \oplus KR^{\ast}_{\Lambda_{\mathbb{Z}/10}(1) }(\pt) \\
      \cong   & KR^{\ast}(\pt)[x, q^{\pm }] /\langle x^{10}-q\rangle \oplus KR^{\ast}(\pt)[x, q^{\pm }] /\langle x^{10}-q\rangle.
    \end{align*}

\item 
    For the conjugacy class represented by $y_4$,   the Real centralizer \[C^R_{\hat{E_8}'}(y_4) \cong D_{10}.\] Thus,
 $KR^{\ast}_{\Lambda_{E_8}(y_4)} ((S^4)^{y_4}) $ is isomorphic to \begin{align*}
        & KR^{\ast}_{\Lambda_{\mathbb{Z}/10}(2) }(S^0) \cong KR^{\ast}_{\Lambda_{\mathbb{Z}/10}(2) }(\pt)   \oplus KR^{\ast}_{\Lambda_{\mathbb{Z}/10}(2) }(\pt)  \\
      \cong   & KR^{\ast}(\pt)[x, q^{\pm }] /\langle x^{10}-q^2\rangle \oplus KR^{\ast}(\pt)[x_2, q^{\pm }] /\langle x^{10}-q^2\rangle.
    \end{align*}

    \item     For the conjugacy class represented by $y_5$,  the Real centralizer $C^R_{\hat{E_8}'}(y_5) \cong D_{10}.$
    Thus,  the factor $KR^{\ast}_{\Lambda_{E_8}(y_5)} ((S^4)^{y_5}) $ is isomorphic to \begin{align*}
        & KR^{\ast}_{\Lambda_{\mathbb{Z}/10}(1) }(S^0) \cong KR^{\ast}_{\Lambda_{\mathbb{Z}/10}(1) }(\pt)  \oplus KR^{\ast}_{\Lambda_{\mathbb{Z}/10}(1) }(\pt)   \\
      \cong   & KR^{\ast}(\pt)[x, q^{\pm }] /\langle x^{10}-q \rangle \oplus KR^{\ast}(\pt)[x, q^{\pm }] /\langle x^{10}-q\rangle.
    \end{align*}

  \item      For the conjugacy class represented by $y_6$,  the Real centralizer  $C^R_{\hat{E_8}'}(y_6) \cong D_{10}.$ Thus, the factor 
 $KR^{\ast}_{\Lambda_{E_8}(y_6)} ((S^4)^{y_6}) $ is isomorphic to \begin{align*}
        & KR^{\ast}_{\Lambda_{\mathbb{Z}/10}(2) }(S^0) \cong  KR^{\ast}_{\Lambda_{\mathbb{Z}/10}(2) }(\pt)   \oplus  KR^{\ast}_{\Lambda_{\mathbb{Z}/10}(2) }(\pt)  \\
      \cong   & KR^{\ast}(\pt)[x, q^{\pm }] /\langle x^{10}-q^2\rangle \oplus KR^{\ast}(\pt)[x, q^{\pm }] /\langle x^{10}-q^2\rangle.
    \end{align*}

    \item     For the conjugacy class represented by $y_7$,  the Real centralizer  $C^R_{\hat{E_8}'}(y_7) \cong D_6.$ Thus, the factor 
  $KR^{\ast}_{\Lambda_{E_8}(y_7)} ((S^4)^{y_7}) $ is isomorphic to \begin{align*}
        & KR^{\ast}_{\Lambda_{\mathbb{Z}/6}(1) }(S^0) \cong KR^{\ast}_{\Lambda_{\mathbb{Z}/6}(1) }(\pt )  \oplus KR^{\ast}_{\Lambda_{\mathbb{Z}/6}(1) }(\pt) \\
      \cong   & KR^{\ast}(\pt)[x, q^{\pm }] /\langle x^{6}-q \rangle \oplus KR^{\ast}(\pt)[x, q^{\pm }] /\langle x^{6}-q\rangle.
    \end{align*}

  \item      For the conjugacy class represented by $y_8$, the Real centralizer  $C^R_{\hat{E_8}'}(y_8) \cong D_6.$ Thus, the factor 
  $KR^{\ast}_{\Lambda_{E_8}(y_8)} ((S^4)^{y_8}) $ is isomorphic to \begin{align*}
        & KR^{\ast}_{\Lambda_{\mathbb{Z}/6}(2) }(S^0) \cong KR^{\ast}_{\Lambda_{\mathbb{Z}/6}(2) }(\pt)   \oplus KR^{\ast}_{\Lambda_{\mathbb{Z}/6}(2) }(\pt)  \\
      \cong   & KR^{\ast}(\pt)[x, q^{\pm }] /\langle x^{6}-q^2\rangle \oplus KR^{\ast}(\pt)[x, q^{\pm }] /\langle x^{6}-q^2\rangle.
    \end{align*}

    \item     For the conjugacy class represented by $y_9$,   the Real centralizer \[C^R_{\hat{E_8}'}(y_9) \cong D_4.\]
    Thus, the corresponding factor $KR_{\Lambda_{E_8}(y_9)} ((S^4)^{y_9}) $ is isomorphic to \begin{align*}
        & KR^{\ast}_{\Lambda_{\mathbb{Z}/4}(1) }(S^0) \cong KR^{\ast}_{\Lambda_{\mathbb{Z}/4}(1) }(\pt ) \oplus KR^{\ast}_{\Lambda_{\mathbb{Z}/4}(1) }(\pt) \\
      \cong   & KR^{\ast}(\pt)[x, q^{\pm }] /\langle x^{4}-q \rangle \oplus KR^{\ast}(\pt)[x, q^{\pm }] /\langle x^{4}-q\rangle.
    \end{align*}

\end{enumerate}

In conclusion, \begin{align*}
    \QR(S^4 \git E_8) =& KR_{\Lambda_{E_8}(I)}((S^4)^I) \times KR_{\Lambda_{E_8}(-I)}((S^4)^{-I}) \times  KR_{\Lambda_{E_8}(y_3)} ((S^4)^{y_3}) \\ 
    &\times KR_{\Lambda_{E_8}(y_4)} ((S^4)^{y_4})  \times KR_{\Lambda_{E_8}(y_5)} ((S^4)^{y_5})  \times  KR_{\Lambda_{E_8}(y_6)} ((S^4)^{y_6}) \\
    &\times KR_{\Lambda_{E_8}(y_7)} ((S^4)^{y_7}) \times   KR_{\Lambda_{E_8}(y_8)} ((S^4)^{y_8}) \times  KR_{\Lambda_{E_8}(y_9)} ((S^4)^{y_9})\\
     \cong &  KR^{\ast}_{E_8} (\pt)[q^{\pm}] \oplus KR^{\ast}_{E_8} (\pt)[q^{\pm}]  \\
     &\times  \prod_{1}^{2}KR^{\ast}_{T_{8}}(\pt) [q^{\pm}]\oplus KR^{\ast + \hat{\nu}_{\Lambda^R_{\hat{E_8'}(-I)}, sign}}_{T_{8}}(\pt)[q^{\pm}] \\ &\times 
      KR^{\ast}(\pt)[x, q^{\pm }] /\langle x^{10}-q\rangle \oplus KR^{\ast}(\pt)[x, q^{\pm }] /\langle x^{10}-q\rangle \\
      &\times  KR^{\ast}(\pt)[x, q^{\pm }] /\langle x^{10}-q^2\rangle \oplus KR^{\ast}(\pt)[x, q^{\pm }] /\langle x^{10}-q^2\rangle\\
      &\times  KR^{\ast}(\pt)[x, q^{\pm }] /\langle x^{10}-q \rangle \oplus KR^{\ast}(\pt)[x, q^{\pm }] /\langle x^{10}-q\rangle\\
      &\times  KR^{\ast}(\pt)[x, q^{\pm }] /\langle x^{10}-q^2\rangle \oplus KR^{\ast}(\pt)[x, q^{\pm }] /\langle x^{10}-q^2\rangle\\
      &\times KR^{\ast}(\pt)[x, q^{\pm }] /\langle x^{6}-q \rangle \oplus KR^{\ast}(\pt)[x, q^{\pm }] /\langle x^{6}-q\rangle\\
      &\times  KR^{\ast}(\pt)[x, q^{\pm }] /\langle x^{6}-q^2\rangle \oplus KR^{\ast}(\pt)[x, q^{\pm }] /\langle x^{6}-q^2\rangle\\
      &\times  KR^{\ast}(\pt)[x, q^{\pm }] /\langle x^{4}-q \rangle \oplus KR^{\ast}(\pt)[x, q^{\pm }] /\langle x^{4}-q\rangle,
\end{align*}  where $sign$ is the sign representation of $\Z/2$.

\end{example}

\begin{remark}

    As we can see in the examples of this section, most computation  lead to the equivariant KR-theory of a single point. 
    The whole data of the equivariant KR-theory, by the computation in \cite[Section 8]{atiyah1969} and \cite[Proposition 3.1]{Fok2013THERK}, is given as
    \begin{align*} & KR^{\ast}_G(\pt) := \sum_{n=0}^{7} KR^{-n}_G(\pt) \\
    = &RR(G) \oplus RR(G)/ \rho(R(G)) \oplus R(G)/ j(RH(G)) \oplus 0 \\
    &\oplus RH(G)\oplus RH(G)/\eta(R(G))
    \oplus R(G)/i(RR(G)) \oplus 0\end{align*} where $i: RR(G)\rightarrow R(G)$ and $j: RH(G)\rightarrow R(G)$ are the forgetful functors, the map $\rho$ is given explicitly in \cite[Proposition 2.17]{Fok2013THERK} and the map $\eta$ is given explicitly in \cite[Proposition 2.24]{Fok2013THERK}. 

    In addition,   there is a graded ring isomorphism (see \cite[Section 8]{atiyah1969})
\[
KR^{\ast}(\pt)
\cong
\Z[\eta, \mu] \slash \langle 2 \eta, \eta^3, \eta \mu, \mu^2 -4 \rangle,
\qquad
\deg \eta =-1, \;\; \deg \mu =-4.
\]
\end{remark}

\section{Quasi-elliptic cohomology of $S^4$ acted by a finite subgroup of $\mathrm{Spin}(4)$} \label{prod_rc_qell}

In this section we compute $\QR^{\ast}_{G}(S^4)$ with $G$ a finite subgroup of $\mathrm{Spin}(4)$. The $\mathrm{Spin}(4)$-action on $S^4$ that we are interested in is that given by the formulas \eqref{spin(4)action} and \eqref{spins5}. 

Denote by $\mathbb{H} \simeq_{\mathbb{R}} \mathbb{R}^4$ the space of quaternions, to be regarded mainly as a real module under quaternion multiplication from the left and right, in particular by unit quaternions
$$
  q \in \mathbb{H}
  \;\;\;\;
     \vdash
  \;\;\;\;
  q \cdot q^\ast = 1
  \;\;
    \Leftrightarrow
  \;\;
  q \,\in\, S(\mathbb{H})
  \,.
$$
 
  We have group isomorphism
  $$
    \mathrm{Spin}(3)
    \;\simeq\;
    S(\mathbb{H})
  $$
  and 
  $$
    \mathrm{Spin}(4) 
    \;\simeq\;
    \mathrm{Spin}(3)
    \times
    \mathrm{Spin}(3)
  $$
  under which the spin double cover of $\mathrm{SO}(4)$ is given by
  \begin{equation}\label{spin(4)_Act}
    \begin{tikzcd}[
      row sep=0pt,
      column sep=0pt
    ]
      \mathrm{Spin}(4)
      \ar[
        r,
        phantom,
        "{\simeq}"
      ]
      &[8pt]
      \mathrm{Spin}(3)
      \times
      \mathrm{Spin}(3)
      \ar[rr, ->>]
      &&
      \mathrm{SO}(\mathbb{H})
      \ar[
        r,
        phantom,
        "{ \simeq }"
      ]
      &[-12pt]
      \mathrm{SO}(4)
      \\
      &
      (e_1, e_2)
      &\mapsto&
      \big(
        q
        \,\mapsto\,
        e_1 \cdot q \cdot e_2^\ast
      \big)
    \end{tikzcd}
  \end{equation}

\subsection{Warm-up Examples}

We start with a simple example. 

\begin{example} \label{factor_dual_spin5}
In \cite[Section 6]{huan2020_v2} and Section \ref{Real_spin3_s4} we compute complex and Real quasi-elliptic cohomology of $S^4$ under the action of  the finite subgroups of $\mathrm{Spin}(3)\times 1\subset \mathrm{Spin}(4) \subset \mathrm{Spin}(5)$. In this example We consider the "dual" of them, i.e. the finite subgroup of $1\times \mathrm{Spin}(3) \subset \mathrm{Spin}(4)\subset \mathrm{Spin}(5)$, which are the groups \[1\times G_n, 1\times 2D_{2n}, 1\times E_6, 1\times E_7, 1\times E_8.\]

For a point $(1, r)\in 1\times \mathrm{Spin}(3)$, it acts on a point $y\in \mathbb{H}$  by \[ (1, r) \cdot y = y \overline{r} = \overline{r\overline{y}}.\] 
For any finite subgroup $G$ of $1\times \mathrm{Spin}(3)$, for any torsion point $(1, r)\in G$, $(S^4)^{(1, r)} = \overline{(S^4)^{(r, 1)}}$; and the centralizer $C_{1\times G}(1,r) = 1\times C_G(r) \cong C_G(r) \times 1= C_{G\times 1}(r, 1)$. Thus, $\Lambda_{1\times G}(1,r) = \Lambda_{G\times 1}(r, 1)$. 
For the Real case, the Real centralizer $C^R_{1\times \hat{G}}(1,r) = 1\times C^R_{\hat{G}}(r) \cong C^R_{\hat{G}}(r) \times 1= C^R_{\hat{G}\times 1}(r, 1)$
It's straightforward to check case by case that \[QEll^{\ast}_{1\times G}(S^4) \cong QEll^{\ast}_{G\times 1}(S^4)\] and the Real quasi-elliptic cohomology \[\QR^{\ast}_{1\times G}(S^4) \cong \QR^{\ast}_{G\times 1}(S^4). \]

\end{example}

\begin{example} \label{refl_S^4}

In this example we study the $\mathbb{Z}/2$-action on $S^4$ induced by the involution $x$ on $\mathbb{H}$
\[x: a+bi+cj+dk\mapsto (-a)+bi+cj+dk.\] The north pole and south pole are both fixed points under the involution.

There are two conjugacy classes in $\Z /2=\{1, \tau\}$ corresponding to its two elements.

Below we compute the factors of $QEll_{\Z/2}(S^4)$.
\begin{itemize}
    \item For the conjugacy class $1$, $(S^4)^1$ is $S^4$ itself. $\Lambda_{\Z/2}( 1) \cong \Z/2 \times \mathbb{T}$. \begin{align*} K_{\Lambda_{\Z/2}(1)}(S^4)^1 &\cong K_{\Z/2\times \mathbb{T}}(S^4) \cong K^{\ast}_{\Z/2\times \mathbb{T}}(S^0) \\
    &\cong \Z[x, q^{\pm }]/\langle x^2-1\rangle \oplus \Z[y, q^{\pm }]/\langle y^2-1\rangle. 
    \end{align*}

    \item For the conjugacy class $\tau$, $(S^4)^{\tau} = \{bi+cj+dk \in \mathbb{H} \mid b, c, d\in \mathbb{R}\}\cup \{\infty\} \cong S^3.$
    \begin{align*}
        K_{\Lambda_{\Z/2}(\tau)}(S^4)^{\tau} &\cong K_{\Lambda_{\Z/2}(\tau)}(S^3) \cong K_{\Lambda_{\Z/2}(\tau)}(S^0) \\
        &\cong \Z[x, q^{\pm}]/\langle x^2-q\rangle \oplus \Z[y, q^{\pm}]/\langle y^2-q\rangle.
    \end{align*}
    
\end{itemize}

Next we compute $\QR^{\ast}_{\Z/2}(S^4)$. If we take the Real structure on $\Z/2$ to be the Dihedral Real structure.
We can take the reflection to be \[y: \mathbb{H}\longrightarrow \mathbb{H}, \quad (a+bi+cj+dk) \mapsto (-a-bi-cj-dk).\] The composition $x\circ y$ sends a point $a+bi+cj+dk$ to $a-bi-cj-dk$, i.e. the quaternion conjugation. The group generated by $x$ and $y$ is the dihedral group $D_4$.

And the Real centralizers $C^R_{D_4}(\alpha) = D_4$ for  $\alpha= 1$, $\tau$ in $\Z/2$.
The factors of $\QR^{\ast}_{\Z/2}(S^4)$ is computed below.
\begin{itemize}
    \item For the conjugacy class $1$, $\Lambda_{\Z/2}(1)\cong \Z/2 \times \mathbb{T}$.  
    \begin{align*}
        KR^{\ast}_{\Lambda_{\Z/2}(1)}(S^4)^1 &\cong KR^{\ast}_{\Lambda_{\Z/2}(1)}(S^4) \cong KR^{\ast}_{\Lambda_{\Z/2}(1)}(S^0) \\
        &\cong  KR^{\ast}_{\Lambda_{\Z/2}(1)}(\pt ) \oplus KR^{\ast}_{\Lambda_{\Z/2}(1)}(\pt ) \\
        &\cong KR^{\ast}(\pt)[x, q^{\pm}]/\langle x^2-1 \rangle \oplus  KR^{\ast}(\pt)[y, q^{\pm}]/\langle y^2-1 \rangle. 
    \end{align*}

    \item 
    For the conjugacy class $\tau$,   
    \begin{align*}
 KR^{\ast}_{\Lambda_{\Z/2}(\tau)}(S^4)^\tau &\cong KR^{\ast}_{\Lambda_{\Z/2}(\tau)}(S^3) \cong  KR^{\ast}_{\Lambda_{\Z/2}(\tau)}(S^0) \\
 &\cong KR^{\ast}_{\Lambda_{\Z/2}(\tau)}(\pt ) \oplus KR^{\ast}_{\Lambda_{\Z/2}(\tau)}(\pt) \\
 &\cong KR^{\ast}(\pt)[x, q^{\pm}]/\langle x^2-q \rangle \oplus KR^{\ast}(\pt)[y, q^{\pm}]/\langle y^2-q \rangle.
    \end{align*}
\end{itemize}

\end{example}

Next we compute $QEll_G(S^4)$ and $\QR^{\ast}_{G}(S^4)$ with $G$ a cyclic subgroup of $\mathrm{Spin}(4)$.

\begin{example}\label{quaternion_cyclic}

Let \[ G=\langle \left[ \begin{array}{cc} e^{\frac{2\pi i m_1}{n_1}} & 0 \\  0 & e^{\frac{2\pi i m_2}{n_2}}
\end{array} \right] \in U(2, \mathbb{H}) \mid  m_1, m_2\in \Z \rangle. \]  Let  \[\alpha:=\left[ \begin{array}{cc} e^{\frac{2\pi i p_1}{n_1}} & 0 \\  0 & e^{\frac{2\pi i p_2}{n_2}}
\end{array} \right]\] denote a generator of the cyclic group. We can assume that $p_1$ and  $n_1$ are coprime, and $p_2$ and $n_2$ are coprime. 
The order of $G$ is the least common multiple $N$ of $n_1$ and $n_2$. 

Then for any $\alpha^m \in G$,  the centralizer \[C_G(\alpha^m)= G. \]
And \[(S^4)^{\alpha^m} = \begin{cases}
    S^4, &\text{  if  } \alpha^m =  I; \\
    S^0, &\text{  otherwise }.
\end{cases}\]

The group $G =  \langle \alpha \rangle$ is isomorphic to $\Z/N$. Then we can apply the results in \cite{huan2020_v2} and Example \ref{G_nDihedralS^4} directly. 

The complex quasi-elliptic cohomology is 
\begin{align*}
    QEll_{G}(S^4) &= \prod^N_{m=0}  K_{\Lambda_{G}(\alpha^m)}((S^4)^{\alpha^m} ) \\
    &\cong \prod^N_{m=0} \mathbb{Z}[q^{\pm}, x]/\langle x^N-q^m \rangle \oplus \mathbb{Z}[q^{\pm}, x]/\langle x^N-q^m \rangle.
\end{align*}

The Real quasi-elliptic cohomology is 
\[\QR^{\ast}(S^4\git G) \cong   \prod_{m=0}^{N-1} KR^{\ast}(\pt)[x, q^{\pm}] \slash \langle x^N-q^m \rangle \oplus  KR^{\ast}(\pt)[x, q^{\pm}] \slash \langle x^N-q^m \rangle.\]

\end{example}

\subsection{Product of finite subgroups}
 I didn't find all the finite subgroups of $\mathrm{Spin}(5)$ that have a well-defined action on $\mathbb{H}$. I will discuss some finite subgroups of the form $H\times K $ where both $H$ and $K$ are finite subgroups of $\mathrm{Spin}(3)$. 
 \begin{example}
 For any $(h, k)\in H\times K$, and $y\in \mathbb{H}$, as given in \eqref{spin(4)_Act}, \[ (h, k)\cdot y :=  h y \overline{k}.\]
The set of conjugacy classes $\pi_0(H\times K \git H\times K)$
 is one-to-one correspondent to $\pi_0(H\git H) \times \pi_0(K\git K)$.
In addition, \begin{equation}\Lambda_{H\times K}( h, k) \cong \Lambda_H(h) \times_{\mathbb{T}} \Lambda_K(k). \label{Lambd_prod}\end{equation}

If $(\widehat{H}, \pi_H)$ is a Real structure on $H$ and 
$(\widehat{K}, \pi_K)$ is a Real structure on $K$, then we have the product Real structure $$(\widehat{H}\times_{\Z/2} \widehat{K}, \pi )$$ where the projection \[\pi= \pi_H\times_{\Z/2} \pi_K: \widehat{H}\times_{\Z/2} \widehat{K}\longrightarrow \Z/2\] sends $(h, k)$ to $\pi_H(h) = \pi_K(k)$. For the Real centralizers, \[C^R_{\widehat{H}\times_{\Z/2} \widehat{K}}( h, k) \cong C^R_{\widehat{H}}(h) \times_{\Z/2} C^R_{\widehat{K}}(k). \] Thus,
\begin{equation}\Lambda^R_{\widehat{H}\times_{\Z/2} \widehat{K}}( h, k) \cong \Lambda^R_{\widehat{H}}(h) \times_{O(2)} \Lambda^R_{\widehat{K}}(k), 
\label{lambdR_prod}\end{equation} 
where $O(2)$ is the 2-dimensional orthogonal group.

\bigskip

In addition, if the  reflection in $\widehat{H}$ and $\widehat{K}$ on $\R^4$ are represented by the same matrix $\alpha \in U(2)$ with $\alpha^2 = I$, it defines a $\C$-linear map 
\begin{align*}
SU(2) &\longrightarrow SU(2) \\
A &\mapsto \alpha A \overline{\alpha}.
\end{align*}
Then, by direct computation, if we take $\alpha$ to be the reflection $s$ defined in Example \ref{Real_str_SU(2)}, the resulting reflection on $\mathbb{H}\cong \R^4$ is defined by  \[(a+bi+cj+dk) \mapsto (a-bi-cj+dk). \]
And if we take $\alpha$ to be the reflection $s'$ defined in Example \ref{Real_str_SU(2)}, the resulting reflection is \[ (a+bi+cj+dk)
\mapsto (a+bi-cj-dk).\]
In addition, if we take the reflection in $\widehat{H}$ to be $s$ and that on $\widehat{K}$ to be $s'$, the resulting reflection on $\mathbb{H}$ is 
\[ (a+bi+cj+dk) \mapsto (-c+di-aj+bk).\] And if we take the reflection in $\widehat{H}$ to be $s'$ and that on $\widehat{K}$ to be $s$, the resulting reflection on $\mathbb{H}$ is \[ (a+bi+cj+dk) \mapsto (c+di+aj+bk).\]

\end{example}

In fact, we have a conclusion generalizing Example \ref{factor_dual_spin5}.

\begin{proposition}
    Let $H$ and $K$ denote two finite subgroups of $\mathrm{Spin}(3)$. The product $H\times K$ acts on $S^4$ in the way as in \eqref{spin(4)_Act}. Then \[QEll^{\ast}_{H\times K}(S^4) \cong QEll^{\ast}_{K\times H}(S^4).\]
    Moreover, if $(\widehat{H}, \pi_H)$ is Real structure on $H$ and 
$(\widehat{K}, \pi_K)$ is Real structure on $K$, then, \[\QR^{\ast}_{H\times K}(S^4) \cong \QR^{\ast}_{K\times H}(S^4).\]
\end{proposition}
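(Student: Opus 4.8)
The plan is to exhibit a single geometric isomorphism of orbifolds that interchanges the two factors, namely the one induced by quaternionic conjugation. Write $c \colon \mathbb{H} \to \mathbb{H}$, $c(y) = \overline{y}$, for the conjugation, an $\mathbb{R}$-linear orthogonal involution of $\mathbb{H} \cong \mathbb{R}^4$. Because conjugation reverses products, $\overline{pq} = \overline{q}\,\overline{p}$, it interchanges the left and right slots of the action \eqref{spin(4)_Act}: for $(h,k) \in H \times K$ and $y \in \mathbb{H}$,
\[
c\big( (h,k) \cdot y \big)
= \overline{h\, y\, \overline{k}}
= k \, \overline{y} \, \overline{h}
= (k,h) \cdot c(y).
\]
Thus, writing $\phi \colon H \times K \xrightarrow{\sim} K \times H$ for the swap isomorphism $(h,k) \mapsto (k,h)$, the map $c$ is $\phi$-equivariant. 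Extending $c$ by the identity on the trivial summand of $S^4 \simeq_G S(\mathbb{H} \oplus \mathbb{R}_{\mathrm{triv}})$, so that the two poles are fixed, keeps it $\phi$-equivariant, since both actions are trivial on the $\mathbb{R}$-summand.

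First I would record that the pair $(c, \phi)$ is therefore an isomorphism of action groupoids $S^4 \git (H \times K) \xrightarrow{\sim} S^4 \git (K \times H)$: it matches torsion conjugacy classes through $\phi$, fixed-point spaces through $c$, and enhanced centralizers through $\Lambda_{H \times K}(h,k) \cong \Lambda_H(h) \times_{\mathbb{T}} \Lambda_K(k) \cong \Lambda_{K \times H}(k,h)$ from \eqref{Lambd_prod}; moreover $c$ does not touch the $\mathbb{R}/\mathbb{Z}$-coordinate, so the projections to $B\mathbb{T}$ agree. Passing to loop groupoids and invoking Definition \ref{defqell1}, so that quasi-elliptic cohomology is $K_{orb}$ of the loop groupoid and hence an invariant of the isomorphism type, yields $QEll^{\ast}_{H \times K}(S^4) \cong QEll^{\ast}_{K \times H}(S^4)$, settling the complex case.

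For the Real statement I would extend $\phi$ to the $\mathbb{Z}/2$-graded swap $\widehat{\phi} \colon \widehat{H} \times_{\mathbb{Z}/2} \widehat{K} \xrightarrow{\sim} \widehat{K} \times_{\mathbb{Z}/2} \widehat{H}$, $(\varsigma_H, \varsigma_K) \mapsto (\varsigma_K, \varsigma_H)$; this is well defined and grading-preserving precisely because $\pi = \pi_H \times_{\mathbb{Z}/2} \pi_K$ is symmetric in the two factors. The one genuinely new point is to verify that $c$ is $\widehat{\phi}$-equivariant on the graded (reflection) part of the action, not only on the ungraded part handled above. Using the explicit formulas for the product reflection on $\mathbb{H}$ recorded in the preceding Example, a direct check over the choices $s, s'$ for each factor confirms $c(\varsigma \cdot y) = \widehat{\phi}(\varsigma) \cdot c(y)$; the conceptual reason is again that $c$ reverses quaternion multiplication and so trades the left ($H$) reflection for the right ($K$) reflection.

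With this in hand $(c, \widehat{\phi})$ is an isomorphism of Real orbifolds, inducing an equivalence of Real loop groupoids $\widehat{\Lambda}\big(S^4 \git (\widehat{H} \times_{\mathbb{Z}/2} \widehat{K})\big) \simeq \widehat{\Lambda}\big(S^4 \git (\widehat{K} \times_{\mathbb{Z}/2} \widehat{H})\big)$, compatible with the class-by-class decomposition of Proposition \ref{prop:loopGrpdModel} and matching the Real enhanced centralizers through $\Lambda^R_{\widehat{H} \times_{\mathbb{Z}/2} \widehat{K}}(h,k) \cong \Lambda^R_{\widehat{H}}(h) \times_{\mathrm{O}(2)} \Lambda^R_{\widehat{K}}(k) \cong \Lambda^R_{\widehat{K} \times_{\mathbb{Z}/2} \widehat{H}}(k,h)$ from \eqref{lambdR_prod}, the projections to $B\mathrm{O}(2)$ agreeing for the same reason as before. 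Since Freed--Moore $KR$-theory is invariant under such $\mathbb{Z}/2$-graded isomorphisms, we conclude $\QR^{\ast}_{H \times K}(S^4) \cong \QR^{\ast}_{K \times H}(S^4)$. I expect the main obstacle to be the reflection-compatibility of the third paragraph: the ungraded intertwining is immediate from the anti-automorphism property of $c$, but the graded part depends on the explicit choice of reflections and must be confirmed for the product Real structure, either by the indicated case analysis or by the uniform order-reversal argument.
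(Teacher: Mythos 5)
Your proposal is correct and follows essentially the same route as the paper: the key observation in both is that quaternionic conjugation $y \mapsto \overline{y}$ intertwines the $(h,k)$-action with the swapped $(k,h)$-action because conjugation reverses quaternion products, combined with the identification $\Lambda_{H\times K}(h,k)\cong\Lambda_{K\times H}(k,h)$ (and its Real analogue) and the compatibility of conjugation with the chosen reflections. The only difference is organizational — you package the argument as a single isomorphism of (Real) action groupoids, whereas the paper verifies the matching factor by factor over conjugacy classes — but the substance is identical.
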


\begin{proof}
The factors of both $QEll^{\ast}_{H\times K}(S^4)$ and $QEll^{\ast}_{K\times H}(S^4)$ go through the set $\pi_0(H\git H) \times \pi_0(K\git K)$.

By \eqref{Lambd_prod}, for any $  \sigma\in H$, and $ \tau\in K$, \[\Lambda_{H\times K}( \sigma, \tau) \cong \Lambda_H(\sigma) \times_{\mathbb{T}} \Lambda_K(\tau) \cong \Lambda_K(\tau) \times_{\mathbb{T}}
\Lambda_H(\sigma) \cong \Lambda_{K\times H}(\tau,\sigma). \]

For any fixed point $a+bi+cj+dk \in \mathbb{H}$ of $(\sigma, \tau)$, we have the equality \[
\sigma(a+bi+cj+dk)\overline{\tau} = a+bi+cj+dk.
\] Taking the complex conjugate of both sides, we get \[ \tau (a-bi-cj-dk) \overline{\sigma}  = a-bi-cj-dk.\] Thus, the complex conjugate of the quaternion induces a one-to-one correspondence \[ (S^4)^{(\sigma, \tau)} \buildrel{\overline{(-)}}\over\longrightarrow (S^4)^{(\tau, \sigma)}.\]
Moreover, it is direct to show that for any element $(u, v)\in C_{H\times K}( \sigma, \tau) $, any $x= a+bi+cj+dk \in (S^4)^{(\sigma, \tau)}$, we have the equality
\[\overline{(u, v)\cdot x} = (v, u)\cdot \overline{x}.\] Note that $(v, u)\in C_{K\times H}(\tau, \sigma)$.
This leads to the isomorphism \[K^{\ast}_{\Lambda_{H\times K}( \sigma, \tau) }(S^4)^{(\sigma, \tau)}  \cong K^{\ast}_{\Lambda_{K\times H}(\tau,\sigma)} (S^4)^{(\tau, \sigma)}. \] 

Thus, \[QEll^{\ast}_{H\times K}(S^4) \cong QEll^{\ast}_{K\times H}(S^4).\]

\bigskip

For the Real case, the factors of both $\QR^{\ast}_{H\times K}(S^4)$ and $\QR^{\ast}_{K\times H}(S^4)$  go through the same set
$\pi_0((H\times K) \git_R (\widehat{H}\times_{\Z/2} \widehat{K}))$.
 In addition, by \eqref{lambdR_prod}, \[ \Lambda^R_{\widehat{H}\times_{\Z/2} \widehat{K}}( \sigma, \tau)  \cong \Lambda^R_{\widehat{K}\times_{\Z/2} \widehat{H}}( \tau, \sigma). \] And the complex conjugate \[ (S^4)^{(\sigma, \tau)} \longrightarrow (S^4)^{(\tau, \sigma)}\] commutes with the reflections, as shown below.
 \[ \xymatrix{ (S^4)^{(\sigma, \tau)} \ar[r]^{\overline{ (-)}} \ar[d]_{(s_H, s_K)}
 &(S^4)^{(\tau, \sigma)} \ar[d]^{(s_K, s_H)} \\ (S^4)^{(s_H\sigma s_H^{-1}, s_K\tau s_K^{-1})} \ar[r]^{\overline{(-)}} & (S^4)^{( s_K\tau s_K^{-1}, s_H\sigma s_H^{-1})}}\]
where $s_H$ is the reflection in $\widehat{H}$ and $s_K$ is the reflection in $\widehat{K}$.

Thus, we get  \[\QR^{\ast}_{H\times K}(S^4) \cong \QR^{\ast}_{K\times H}(S^4).\]
    
\end{proof}

\begin{proposition} \label{prod_QEll}

    Let $H$ and $K$ denote two finite subgroups of $\mathrm{Spin}(3)$. The product $H\times K$ acts on $S^4$ by the action given in \eqref{spin(4)action}. Let $(\widehat{H}, \pi_H)$ denote a Real structure on $H$ and 
$(\widehat{K}, \pi_K)$ a Real structure on $K$. Then we have the conclusions below.
\begin{enumerate}
 \item     
    The factor in $QEll_{H\times K}(S^4)$ corresponding to the conjugacy class $(h, k)$, i.e. $K_{\Lambda_{H\times K}(h,k)} (S^4)^{(h, k)}$, is isomorphic to 
\[\prod\limits_1^2 R(\Lambda_H(h))\otimes_{\Z[q^{\pm}]} R(\Lambda_K(k)).\]
     Then we have the isomorphism \begin{align*} QEll_{H\times K}(S^4) &= 
     \prod_{(h, k)\in \pi_0((H\times K)\git (H\times K))} K_{\Lambda_{H\times K}(h,k)} (S^4)^{(h, k)}  \\
     & \cong  \prod_{h\in \pi_0(H\git H), k\in \pi_0(K\git K)}  \prod\limits_1^2 R(\Lambda_H(h))\otimes_{\Z[q^{\pm}]} R(\Lambda_K(k)).\end{align*}
     
\item  The  factor in $\QR^{\ast}_{H\times K}(S^4)$ corresponding to the Real conjugacy class $(h, k)$, i.e. ${^{\pi}}K^{\ast}_{\Lambda_{H\times K}(h,k)} (S^4)^{(h, k)}$, is isomorphic to: 
\begin{itemize}
    \item \[\prod\limits_1^2 KR^{\ast}_{\Lambda_{H}(h)} (\pt)\otimes_{KR^\ast_{\mathbb{T}}(\pt)} KR^{\ast}_{\Lambda_{  K}(k)} (\pt) ,\]  if $(h, k) $ is a fixed point under the involution;
    \item \[\prod\limits_1^2 K^{\ast}_{\Lambda_H(h)}(\pt)\otimes_{\Z[q^{\pm}]} K^{\ast}_{\Lambda_K(k)}(\pt)  ,\] if $(h, k)$ is a free point under the involution.
\end{itemize}

\end{enumerate}
     
\end{proposition}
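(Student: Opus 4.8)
The plan is to treat the two parts in parallel, reducing each factor of the product decomposition to the (Real) $K$-theory of a point and then splitting the latter along the fibre-product structure of the loop groups. For part (1), I would first identify the fixed-point space. Writing the action as $(h,k)\cdot y = h y \overline{k}$, the fixed locus in $\mathbb{H}\cong\R^4$ is the real-linear subspace $W_{h,k}=\{y\in\mathbb{H}\mid hy=yk\}$, so that $(S^4)^{(h,k)}=S(W_{h,k}\oplus\R_{\mathrm{triv}})$ is a representation sphere of the centralizer $C_{H\times K}(h,k)$, hence of $\Lambda_{H\times K}(h,k)$ acting through it. Since $L_h R_{\overline{k}}\in\mathrm{SO}(4)$ has an even-dimensional fixed space, $W_{h,k}$ is an even-dimensional representation carrying a compatible complex structure inherited from $\mathbb{H}\cong\C^2$, so $(S^4)^{(h,k)}$ is a complex representation sphere. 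Applying \cite[Theorem 5.1]{atiyah1968_Bott} (the Thom isomorphism / Bott periodicity), exactly as in the examples of Section \ref{Real_spin3_s4}, gives
\[
K_{\Lambda_{H\times K}(h,k)}\big((S^4)^{(h,k)}\big)\;\cong\;K_{\Lambda_{H\times K}(h,k)}(S^0)\;\cong\;\prod_1^2 R\big(\Lambda_{H\times K}(h,k)\big).
\]

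Next I would feed in the fibre-product identity \eqref{Lambd_prod}, namely $\Lambda_{H\times K}(h,k)\cong\Lambda_H(h)\times_{\mathbb{T}}\Lambda_K(k)$, together with the K\"unneth-type formula $R(A\times_{\mathbb{T}}B)\cong R(A)\otimes_{\Z[q^{\pm}]}R(B)$ for the representation ring of a $\mathbb{T}$-fibre product; this is precisely the mechanism already exploited in Example \ref{ex3.3:huan2018} to compute $R\Lambda_G(\sigma)$ for abelian $G$. Combining the two displays yields the claimed factorwise isomorphism, and running over $\pi_0\big((H\times K)\git(H\times K)\big)\cong\pi_0(H\git H)\times\pi_0(K\git K)$ assembles the global formula of part (1).

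For part (2) I would split according to the partition \eqref{eq:RealConjClasses} and the decomposition \eqref{eq:QRDecompExpli}. If $(h,k)$ is a free point under the involution, the corresponding factor of $\QR^{\ast}$ is, by \eqref{eq:QRDecompExpli}, the ordinary complex $K$-theory $K^{\ast}_{\Lambda_{H\times K}(h,k)}\big((S^4)^{(h,k)}\big)$, so part (1) applies verbatim (now with $K^{\ast}(\pt)$-coefficients in place of $\Z$) and produces the second bullet. If instead $(h,k)$ is a fixed point, the factor is the genuinely Real group $KR^{\ast}_{\Lambda_{H\times K}(h,k)}\big((S^4)^{(h,k)}\big)$ for the Real structure $\Lambda^R_{\widehat H\times_{\Z/2}\widehat K}(h,k)$; here I would use the Real version of \cite[Theorem 5.1]{atiyah1968_Bott} to reduce to $\prod_1^2 KR^{\ast}_{\Lambda^R_{\widehat H\times_{\Z/2}\widehat K}(h,k)}(\pt)$, and then invoke the Real fibre-product identity \eqref{lambdR_prod}, $\Lambda^R_{\widehat H\times_{\Z/2}\widehat K}(h,k)\cong\Lambda^R_{\widehat H}(h)\times_{O(2)}\Lambda^R_{\widehat K}(k)$, to split the $KR$-theory of a point over $KR^{\ast}_{\mathbb{T}}(\pt)$.

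The main obstacle is this last $KR$-theoretic splitting: the identity $KR^{\ast}_{A\times_{O(2)}B}(\pt)\cong KR^{\ast}_A(\pt)\otimes_{KR^{\ast}_{\mathbb{T}}(\pt)}KR^{\ast}_B(\pt)$ is the Real analogue of the representation-ring K\"unneth formula, but it now involves the nontrivial coefficient ring $KR^{\ast}(\pt)$, the $\Z/2$-grading, and the reflection, so it does not follow from naive flatness over $\Z[q^{\pm}]$. This is exactly where the Mackey decomposition formula for Freed--Moore $K$-theory recorded in the appendix (following \cite{huanyoung2022}) is needed: I would verify that the $O(2)$-fibre product satisfies its hypotheses and that the resulting gluing over $KR^{\ast}_{\mathbb{T}}(\pt)$ matches the claimed tensor product. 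A secondary point to confirm is that $W_{h,k}$ is equivariantly Real-orientable, so that the $KR$-theoretic Thom reduction is legitimate; in the cases at hand this should follow from the explicit quaternionic description of the action together with the chosen reflection.
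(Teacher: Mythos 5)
Your proposal follows essentially the same route as the paper: reduce each factor to $K$- or $KR$-theory of $S^0$ using the representation-sphere structure of $(S^4)^{(h,k)}$ and the Atiyah--Bott theorems, split off two copies of a point, and then factor $\Lambda_{H\times K}(h,k)$ (resp.\ $\Lambda^R_{\widehat H\times_{\Z/2}\widehat K}(h,k)$) as a fibre product over $\mathbb{T}$ (resp.\ $O(2)$) to obtain the tensor-product description. The one place you are more cautious than the paper is the $KR$-theoretic K\"unneth identity $KR^{\ast}_{A\times_{O(2)}B}(\pt)\cong KR^{\ast}_A(\pt)\otimes_{KR^{\ast}_{\mathbb{T}}(\pt)}KR^{\ast}_B(\pt)$, which the paper asserts without detailed justification (addressing it only in the remark via $RR(\mathbb{T})\cong\Z[q^{\pm}]$), so flagging it as the point needing the Freed--Moore Mackey decomposition is a reasonable and arguably necessary refinement rather than a divergence.
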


\begin{proof}
We prove the conclusion one by one.
\begin{enumerate}
    \item Note that \eqref{spin(4)action} defines a $4$-dimensional representation  of $H\times K$.
     Thus, $(S^4)^{(h, k)}$ is a representation sphere of $\Lambda_{H\times K}(h, k)$ and contains $S^0$ as a subspace. Whatever $(S^4)^{(h, k)}$ is, by \cite[Theorem 4.3]{atiyah1968_Bott}, we have \[K_{\Lambda_{H\times K}(h,k)} (S^4)^{(h, k)} \cong K_{\Lambda_{H\times K}(h, k)} (S^0).\] And the right hand side is isomorphic to \begin{align*} & K_{\Lambda_{H\times K}(h, k)} (\pt )\oplus K_{\Lambda_{H\times K}(h, k)} (\pt)\cong R(\Lambda_{H\times K}(h, k)) \oplus R(\Lambda_{H\times K}(h, k)) \\
    \cong & R(\Lambda_H(h))\otimes_{\Z[q^{\pm}]} R(\Lambda_K(k))\oplus R(\Lambda_H(h))\otimes_{\Z[q^{\pm}]} R(\Lambda_K(k)). \end{align*}

    \item The proof is similar to the complex case.
    Since $(S^4)^{(h, k)}$ is a Real representation sphere of $\Lambda^R_{\widehat{H}\times_{\Z/2} \widehat{K}} (h,k)$, as well as a complex representation sphere of $\Lambda_{H\times K}(h, k)$, thus, by  \cite[Theorem 4.3, Theorem 5.1]{atiyah1968_Bott}, the Freed-Moore K-theory ${^{\pi}}K^{\ast}_{\Lambda_{H\times K}(h,k)} (S^4)^{(h, k)}$ is isomorphic to \[{^{\pi}}K^{\ast}_{\Lambda_{H\times K}(h,k)} (S^0) \cong {^{\pi}}K^{\ast}_{\Lambda_{H\times K}(h,k)} (\pt) \oplus {^{\pi}}K^{\ast}_{\Lambda_{H\times K}(h,k)} (\pt) .\]
    In addition, 
    \[{^{\pi}}K^{\ast}_{\Lambda_{H\times K}(h,k)} (\pt)  = \begin{cases}
        KR^{\ast}_{\Lambda_{H\times K}(h,k)} (\pt), &\text{  if  } (h, k) \text{  is a fixed point under the involution;}\\
        K^{\ast}_{\Lambda_{H\times K}(h,k)} (\pt), &\text{   if  }(h, k) \text{    is   a free point under the involution.}
    \end{cases}\] 
And $KR^{\ast}_{\Lambda_{H\times K}(h,k)} (\pt) \cong KR^{\ast}_{\Lambda_{H}(h)} (\pt)\otimes_{KR^\ast_{\mathbb{T}}(\pt)} KR^{\ast}_{\Lambda_{  K}(k)} (\pt)$.
    
\end{enumerate} 
Then we get the conclusion immediately.

\end{proof}

\begin{remark}
    One probably subtle point is that, as indicated in \cite{huanyoung2022}, the Real structure we takes in the $\R$ in the general definition of the enhanced Real stabilizer
\[
\Lambda_{\hat{G}}^R(g) \simeq (\R \rtimes_{\pi} \hat{G}) \slash \langle (-1,g) \rangle.
\] is the reflection of $r\mapsto -r$. This coincides with the dihedral Real structure on $\mathbb{T}$. More explicitly, 
the involution defined from the dihedral Real structure on $\mathbb{T}$ is given by $t\mapsto -t$, which is the quotient of the reflection on $\R$.  

The Real representation ring $RR(\mathbb{T})$ for $\mathbb{T}$ with the dihedral Real structure, i.e. $O(2)$, is exactly $RR(\mathbb{T}; \R)$, which is isomorphic to $\Z[q^{\pm}]$. Thus, the isomorphism \[\Lambda_{\hat{G}\times_{\Z/2} \hat{H}}^R(g, h) \cong  \Lambda_{\hat{G}}^R(g) \times_{O(2)} \Lambda_{\hat{H}}^R(h) \] gives us the isomorphism of Real representation rings, i.e. \[RR(\Lambda_{G\times H}(g, h)) \cong  RR(\Lambda_{G}(g)) \otimes_{RR\mathbb{T}} RR(\Lambda_{H}(h)). \]
\end{remark}

\begin{example} \label{quaternion_cyclic_2}

In this example we compute $QEll_{G_n\times G_m}(S^4)$  
 with   \[ G_n= \{ e^{\frac{2\pi i k}{n}} \in \mathbb{H} \mid  k \in \mathbb{Z}\}; \qquad G_m = \{ e^{\frac{2\pi i j}{m}} \in \mathbb{H} \mid  j \in \mathbb{Z}\}. \] 

By \cite[Example 6.3]{huan2020_v2} and Proposition \ref{prod_QEll}(i), \begin{align*}
&QEll_{G_n\times G_m}(S^4) \cong \prod_{k=0}^{n-1}  \prod_{l = 0}^{m-1} K_{\Lambda_{G_n\times G_m} (k, l)}(S^4)^{(k,l)} \\
\cong & \prod_{k=0}^{n-1}  \prod_{l = 0}^{m-1} \Z[x_1, x_2, q^{\pm}]/\langle x_1^n-q^k, x_2^m-q^l \rangle \oplus \Z[y_1, y_2, q^{\pm}]/\langle y_1^n-q^k, y_2^m-q^l \rangle .
    \end{align*}

  We take the Real structure $\hat{G}'_n$ as defined in Example \ref{Real:G:SU(2)}, i.e. the group below together with the determinant map $\det$
\[ \langle G_n, \left[ {\begin{array}{cc} 1 & 0 \\  0 & -1 
\end{array}} \right] \rangle. \] It is isomorphic to the dihedral group $D_{2n}$. 
As discussed in Example \ref{G_nDihedralS^4}, all the elements in $\pi_0(G_n \git_R \hat{G}'_n)$ and $\pi_0(G_m \git_R \hat{G}'_m)$ are fixed points under the involution, thus, so are those in $\pi_0(G_n\times G_m \git_R \widehat{G_n\times_{\Z/2} G_m})$.

By Example \ref{G_nDihedralS^4}  and  Proposition \ref{prod_QEll}(ii), \begin{align*}
   & \QR^{\ast}_{G_n\times G_m}(S^4) \cong     \prod_{k=0}^{n-1}  \prod_{l = 0}^{m-1} KR^{\ast}_{\Lambda_{G_n\times G_m} (k, l)}(S^4)^{(k,l)}  \\
  \cong &    \prod_{k=0}^{n-1}  \prod_{l = 0}^{m-1} KR^{\ast}(\pt)[x_1, x_2, q^{\pm}]\slash \langle x_1^n - q^k, x_2^m -q^l \rangle \oplus  KR^{\ast}(\pt)[y_1, y_2, q^{\pm}]\slash \langle y_1^n - q^k, y_2^m -q^l \rangle .
    \end{align*}
    
\end{example}

\begin{example}
Let $n$ and $m$ be positive integers.  Let $G_n < \mathrm{Spin}(3)$ denote the cyclic group \[ \{ e^{\frac{2\pi i k}{n}} \in \mathbb{H} \mid  k \in \mathbb{Z}\} \]
 and  $2D_{2m}$ denote the binary Dihedral group \[  \langle G_{2m}, \left[ {\begin{array}{cc} 0 & -1 \\  1 &  0 
\end{array}} \right]\rangle < \mathrm{Spin}(3). \] 
In this example we compute $QEll^{\ast}_{G_n\times 2D_{2m}}(S^4)$ and $\QR^{\ast}_{G_n\times 2D_{2m}}(S^4)$.

Let $\tau$ denote $\left[ {\begin{array}{cc} 0 & -1 \\  1 &  0 
\end{array}} \right]$ in $2D_{2m}$, which is $-j$ in term of quaternions. 

The factors of $QEll_{G_n\times 2D_{2m}} (S^4)$ corresponding to each conjugacy class is computed one by one below.  We first compute the factors  corresponding to the conjugacy classes represented by \begin{equation} \label{alpha_matrix}\alpha:= \left[ {\begin{array}{cc} e^{\frac{2\pi i k}{n}} & 0 \\  0 &  e^{\frac{2\pi i p}{2m}} 
\end{array}} \right] \in U(2, \mathbb{H}). \end{equation} 

\begin{enumerate}
    
\item 
If $\alpha = I$, \begin{align*}K_{\Lambda_{G_n\times 2D_{2m}}(I)} (S^4)^{I} &= K_{\Lambda_{G_n\times 2D_{2m}}(I )} (S^4)  \buildrel{(\ast)}\over\cong K_{\Lambda_{G_n\times 2D_{2m}}(I)} (S^0) \cong K_{G_n\times 2D_{2m}}(S^0)\otimes \Z[q^{\pm}] \\
&\cong (R(G_n\times 2D_{2m})\oplus R(G_n\times 2D_{2m}))\otimes \Z[q^{\pm}] \\ &\cong R(2D_{2m})[x_1, x_2, q^{\pm}]/\langle x_1^n-1, x_2^n-1 \rangle 
\end{align*} where  the isomorphism $(\ast)$ is by \cite[Theorem 4.3]{atiyah1968_Bott}.

\item 

If the $e^{\frac{2\pi i p}{2m}}$ in \eqref{alpha_matrix} is not $\pm I$, the centralizer $C_{G_n\times 2D_{2m}}(\alpha) = G_n\times G_{2m}$. \begin{align*}
K_{\Lambda_{G_n\times 2D_{2m}}(\alpha)} (S^4 )^{\alpha} &\cong K_{\Lambda_{G_n\times 2D_{2m}}(\alpha)} (S^0 ) \cong R(\Lambda_{G_n\times 2D_{2m}}(\alpha))\oplus R(\Lambda_{G_n\times 2D_{2m}}(\alpha))  \\ &\cong \Z[x_1, x_2, q^{\pm}]/\langle x_1^n- q^{k}, x_2^{2m}-q^{p} \rangle \oplus \Z[x_1, x_2, q^{\pm}]/\langle x_1^n- q^{k}, x_2^{2m}-q^{p} \rangle.\end{align*}

\item 
If  the $e^{\frac{2\pi i p}{2m}}$ in \eqref{alpha_matrix} is  $\pm I$, the centralizer $C_{G_n\times 2D_{2m}}(\alpha) = G_n\times 2D_{2m}$. 

\begin{itemize}
\item If $e^{\frac{2\pi i p}{2m}}= I$,
\begin{align*}
K_{\Lambda_{G_n\times 2D_{2m}}(\alpha)} (S^4 )^{\alpha} &\cong K_{\Lambda_{G_n\times 2D_{2m}}(\alpha)} (S^0 ) \cong R(\Lambda_{G_n\times 2D_{2m}}(\alpha))\oplus R(\Lambda_{G_n\times 2D_{2m}}(\alpha))  \\ &\cong R(2D_{2m})[x, q^{\pm}]/\langle x^n- q^{k}  \rangle \oplus  R(2D_{2m})[x', q^{\pm}]/\langle x'^n- q^{k}  \rangle .\end{align*}

\item If $e^{\frac{2\pi i p}{2m}}= -I$,  Applying Lemma \ref{dcld2}, we get \begin{align*}
K_{\Lambda_{G_n\times 2D_{2m}}(\alpha)} (S^4 )^{\alpha} &\cong K_{\Lambda_{G_n\times 2D_{2m}}(\alpha)} (S^0 ) \cong R(\Lambda_{G_n\times 2D_{2m}}(\alpha))\oplus R(\Lambda_{G_n\times 2D_{2m}}(\alpha))  \\ & \cong 
\prod_1^2 R(\Lambda_{G_n}(e^{\frac{2\pi i k}{n}}))\otimes_{\Z[q^{\pm}]} R(\Lambda_{2D_{2m}}(-I)) \\
&\cong \prod_1^2 R(\Lambda_{G_n}(e^{\frac{2\pi i k}{n}}))\otimes_{\Z[q^{\pm}]} \big( R(D_{2n}) [q^{\pm}]\oplus R_{[\widetilde{D_{2n}}_{\rho}]}(D_{2n}) [q^{\pm}] \big) \\ &\cong \prod_1^2 \big(R(D_{2n}) \oplus R_{[\widetilde{D_{2n}}_{\rho}]}(D_{2n}) \big) [x, q^{\pm}] / \langle x^{n}-q^{k} \rangle
\end{align*} where $\rho$ is the sign representation of $\Z/2$.

\end{itemize}

\item For the conjugacy class of $(e^{\frac{2\pi i k}{n}}, \tau)\in G_n\times 2D_{2m}$, 
The centralizer \[C_{G_n\times 2D_{2m}} (e^{\frac{2\pi i k}{n}}, \tau) = G_n \times \langle \tau \rangle \cong G_n\times \Z/4.\] \begin{align*} 
&&K_{\Lambda_{G_n\times 2D_{2m}} (e^{\frac{2\pi i k}{n}}, \tau)} (S^4)^{(e^{\frac{2\pi i k}{n}}, \tau)} \cong 
K_{\Lambda_{G_n\times 2D_{2m}} (e^{\frac{2\pi i k}{n}}, \tau)} (S^0) \\
&\cong &R(\Lambda_{G_n\times 2D_{2m}} (e^{\frac{2\pi i k}{n}}, \tau)) \oplus R(\Lambda_{G_n\times 2D_{2m}} (e^{\frac{2\pi i k}{n}}, \tau)) \\
&\cong &\Z[x_1, x_2, q^{\pm}]/\langle x^n_1-q^k, x_2^4-q \rangle \oplus  \Z[y_1, y_2, q^{\pm}]/\langle y^n_1-q^k, y_2^4-q \rangle.
\end{align*}

\item Then we study the case corresponding to the conjugacy class of $$(e^{\frac{2\pi i k}{n}}, \tau A_{\frac{2\pi i }{2m}})\in G_n\times 2D_{2m}.$$
The centralizer $C_{G_n\times 2D_{2m}} (e^{\frac{2\pi i k}{n}}, \tau A_{\frac{2\pi i }{2m}}) = G_n\times \langle \tau A_{\frac{2\pi i }{2m}} \rangle \cong G_n\times \Z/4.$
Thus, \begin{align*}
&&K_{\Lambda_{G_n\times 2D_{2m}} (e^{\frac{2\pi i k}{n}}, \tau A_{\frac{2\pi i }{2m}}) } (S^4)^{(e^{\frac{2\pi i k}{n}}, \tau A_{\frac{2\pi i }{2m}} )}
\buildrel{(\ast)}\over\cong K_{\Lambda_{G_n}(e^{\frac{2\pi i k}{n}}) \times_{\mathbb{T}} \Lambda_{2D_{2m}} (\tau A_{\frac{2\pi i }{2m}}) } (S^0) \\ &\cong & R(\Lambda_{G_n}(e^{\frac{2\pi i k}{n}}) \times_{\mathbb{T}} \Lambda_{2D_{2m}} (\tau A_{\frac{2\pi i }{2m}})) \oplus R(\Lambda_{G_n}(e^{\frac{2\pi i k}{n}}) \times_{\mathbb{T}} \Lambda_{2D_{2m}} (\tau A_{\frac{2\pi i }{2m}})) \\
&\cong &\Z[x_1, x_2, q^{\pm}]/\langle x_1^n- q^k, x_2^4-q \rangle \oplus \Z[y_1, y_2, q^{\pm}]/\langle y_1^n- q^k, y_2^4-q \rangle.
\end{align*} where the isomorphism $(\ast)$ is by \cite[Theorem 4.3]{atiyah1968_Bott}.

\end{enumerate}

\end{example}

\begin{example}

We compute  $\QR^{\ast}_{G_n\times 2D_{2m}} (S^4)$ in this example.
We take the Real structure $\hat{G}'_n$ and $2\hat{D}_{2m}$ as discussed in Example \ref{Real:G:SU(2)}. From them, we formulate 
a Real structure \[ \widehat{G_n\times 2D_{2m}} := \hat{G}'_n \times_{\Z/2}2\hat{D}_{2m}\] on the product $G'_n \times 2D_{2m}$.
By Example \ref{G_nDihedralS^4}, all the elements in $\pi_0(G_n\git_R \hat{G}'_n)$ are fixed points under the involution; and by Example \ref{2D2nS^4}, all the elements in $\pi_0(2D_{2m}\git_R 2\hat{D}_{2m} )$ are fixed points under the involution. Thus, all the points in $\pi_0 ( G_n\times 2D_{2m}\git_R \widehat{G_n\times 2D_{2m}})$ are fixed points. 

We  compute  the factors of $\QR^{\ast}_{G_n\times 2D_{2m}} (S^4)$ below one by one. We start with  those corresponding to the conjugacy classes 
represented by \[\alpha =( e^{\frac{2\pi i k}{n}},  e^{\frac{2\pi i p}{2m}}) \in G_n\times 2D_{2m}\] 
with $k$, $p\in \Z$.

\begin{enumerate}

\item 
 If $\alpha = I$, by \cite[Theorem 5.1]{atiyah1968_Bott}, \begin{align*}KR^{\ast}_{\Lambda_{G_n\times 2D_{2m}}(I)} (S^4)^{I} &= KR^{\ast}_{\Lambda_{G_n\times 2D_{2m}}(I )} (S^4)   \cong KR^{\ast}_{\Lambda_{G_n\times 2D_{2m}}(I)} (S^0) \\
 & \cong \prod_1^2 KR^{\ast}_{\Lambda_{G_n\times 2D_{2m}}(I)} (\pt) \cong \prod_1^2 KR^{\ast}_{\Lambda_{2D_{2m}}(I)}( \pt)\otimes_{KR^{\ast}_{\mathbb{T}}(\pt)}
 KR^{\ast}_{\Lambda_{G_n }(I)}(\pt) \\
 & \cong \prod_1^2 KR^{\ast}_{2D_{2m}}(\pt)[x, q^{\pm}]/\langle x^n -1\rangle.
\end{align*}

\item If the $e^{\frac{2\pi i p}{2m}}$ in  $\alpha $ is not $\pm I$, 
\begin{align*}
&&KR^{\ast}_{\Lambda_{G_n\times 2D_{2m}}(\alpha)} (S^4 )^{\alpha} \cong KR^{\ast}_{\Lambda_{G_n\times 2D_{2m}}(\alpha)} (S^0 )  \cong  KR^{\ast}_{\Lambda_{G_n\times 2D_{2m}}(\alpha)}(\pt )\oplus  KR^{\ast}_{\Lambda_{G_n\times 2D_{2m}}(\alpha)}(\pt )  \\ 
&\cong &KR^{\ast}(\pt)[x_1, x_2, q^{\pm}]/\langle x_1^n- q^{k}, x_2^{2m}-q^{p} \rangle \oplus KR^{\ast}(\pt)[x_1, x_2, q^{\pm}]/\langle x_1^n- q^{k}, x_2^{2m}-q^{p} \rangle.\end{align*}

\item If the $e^{\frac{2\pi i p}{2m}}$ in $\alpha$ is  $I$, 
\begin{align*}
&& KR^{\ast}_{\Lambda_{G_n\times 2D_{2m}}(\alpha)} (S^4 )^{\alpha} \cong KR^{\ast}_{\Lambda_{G_n\times 2D_{2m}}(\alpha)} (S^0 )
\cong  KR^{\ast}_{\Lambda_{G_n\times 2D_{2m}}(\alpha)} (\pt  ) \oplus  KR^{\ast}_{\Lambda_{G_n\times 2D_{2m}}(\alpha)} (\pt  )  \\ & 
\cong  & KR^{\ast}_{2D_{2m}}(\pt)[x, q^{\pm}]/\langle x^n- q^{k}  \rangle \oplus   KR^{\ast}_{2D_{2m}}(\pt)[x', q^{\pm}]/\langle x'^n- q^{k}  \rangle .\end{align*}

\item If the $e^{\frac{2\pi i p}{2m}}$ in $\alpha$ is  $-I$, applying Corollary \ref{RM:Z2}, we get \begin{align*}
& KR^{\ast}_{\Lambda_{G_n\times 2D_{2m}}(\alpha)} (S^4 )^{\alpha} \cong KR^{\ast}_{\Lambda_{G_n\times 2D_{2m}}(\alpha)} (S^0 ) \\  \cong  &KR^{\ast}_{\Lambda_{G_n\times 2D_{2m}}(\alpha)} (\pt ) \oplus KR^{\ast}_{\Lambda_{G_n\times 2D_{2m}}(\alpha)} (\pt)  \\  \cong & \prod_1^2
KR^{\ast}_{\Lambda_{G_n}(e^{\frac{2\pi i k}{n}})}(\pt)\otimes_{  KR^{\ast}_{\mathbb{T}} (\pt)} KR^{\ast}_{\Lambda_{2D_{2m}}(-1)}(\pt)  \\
\cong  & \prod_{1}^{2} KR^{\ast}_{\Lambda_{G_n}(e^{\frac{2\pi i k}{n}})}(\pt)\otimes_{  KR^{\ast}_{\mathbb{T}} (\pt)}  \bigg( KR^{\ast}_{2D_{2m}}(\pt) [q^{\pm}]\oplus KR^{\ast + \hat{\nu}_{\Lambda^R_{2\hat{D}_{2m}}(-I), sign}}_{2D_{2m}}(\pt)[q^{\pm}] \bigg) \\
\cong &  \prod_{1}^{2}\bigg( KR^{\ast}_{2D_{2m}}(\pt)\oplus KR^{\ast + \hat{\nu}_{\Lambda^R_{2\hat{D}_{2m}}(-I), sign}}_{2D_{2m}}(\pt) \bigg) [x, q^{\pm}]
/ \langle x^n-q^k\rangle 
\end{align*} where $sign$ is the sign representation of $\Z/2$.

\item 
For the conjugacy class of $(e^{\frac{2\pi i k}{n}}, \tau)\in G_n\times 2D_{2m}$, 
\begin{align*} 
& KR^{\ast}_{\Lambda_{G_n\times 2D_{2m}} (e^{\frac{2\pi i k}{n}}, \tau)} (S^4)^{(e^{\frac{2\pi i k}{n}}, \tau)} \cong 
KR^{\ast}_{\Lambda_{G_n\times 2D_{2m}} (e^{\frac{2\pi i k}{n}}, \tau)} (S^0) \\
\cong & \prod_1^2 KR^{\ast}_{\Lambda_{G_n\times 2D_{2m}} (e^{\frac{2\pi i k}{n}}, \tau)} (\pt) \\
\cong & \prod_1^2 KR^{\ast}(\pt)[x_1, x_2, q^{\pm}]/\langle x^n_1-q^k, x_2^4-q \rangle 
\end{align*}

\item For the conjugacy class of $(e^{\frac{2\pi i k}{n}}, \tau r)\in G_n\times 2D_{2m}$, 
\begin{align*}
& KR^{\ast}_{\Lambda_{G_n\times 2D_{2m}} (e^{\frac{2\pi i k}{n}}, \tau r) } (S^4)^{(e^{\frac{2\pi i k}{n}}, \tau r)}
\cong KR^{\ast}_{\Lambda_{G_n}(e^{\frac{2\pi i k}{n}}) \times_{\mathbb{T}} \Lambda_{2D_{2m}} (\tau r) } (S^0) \\ \cong  
& \prod_1^2 KR^{\ast}_{\Lambda_{G_n}(e^{\frac{2\pi i k}{n}}) \times_{\mathbb{T}} \Lambda_{2D_{2m}} (\tau r) } (\pt)  \\
\cong & \prod_1^2 KR^{\ast} (\pt)[x_1, x_2, q^{\pm}]/\langle x_1^n- q^k, x_2^4-q \rangle .
\end{align*}

\end{enumerate}
    
\end{example}

\begin{example}

In this example we deal with the finite subgroup $E_6\times E_7$ of $\mathrm{Spin}(5)$ where $E_6$ is the binary tetrahedral group and $E_7$ is the binary octahedral group, and compute the complex quasi-elliptic cohomology $QEll_{E_6\times E_7}(S^4)$.

First,  for the conjugacy classes $(\alpha, 1)$ where $\alpha$ is a conjugacy class in $E_6$ and $1$ represents the conjugacy classe consisting of itself in $E_7$, we have \begin{align*}
K_{\Lambda_{E_6\times E_7}(\alpha, 1)}(S^4)^{(\alpha, 1)} &\cong K_{\Lambda_{E_6}(\alpha)\times_{\mathbb{T}} \Lambda_{E_7}(1)} (S^4)^{\alpha}\\
&\cong 
K_{\Lambda_{E_6}(\alpha)}(S^0)\otimes RE_7 
\end{align*} Note that $\Lambda_{E_6}(\alpha)\times_{\mathbb{T}} \Lambda_{E_7}(1)\cong \Lambda_{E_6}(\alpha)\times_{\mathbb{T}} (C_{E_7}(1)\times \mathbb{T})
\cong \Lambda_{E_6}(\alpha)\times E_7$. The first factor $K_{\Lambda_{E_6}(\alpha)}(S^0)$ above is the factor in $QEll_{E_6}(S^4)$ corresponding to the conjugacy class $\alpha$, which is computed explicitly in \cite[Example 6.5]{huan2020_v2}.

And for the factors corresponding to the conjugacy classes $(1, \beta)$ where $\beta$ is a conjugacy class in $E_7$, as we discuss in Example \ref{factor_dual_spin5}, $(S^4)^{(1, \beta)}\cong (S^4)^{\beta}$, and \begin{align*}
K_{\Lambda_{E_6\times E_7}( 1, \beta )}(S^4)^{(1, \beta)} &\cong K_{\Lambda_{E_6}(1)\times_{\mathbb{T}} \Lambda_{E_7}(\beta)} (S^4)^{\beta}\\
&\cong 
K_{\Lambda_{E_7}(\beta)}(S^4)^{\beta}\otimes RE_6
\end{align*} where $K_{\Lambda_{E_7}(\beta)}(S^4)^{\beta}$ is the factor of $QEll_{E_7}(S^4)$ corresponding to the conjugacy class represented by $\beta$, which are all computed explicitly in \cite[Example 6.6]{huan2020_v2}.

Then, we think about the case corresponding to the conjugacy classes of the form $(\alpha, -1)$. By direct computation, \[(S^4)^{(\alpha, -1)} = (S^4)^{-\alpha}.\] We provide the conjugacy class of each $-\alpha$ and each fixed point space $(S^4)^{-\alpha}$ in Figure \ref{prod_ex_-1_E7}, where $a = \frac{1}{2} ( 1- i- j- k)$. 
\begin{figure} 
\begin{center}

\begin{tabular}{|c | c | c | c |c |} 
 \hline Representatives $\alpha$  &Centralizers &  & Conjugacy classes &   \\ 
 of Conjugacy classes & $C_{E_6}(\alpha)$ & $(S^4)^{\alpha}$ & of $-\alpha $ &$(S^4)^{- \alpha}$
 \\ \hline
$1$ & $ E_6$ & $S^4$  & $-1$ & $S^0$\\
$-1$ & $E_6$ & $ S^0$ & $1 $ & $S^4$\\
$i$ & $ \mathbb{Z}/4$ & $S^0$ & $ i $ & $S^0$\\
$a$ & $ \mathbb{Z}/6$ & $S^0$ & $-a $ & $S^0$\\
$-a$ & $ \mathbb{Z}/6$ & $S^0$ & $a $ & $S^0$\\
$a^2$ & $ \mathbb{Z}/6$  & $S^0$ & $-a^2$ & $S^0$\\
$-a^2$ & $ \mathbb{Z}/6$ & $S^0$ & $ a^2$ & $S^0$\\
 \hline
\end{tabular} \caption{Centralizers and fixed point spaces of $(\alpha, -1) \in E_6\times E_7$}
\end{center} \label{prod_ex_-1_E7}\end{figure}

In addition, we have the short exact sequence \begin{equation}
    1\longrightarrow \Z/2 \longrightarrow \Lambda_{E_6\times E_7}(\alpha, -1 )  \longrightarrow \Lambda_{E_6\times T_7}(\alpha, 1 ) \longrightarrow 1 
\end{equation} Note that the image of $\Z/2 \cong \{ (1, \pm 1) \}$ is contained in the center of $\Lambda_{E_6\times E_7}(\alpha, -1 )$, thus, we can apply Lemma \ref{dcld2}.

For $\alpha\neq -1$, the action of $\Z/2$ on $(S^4)^{-\alpha}$ is trivial. So we have
\begin{align*}
    K_{\Lambda_{E_6\times E_7}(\alpha, -1 ) } (S^4)^{-\alpha} &\cong K_{\Lambda_{E_6\times E_7}(\alpha, -1 ) } (\pt) \oplus K_{\Lambda_{E_6\times E_7}(\alpha, -1 ) } (\pt) \\ 
    & \cong \prod^2_1  R(\Lambda_{E_6}(\alpha)) \otimes  ( R (T_7) \oplus R_{[\widetilde{(T_7)}_{\rho}]} (T_7) ) 
\end{align*}  where $\rho$ is the sign representation of $\Z/2$.
Applying the computation in \cite[Example 6.5, Example 6.6]{huan2020_v2}, we  list the result of the computation of $ K_{\Lambda_{E_6\times E_7}(\alpha, -1 ) } (S^4)^{-\alpha}$ ($\alpha\neq -1$) below.
\begin{center}
    \begin{tabular}{|c|c|} \hline
    Representatives   $\alpha$   & The factor  \\
    of conjugacy classes  & $ K_{\Lambda_{E_6\times E_7}(\alpha, -1 ) } (S^4)^{-\alpha}$ \\
      \hline
         $1$    & $\prod\limits^2_1R(E_6)\otimes ( R (T_7) \oplus R_{[\widetilde{(T_7)}_{\rho}]} (T_7) ) [q^{\pm}]$  \\
$i$  & $\prod\limits^2_1( R (T_7) \oplus R_{[\widetilde{(T_7)}_{\rho}]} (T_7) ) [x, q^{\pm}]/\langle x^4-q\rangle $ \\
$a$   &$\prod\limits^2_1( R (T_7) \oplus R_{[\widetilde{(T_7)}_{\rho}]} (T_7) ) [x, q^{\pm}]/\langle x^6-q\rangle $  \\
$-a$    & $\prod\limits^2_1( R (T_7) \oplus R_{[\widetilde{(T_7)}_{\rho}]} (T_7) ) [x, q^{\pm}]/\langle x^6-q^4\rangle $ \\
$a^2$    & $\prod\limits^2_1( R (T_7) \oplus R_{[\widetilde{(T_7)}_{\rho}]} (T_7) ) [x, q^{\pm}]/\langle x^6-q^2\rangle $ \\
$-a^2$    & $\prod\limits^2_1( R (T_7) \oplus R_{[\widetilde{(T_7)}_{\rho}]} (T_7) ) [x, q^{\pm}]/\langle x^6-q^5\rangle $ \\
\hline
    \end{tabular}
\end{center}
Then we discuss the case that $\alpha= -1$. 

\begin{align*} K_{\Lambda_{E_6\times E_7} (-1, -1)} (S^4)
&\cong K_{\Lambda_{E_6\times E_7} (-1, -1)} (S^0) \cong K_{\Lambda_{E_6\times E_7} (-1, -1)} (\pt) \oplus K_{\Lambda_{E_6\times E_7} (-1, -1)} (\pt) \\ & \cong \prod\limits_1^2 K_{\Lambda_{E_6} (-1)} (\pt) \otimes_{\Z[q^{\pm}]} K_{\Lambda_{ E_7} ( -1)} (\pt) \\ &\cong   \prod\limits_1^2 \bigg(R (T_6) \oplus R_{[\widetilde{(T_6)}_{\rho}]} (T_6) \bigg) \otimes \bigg( R (T_7) \oplus R_{[\widetilde{(T_7)}_{\rho}]} (T_7)  \bigg) [q^{\pm}]  
\end{align*} where  $\rho$ is the sign representation of $\Z/2$.

\bigskip

Next, we deal with the factor corresponding to the conjugacy classes \[( \alpha, i ) \]
By direct computation, we can get 
\begin{center}
    \begin{tabular}{|c|c |c|}
     \hline Representatives $\alpha$   & The fixed point space   & $\Lambda_{E_6\times E_7} ( \alpha, i) $\\ 
 of Conjugacy classes   & $(S^4)^{(\alpha, i)}$  & is isomorphic to
 \\ \hline
         $1$ &  $S^0$ & $E_6\times \Lambda_{\Z/8}(2) $\\
         $-1$ &  $S^0$ & $\Lambda_{E_6}(-1)\times_{\mathbb{T}} \Lambda_{\Z/8}(2) $ \\
$i$ & $\{(a, b, 0, 0)\in\R^4\}\cup\{\infty\}\cong S^2$ & $\Lambda_{\Z/4}(1)\times_{\mathbb{T}} \Lambda_{\Z/8}(2) $   \\
$a$ & $S^0$ & $\Lambda_{\Z/6}(1)\times_{\mathbb{T}} \Lambda_{\Z/8}(2) $\\
$-a $ &  $S^0$  & $\Lambda_{\Z/6}(4)\times_{\mathbb{T}} \Lambda_{\Z/8}(2) $ \\
$a^2$ & $S^0$ &  $\Lambda_{\Z/6}(2)\times_{\mathbb{T}} \Lambda_{\Z/8}(2) $ \\
$-a^2$ & $S^0$  & $\Lambda_{\Z/6}(5)\times_{\mathbb{T}} \Lambda_{\Z/8}(2) $ \\
\hline 
    \end{tabular}
\end{center}

For $\alpha = -1$, \begin{align*}
    K_{\Lambda_{E_6\times E_7}(\alpha, i)} (S^4)^{(\alpha, i)} &\cong 
    K_{\Lambda_{E_6}(-1)\times_{\mathbb{T}} \Lambda_{\Z/8}(2)} (S^0)\\
&    \cong \prod\limits^2_1  R(\Lambda_{E_6}(-1) )\otimes_{\Z[q^{\pm}]} R (\Lambda_{\Z/8}(2)) \\
& \cong \prod\limits^2_1 \bigg(R (T_6) \oplus R_{[\widetilde{(T_6)}_{\rho}]} (T_6) \bigg) [ x, q^{\pm}]/\langle x^8-q^2\rangle
\end{align*}  where  $\rho$ is the sign representation of $\Z/2$.

We list the computation of the other cases $K_{\Lambda_{E_6\times E_7}(\alpha, i)} (S^4)^{(\alpha, i)}$ ($\alpha \neq -1$) below. 
\begin{center}
    \begin{tabular}{|c|c |c|}
     \hline Representatives $\alpha$   & The factor   \\ 
 of Conjugacy classes   & $K_{\Lambda_{E_6\times E_7}(\alpha, i)} (S^4)^{(\alpha, i)}$  
 \\ \hline
         $1$ &  $\prod\limits_1^2 R(E_6)[x, q^{\pm}]/\langle x^8-q^2\rangle$\\
$i$ &  $\prod\limits_1^2 \Z[x_1, x_2, q^{\pm}]/\langle x_1^4-q, x_2^8-q^2 \rangle $ \\
$a$ & $ \prod\limits_1^2 \Z[x_1, x_2, q^{\pm}]/\langle x_1^6-q, x_2^8-q^2 \rangle $\\
$-a $ &  $\prod\limits_1^2 \Z[x_1, x_2, q^{\pm}]/\langle x_1^6-q^4, x_2^8-q^2 \rangle $ \\
$a^2$ & $\prod\limits_1^2 \Z[x_1, x_2, q^{\pm}]/\langle x_1^6-q^2, x_2^8-q^2 \rangle $ \\
$-a^2$ & $\prod\limits_1^2 \Z[x_1, x_2, q^{\pm}]/\langle x_1^6-q^5, x_2^8-q^2 \rangle $ \\
\hline 
    \end{tabular}
\end{center}

Next we deal with the conjugacy classes \[(\alpha, s= \frac{1}{2}(1+i+j+k) ),\]
and compute the factors $K_{\Lambda_{E_6\times E_7}(\alpha, i)} (S^4)^{(\alpha, s)}$.

By direct computation, we can get 
\begin{center}
    \begin{tabular}{|c|c |c|}\hline 
      $\alpha $   & $(S^4)^{(\alpha, s)}$ & $\Lambda_{E_6\times E_7} ( \alpha, s) $ \\
           \hline
         $1$ &  $S^0$ & $E_6\times \Lambda_{\Z/6}(1) $\\
         $-1$ &  $S^0$ & $\Lambda_{E_6}(-1)\times_{\mathbb{T}} \Lambda_{\Z/6}(1) $ \\
$i$ & $S^0$ & $\Lambda_{\Z/4}(1)\times_{\mathbb{T}} \Lambda_{\Z/6}(1) $   \\
$a$ & $\{(0,-c-d, c, d)\in \R^4\}\cap \{\infty\} \cong S^2$ & $\Lambda_{\Z/6}(1)\times_{\mathbb{T}} \Lambda_{\Z/6}(1) $\\
$-a$ &  $S^0$  & $\Lambda_{\Z/6}(4)\times_{\mathbb{T}} \Lambda_{\Z/6}(1) $ \\
$a^2$ & $S^0$ &  $\Lambda_{\Z/6}(2)\times_{\mathbb{T}} \Lambda_{\Z/6}(1) $ \\
$-a^2$ & $S^0$  & $\Lambda_{\Z/6}(5)\times_{\mathbb{T}} \Lambda_{\Z/6}(1) $ \\
\hline 
    \end{tabular}
\end{center}

We list the computation of $K_{\Lambda_{E_6\times E_7}(\alpha, s)} (S^4)^{(\alpha, s)}$  below.

\begin{center}
    \begin{tabular}{|c|c |c|}
     \hline Representatives $\alpha$   & The factor   \\ 
 of Conjugacy classes   & $K_{\Lambda_{E_6\times E_7}(\alpha, s)} (S^4)^{(\alpha, s)}$ 
 \\ \hline
         $1$ &  $\prod\limits_1^2 R(E_6)[x, q^{\pm}]/\langle x^6-q\rangle$\\
                  $-1$ &  $\prod\limits_1^2 \bigg(R (T_6) \oplus R_{[\widetilde{(T_6)}_{\rho}]} (T_6) \bigg) [x, q^{\pm}]/\langle x^6-q\rangle$\\
$i$ &  $\prod\limits_1^2 \Z[x_1, x_2, q^{\pm}]/\langle x_1^4-q, x_2^6-q \rangle $ \\
$a$ & $ \prod\limits_1^2 \Z[x_1, x_2, q^{\pm}]/\langle x_1^6-q, x_2^6-q \rangle $\\
$-a $ &  $\prod\limits_1^2 \Z[x_1, x_2, q^{\pm}]/\langle x_1^6-q^4, x_2^6-q \rangle $ \\
$a^2$ & $\prod\limits_1^2 \Z[x_1, x_2, q^{\pm}]/\langle x_1^6-q^2, x_2^6-q \rangle $ \\
$-a^2$ & $\prod\limits_1^2 \Z[x_1, x_2, q^{\pm}]/\langle x_1^6-q^5, x_2^6-q \rangle $ \\
\hline 
    \end{tabular}
\end{center}

Next we deal with the conjugacy classes \[(\alpha, -s= -\frac{1}{2}(1+i+j+k) ).\]

By direct computation, we can get 
\begin{center}
    \begin{tabular}{|c|c |c|}\hline 
      $\alpha $   & $(S^4)^{(\alpha, -s)}$ & $\Lambda_{E_6\times E_7} ( \alpha, -s) $ \\
           \hline
         $1$ &  $S^0$ & $E_6\times \Lambda_{\Z/6}(4) $\\
         $-1$ &  $S^0$ & $\Lambda_{E_6}(-1)\times_{\mathbb{T}} \Lambda_{\Z/6}(4) $ \\
$i$ & $S^0$ & $\Lambda_{\Z/4}(1)\times_{\mathbb{T}} \Lambda_{\Z/6}(4) $   \\
$a$ & $S^0$ & $\Lambda_{\Z/6}(1)\times_{\mathbb{T}} \Lambda_{\Z/6}(4) $\\
$-a$ &  $S^0$  & $\Lambda_{\Z/6}(4)\times_{\mathbb{T}} \Lambda_{\Z/6}(4) $ \\
$a^2$ & $S^0$ &  $\Lambda_{\Z/6}(2)\times_{\mathbb{T}} \Lambda_{\Z/6}(4) $ \\
$-a^2$ & $S^0$  & $\Lambda_{\Z/6}(5)\times_{\mathbb{T}} \Lambda_{\Z/6}(4) $ \\
\hline 
    \end{tabular}
\end{center}

We list the computation of $K_{\Lambda_{E_6\times E_7}(\alpha, -s)} (S^4)^{(\alpha, -s)}$  below.

\begin{center}
    \begin{tabular}{|c|c |c|}
     \hline Representatives $\alpha$   & The factor   \\ 
 of Conjugacy classes   & $K_{\Lambda_{E_6\times E_7}(\alpha, -s)} (S^4)^{(\alpha, -s)}$ 
 \\ \hline
         $1$ &  $\prod\limits_1^2 R(E_6)[x, q^{\pm}]/\langle x^6-q^4\rangle$\\
                  $-1$ &  $\prod\limits_1^2 \bigg(R (T_6) \oplus R_{[\widetilde{(T_6)}_{\rho}]} (T_6) \bigg) [x, q^{\pm}]/\langle x^6-q^4\rangle$\\
$i$ &  $\prod\limits_1^2 \Z[x_1, x_2, q^{\pm}]/\langle x_1^4-q, x_2^6-q^4 \rangle $ \\
$a$ & $ \prod\limits_1^2 \Z[x_1, x_2, q^{\pm}]/\langle x_1^6-q, x_2^6-q^4 \rangle $\\
$-a $ &  $\prod\limits_1^2 \Z[x_1, x_2, q^{\pm}]/\langle x_1^6-q^4, x_2^6-q^4 \rangle $ \\
$a^2$ & $\prod\limits_1^2 \Z[x_1, x_2, q^{\pm}]/\langle x_1^6-q^2, x_2^6-q^4 \rangle $ \\
$-a^2$ & $\prod\limits_1^2 \Z[x_1, x_2, q^{\pm}]/\langle x_1^6-q^5, x_2^6-q^4 \rangle $ \\
\hline 
    \end{tabular}
\end{center}

Next we deal with the conjugacy classes \[(\alpha, r=\frac{1}{\sqrt{2}} (i+j) ),\] and compute the factors
$K_{\Lambda_{E_6\times E_7}(\alpha, r)} (S^4)^{(\alpha, r)}$.

By direct computation, we can get 
\begin{center}
    \begin{tabular}{|c|c |c|}\hline 
      $\alpha $   & $(S^4)^{(\alpha, r)}$ & $\Lambda_{E_6\times E_7} ( \alpha, r) $ \\
           \hline
         $1$ &  $S^0$ & $E_6\times \Lambda_{\Z/4}(1) $\\
         $-1$ &  $S^0$ & $\Lambda_{E_6}(-1)\times_{\mathbb{T}} \Lambda_{\Z/4}(1) $ \\
$i$ & $S^0$ & $\Lambda_{\Z/4}(1)\times_{\mathbb{T}} \Lambda_{\Z/4}(1) $   \\
$a$ & $S^0$ & $\Lambda_{\Z/6}(1)\times_{\mathbb{T}} \Lambda_{\Z/4}(1) $\\
$-a$ &  $S^0$  & $\Lambda_{\Z/6}(4)\times_{\mathbb{T}} \Lambda_{\Z/4}(1) $ \\
$a^2$ & $S^0$ &  $\Lambda_{\Z/6}(2)\times_{\mathbb{T}} \Lambda_{\Z/4}(1) $ \\
$-a^2$ & $S^0$  & $\Lambda_{\Z/6}(5)\times_{\mathbb{T}} \Lambda_{\Z/4}(1) $ \\
\hline 
    \end{tabular}
\end{center}

We list the computation of $K_{\Lambda_{E_6\times E_7}(\alpha, r)} (S^4)^{(\alpha, r)}$  below.

\begin{center}
    \begin{tabular}{|c|c |c|}
     \hline Representatives $\alpha$   & The factor   \\ 
 of Conjugacy classes   & $K_{\Lambda_{E_6\times E_7}(\alpha, r)} (S^4)^{(\alpha, r)}$ 
 \\ \hline
         $1$ &  $\prod\limits_1^2 R(E_6)[x, q^{\pm}]/\langle x^4-q\rangle$\\
                  $-1$ &  $\prod\limits_1^2 \bigg(R (T_6) \oplus R_{[\widetilde{(T_6)}_{\rho}]} (T_6) \bigg) [x, q^{\pm}]/\langle x^4-q\rangle$\\
$i$ &  $\prod\limits_1^2 \Z[x_1, x_2, q^{\pm}]/\langle x_1^4-q, x_2^4-q \rangle $ \\
$a$ & $ \prod\limits_1^2 \Z[x_1, x_2, q^{\pm}]/\langle x_1^6-q, x_2^4-q \rangle $\\
$-a $ &  $\prod\limits_1^2 \Z[x_1, x_2, q^{\pm}]/\langle x_1^6-q^4, x_2^4-q \rangle $ \\
$a^2$ & $\prod\limits_1^2 \Z[x_1, x_2, q^{\pm}]/\langle x_1^6-q^2, x_2^4-q \rangle $ \\
$-a^2$ & $\prod\limits_1^2 \Z[x_1, x_2, q^{\pm}]/\langle x_1^6-q^5, x_2^4-q \rangle $ \\
\hline 
    \end{tabular}
\end{center}

Next we deal with the conjugacy classes \[(\alpha, t= \frac{1}{\sqrt{2}}(1+i) ),\] and compute the factors
$K_{\Lambda_{E_6\times E_7}(\alpha, t)} (S^4)^{(\alpha, t)}$.

By direct computation, we can get 
\begin{center}
    \begin{tabular}{|c|c |c|}\hline 
      $\alpha $   & $(S^4)^{(\alpha, t)}$ & $\Lambda_{E_6\times E_7} ( \alpha, t) $ \\
           \hline
         $1$ &  $S^0$ & $E_6\times \Lambda_{\Z/8}(1) $\\
         $-1$ &  $S^0$ & $\Lambda_{E_6}(-1)\times_{\mathbb{T}} \Lambda_{\Z/8}(1) $ \\
$i$ & $S^0$ & $\Lambda_{\Z/4}(1)\times_{\mathbb{T}} \Lambda_{\Z/8}(1) $   \\
$a$ & $S^0$ & $\Lambda_{\Z/6}(1)\times_{\mathbb{T}} \Lambda_{\Z/8}(1) $\\
$-a$ &  $S^0$  & $\Lambda_{\Z/6}(4)\times_{\mathbb{T}} \Lambda_{\Z/8}(1) $ \\
$a^2$ & $S^0$ &  $\Lambda_{\Z/6}(2)\times_{\mathbb{T}} \Lambda_{\Z/8}(1) $ \\
$-a^2$ & $S^0$  & $\Lambda_{\Z/6}(5)\times_{\mathbb{T}} \Lambda_{\Z/8}(1) $ \\
\hline 
    \end{tabular}
\end{center}

We list the computation of $K_{\Lambda_{E_6\times E_7}(\alpha, t)} (S^4)^{(\alpha, t)}$  below.

\begin{center}
    \begin{tabular}{|c|c |c|}
     \hline Representatives $\alpha$   & The factor   \\ 
 of Conjugacy classes   & $K_{\Lambda_{E_6\times E_7}(\alpha, t)} (S^4)^{(\alpha, t)}$ 
 \\ \hline
         $1$ &  $\prod\limits_1^2 R(E_6)[x, q^{\pm}]/\langle x^8-q\rangle$\\
                  $-1$ &  $\prod\limits_1^2 \bigg(R (T_6) \oplus R_{[\widetilde{(T_6)}_{\rho}]} (T_6) \bigg) [x, q^{\pm}]/\langle x^8-q\rangle$\\
$i$ &  $\prod\limits_1^2 \Z[x_1, x_2, q^{\pm}]/\langle x_1^4-q, x_2^8-q \rangle $ \\
$a$ & $ \prod\limits_1^2 \Z[x_1, x_2, q^{\pm}]/\langle x_1^6-q, x_2^8-q \rangle $\\
$-a $ &  $\prod\limits_1^2 \Z[x_1, x_2, q^{\pm}]/\langle x_1^6-q^4, x_2^8-q \rangle $ \\
$a^2$ & $\prod\limits_1^2 \Z[x_1, x_2, q^{\pm}]/\langle x_1^6-q^2, x_2^8-q \rangle $ \\
$-a^2$ & $\prod\limits_1^2 \Z[x_1, x_2, q^{\pm}]/\langle x_1^6-q^5, x_2^8-q \rangle $ \\
\hline 
    \end{tabular}
\end{center}

Next we deal with the conjugacy classes \[(\alpha, -t= -\frac{1}{\sqrt{2}}(1+i) ),\] and compute the factors
$K_{\Lambda_{E_6\times E_7}(\alpha, -t)} (S^4)^{(\alpha, -t)}$.

By direct computation, we can get

\begin{center}
    \begin{tabular}{|c|c |c|}\hline 
      $\alpha $   & $(S^4)^{(\alpha, -t)}$ & $\Lambda_{E_6\times E_7} ( \alpha, -t) $ \\
           \hline
         $1$ &  $S^0$ & $E_6\times \Lambda_{\Z/8}(5) $\\
         $-1$ &  $S^0$ & $\Lambda_{E_6}(-1)\times_{\mathbb{T}} \Lambda_{\Z/8}(5) $ \\
$i$ & $S^0$ & $\Lambda_{\Z/4}(1)\times_{\mathbb{T}} \Lambda_{\Z/8}(5) $   \\
$a$ & $S^0$ & $\Lambda_{\Z/6}(1)\times_{\mathbb{T}} \Lambda_{\Z/8}(5) $\\
$-a$ &  $S^0$  & $\Lambda_{\Z/6}(4)\times_{\mathbb{T}} \Lambda_{\Z/8}(5) $ \\
$a^2$ & $S^0$ &  $\Lambda_{\Z/6}(2)\times_{\mathbb{T}} \Lambda_{\Z/8}(5) $ \\
$-a^2$ & $S^0$  & $\Lambda_{\Z/6}(5)\times_{\mathbb{T}} \Lambda_{\Z/8}(5) $ \\
\hline 
    \end{tabular}
\end{center}

We list the computation of $K_{\Lambda_{E_6\times E_7}(\alpha, -t)} (S^4)^{(\alpha, -t)}$  below.

\begin{center}
    \begin{tabular}{|c|c |c|}
     \hline Representatives $\alpha$   & The factor   \\ 
 of Conjugacy classes   & $K_{\Lambda_{E_6\times E_7}(\alpha, -t)} (S^4)^{(\alpha, -t)}$ 
 \\ \hline
         $1$ &  $\prod\limits_1^2 R(E_6)[x, q^{\pm}]/\langle x^8-q^5\rangle$\\
                  $-1$ &  $\prod\limits_1^2 \bigg(R (T_6) \oplus R_{[\widetilde{(T_6)}_{\rho}]} (T_6) \bigg) [x, q^{\pm}]/\langle x^8-q^5\rangle$\\
$i$ &  $\prod\limits_1^2 \Z[x_1, x_2, q^{\pm}]/\langle x_1^4-q, x_2^8-q^5 \rangle $ \\
$a$ & $ \prod\limits_1^2 \Z[x_1, x_2, q^{\pm}]/\langle x_1^6-q, x_2^8-q^5 \rangle $\\
$-a $ &  $\prod\limits_1^2 \Z[x_1, x_2, q^{\pm}]/\langle x_1^6-q^4, x_2^8-q^5 \rangle $ \\
$a^2$ & $\prod\limits_1^2 \Z[x_1, x_2, q^{\pm}]/\langle x_1^6-q^2, x_2^8-q^5 \rangle $ \\
$-a^2$ & $\prod\limits_1^2 \Z[x_1, x_2, q^{\pm}]/\langle x_1^6-q^5, x_2^8-q^5 \rangle $ \\
\hline 
    \end{tabular}
\end{center}

\end{example}

\begin{example}
    In this example we  compute the Real quasi-elliptic cohomology $\QR^{\ast}_{E_6\times E_7}(S^4)$. We take the Real structure of $E_6$ given in Example \ref{E6_S4_Real} and the Real structure of $E_7$ given in Example \ref{E7_S4_Real}.
    Note that, an element $(h, k) \in \widehat{E_6\times E_7}  $ is a fixed point under the reflection if and only if $h$  is a fixed point in $E_6$ and $k$ is a fixed point in $E_7$; in addition, an element $(h, k) \in \widehat{E_6\times E_7}  $ is a free point under the reflection if and only if $h$  is a free point in $E_6$ and $k$ is a free  point in $E_7$. 

    As shown in Example \ref{E7_S4_Real}, all the representatives of the conjugacy classes in $E_7$, as given in Figure \ref{E7:conj:c:fps}, are fixed points under the reflection. Thus, all the representatives of the conjugacy classes in $E_6\times E_7$ are fixed points and they are represented by the elements $(h, k ) \in E_6\times E_7$ with $h$ a fixed point. Then, by Figure \ref{E6:conj}, $h$ can only be $1$, $-1$ and $j$.

We first deal with the conjugacy classes \[(1, \beta ),\] where $\beta$ goes over all the representatives of the conjugacy classes in $E_7$ and compute the factors
$KR^{\ast}_{\Lambda_{E_6\times E_7}(1, \beta)} (S^4)^{(1, \beta)}$.

Applying Proposition \ref{prod_QEll}, we list the computation of $KR^{\ast}_{\Lambda_{E_6\times E_7}(1, \beta)} (S^4)^{(1, \beta)}$  below.

\begin{center}
    \begin{tabular}{|c|c |c|}
     \hline Representatives $\beta$   & The factor   \\ 
 of Conjugacy classes   & $KR^{\ast}_{\Lambda_{E_6\times E_7}(1, \beta)} (S^4)^{(1, \beta)}$ 
 \\ \hline
         $1$ &  $\prod\limits_1^2  KR^{\ast}_{E_6}(\pt) \otimes_{KR^{\ast}(\pt)} KR^{\ast}_{E_7}(\pt) [q^{\pm}] $\\
        $-1$ &  $\prod\limits_1^2  KR^{\ast}_{E_6}(\pt)\otimes_{KR^{\ast}(\pt)} \bigg( KR^{\ast}_{T_7}(\pt)  \oplus KR^{\ast + \hat{\nu}_{\Lambda^R_{\hat{E_7'}(-I)}, sign}}_{T_7}(\pt) \bigg) [q^{\pm}] $ \\
$j $ & $ \prod\limits_1^2   KR^{\ast}_{E_6}(\pt)[x, q^{\pm }] /\langle x^8-q^2\rangle  $ \\
$\theta $ &  $\prod\limits_1^2   KR^{\ast}_{E_6}(\pt)[x, q^{\pm }] /\langle x^6-q\rangle    $ \\
$- \theta$ & $\prod\limits_1^2   KR^{\ast}_{E_6}(\pt)[x, q^{\pm }] /\langle x^6-q^4\rangle   $ \\
$r$ & $\prod\limits_1^2   KR^{\ast}_{E_6}(\pt)[x, q^{\pm }] /\langle x^4-q\rangle    $ \\
$t$ &  $\prod\limits_1^2   KR^{\ast}_{E_6}(\pt)[x, q^{\pm }] /\langle x^8-q\rangle   $ \\
$-t$ &  $\prod\limits_1^2   KR^{\ast}_{E_6}(\pt)[x, q^{\pm }] /\langle x^8-q^5\rangle $   \\
\hline 
    \end{tabular}
\end{center}

Next, 
applying Proposition \ref{prod_QEll}, we list the computation of $KR^{\ast}_{\Lambda_{E_6\times E_7}(-1, \beta)} (S^4)^{(-1, \beta)}$  below.

\begin{center}
    \begin{tabular}{|c|c |c|}
     \hline Representatives $\beta$   & The factor   \\ 
 of Conjugacy classes   & $KR^{\ast}_{\Lambda_{E_6\times E_7}(-1, \beta)} (S^4)^{(-1, \beta)}$ 
 \\ \hline
         $1$ &  $\prod\limits_1^2  \bigg(  KR^{\ast}_{T_{6}}(\pt) \oplus KR^{\ast + \hat{\nu}_{\Lambda^R_{\hat{E_6'}(-I)}, sign}}_{T_{6}}(\pt)  \bigg)\otimes_{KR^{\ast}(\pt)} KR^{\ast}_{E_7}(\pt) [q^{\pm}] $\\
        $-1$ &  $\prod\limits_1^2  \bigg(  KR^{\ast}_{T_{6}}(\pt) \oplus KR^{\ast + \hat{\nu}_{\Lambda^R_{\hat{E_6'}(-I)}, sign}}_{T_{6}}(\pt)  \bigg)$ \\ & $\otimes_{KR^{\ast}(\pt)} \bigg( KR^{\ast}_{T_7}(\pt)  \oplus KR^{\ast + \hat{\nu}_{\Lambda^R_{\hat{E_7'}(-I)}, sign}}_{T_7}(\pt) \bigg) [q^{\pm}] $ \\
$j $ & $ \prod\limits_1^2   \bigg(  KR^{\ast}_{T_{6}}(\pt) \oplus KR^{\ast + \hat{\nu}_{\Lambda^R_{\hat{E_6'}(-I)}, sign}}_{T_{6}}(\pt)  \bigg)[x, q^{\pm }] /\langle x^8-q^2\rangle  $ \\
$\theta $ &  $\prod\limits_1^2   \bigg(  KR^{\ast}_{T_{6}}(\pt) \oplus KR^{\ast + \hat{\nu}_{\Lambda^R_{\hat{E_6'}(-I)}, sign}}_{T_{6}}(\pt)  \bigg)[x, q^{\pm }] /\langle x^6-q\rangle    $ \\
$- \theta$ & $\prod\limits_1^2   \bigg(  KR^{\ast}_{T_{6}}(\pt) \oplus KR^{\ast + \hat{\nu}_{\Lambda^R_{\hat{E_6'}(-I)}, sign}}_{T_{6}}(\pt)  \bigg)[x, q^{\pm }] /\langle x^6-q^4\rangle   $ \\
$r$ & $\prod\limits_1^2   \bigg(  KR^{\ast}_{T_{6}}(\pt) \oplus KR^{\ast + \hat{\nu}_{\Lambda^R_{\hat{E_6'}(-I)}, sign}}_{T_{6}}(\pt)  \bigg)[x, q^{\pm }] /\langle x^4-q\rangle    $ \\
$t$ &  $\prod\limits_1^2   \bigg(  KR^{\ast}_{T_{6}}(\pt) \oplus KR^{\ast + \hat{\nu}_{\Lambda^R_{\hat{E_6'}(-I)}, sign}}_{T_{6}}(\pt)  \bigg)[x, q^{\pm }] /\langle x^8-q\rangle   $ \\
$-t$ &  $\prod\limits_1^2   \bigg(  KR^{\ast}_{T_{6}}(\pt) \oplus KR^{\ast + \hat{\nu}_{\Lambda^R_{\hat{E_6'}(-I)}, sign}}_{T_{6}}(\pt)  \bigg)[x, q^{\pm }] /\langle x^8-q^5\rangle $   \\
\hline 
    \end{tabular}
    
\end{center}

    In addition, we list computation of $KR^{\ast}_{\Lambda_{E_6\times E_7}(j, \beta)} (S^4)^{(j, \beta)}$ below.

\begin{center}
    \begin{tabular}{|c|c |c|}
     \hline Representatives $\beta$   & The factor   \\ 
 of Conjugacy classes   & $KR^{\ast}_{\Lambda_{E_6\times E_7}(j, \beta)} (S^4)^{(j, \beta)}$ 
 \\ \hline
         $1$ &  $\prod\limits_1^2  KR^{\ast}_{E_7}(\pt) [y, q^{\pm}] /\langle y^4- q\rangle$\\
        $-1$ &  $\prod\limits_1^2   \bigg( KR^{\ast}_{T_7}(\pt)  \oplus KR^{\ast + \hat{\nu}_{\Lambda^R_{\hat{E_7'}(-I)}, sign}}_{T_7}(\pt) \bigg) [y, q^{\pm}] /\langle y^4- q\rangle$ \\
$j $ & $ \prod\limits_1^2   KR^{\ast}(\pt)[x, y, q^{\pm }] /\langle y^4-q, x^8-q^2\rangle  $ \\
$\theta $ &  $\prod\limits_1^2   KR^{\ast}(\pt)[x, y, q^{\pm }] /\langle y^4-q,  x^6-q\rangle    $ \\
$- \theta$ & $\prod\limits_1^2   KR^{\ast}(\pt)[x, y, q^{\pm }]  /\langle y^4-q, x^6-q^4\rangle   $ \\
$r$ & $\prod\limits_1^2   KR^{\ast}(\pt)[x, y, q^{\pm }]  /\langle y^4-q,  x^4-q\rangle    $ \\
$t$ &  $\prod\limits_1^2   KR^{\ast}(\pt)[x, y, q^{\pm }] /\langle y^4-q,  x^8-q\rangle   $ \\
$-t$ &  $\prod\limits_1^2   KR^{\ast}(\pt)[x, y, q^{\pm }] /\langle y^4-q,  x^8-q^5\rangle $   \\
\hline 
    \end{tabular}
\end{center}

\end{example}

\newpage

\appendix

\section{Corollaries of {\'A}ngel-G{\'o}mez-Uribe Decomposition Formula} \label{Cor:decomp:dcl}

In this section, we prove some corollaries of \cite[Theorem 3.6, Corollary 3.7]{ngel2017EquivariantCB}.
They all apply to compact Lie groups.

\begin{lemma} 
Let $Q$ and $G$ be compact Lie groups. And we have a short exact sequence \[ 1 \longrightarrow \Z/2  \buildrel{l}\over\longrightarrow G \buildrel{\pi}\over\longrightarrow Q \longrightarrow 1 \] and $l(A)$ is contained in the center of $G$. Let $X$ be a $G$-space with $l(\Z/2)$ acting on it trivially.
Then, we have the isomorphism
\[K^*_{G}(X)\cong K^*_{Q}(X)\oplus K^{[\tilde{Q}_{sign}]+*}_{Q}(X)\]
\label{dcl}
\end{lemma}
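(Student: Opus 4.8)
The plan is to apply the {\'A}ngel--G{\'o}mez--Uribe decomposition formula \cite[Theorem 3.6, Corollary 3.7]{ngel2017EquivariantCB} directly to the central subgroup $A := l(\Z/2) \subseteq G$. That formula decomposes $K^*_G(X)$ for a (compact) abelian normal subgroup $A$ as a direct sum indexed by the $G$-orbits of the Pontryagin dual $\hat{A}$, where the summand attached to a character $\rho$ is a twisted $K$-theory equivariant for $G_\rho / A$ (with $G_\rho$ the stabilizer of $\rho$), the twist being governed by the restriction of the extension $1 \to A \to G \to Q \to 1$ over the $\rho$-isotypic piece. First I would specialize this to $A = l(\Z/2)$: because $A$ is central, the conjugation action of $G$ on $\hat{A} = \{\mathrm{triv}, \mathrm{sign}\}$ is trivial, so the index set $\hat{A}/G$ is the full two-element set $\{\mathrm{triv}, \mathrm{sign}\}$, every stabilizer $G_\rho$ equals $G$, and $G_\rho / A = Q$. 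Since the formula is stated for compact Lie groups, the hypotheses on $G$ and $Q$ are met.

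Next I would read off the two summands concretely. Let $\zeta$ denote the nontrivial element of $l(\Z/2)$. Any $G$-equivariant bundle $E \to X$ splits as $E_{\mathrm{triv}} \oplus E_{\mathrm{sign}}$ according to the $\pm 1$ eigenspaces of the central involution $\zeta$ acting fiberwise; this fiberwise isotypic decomposition is legitimate precisely because $A$ acts trivially on $X$, so $\zeta$ acts by a bundle automorphism covering the identity of $X$. On $E_{\mathrm{triv}}$ the element $\zeta$ acts as the identity, so the $G$-action factors through $Q$ and $E_{\mathrm{triv}}$ is an ordinary $Q$-equivariant bundle; this produces the untwisted factor $K^*_Q(X)$. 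On $E_{\mathrm{sign}}$ the element $\zeta$ acts by $-1$, so these bundles are exactly those classified by $Q$-equivariant $K$-theory twisted by the central extension $1 \to \Z/2 \to G \to Q \to 1$ paired with the sign character, which is the twist denoted $[\tilde{Q}_{sign}]$; this produces $K^{[\tilde{Q}_{sign}]+*}_Q(X)$.

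The point that needs care, and which I expect to be the main obstacle, is matching the abstract twist delivered by the general formula with the explicit extension twist $[\tilde{Q}_{sign}]$ used elsewhere in the paper, i.e. checking that the $2$-cocycle on $Q$ arising from the sign-isotypic part of the sequence is the one defining $\tilde{Q}_{sign}$. This is a cocycle-level bookkeeping step: choosing a set-theoretic section $s \colon Q \to G$ one verifies that the obstruction $s(q_1)s(q_2)s(q_1 q_2)^{-1} \in A$, transported through the sign character, reproduces the advertised extension class. Once this identification is settled, the formula collapses to exactly two terms — the triviality of the $A$-action on $X$ guaranteeing that the fixed-point subspaces appearing in the general statement are all of $X$ — and yields $K^*_{G}(X) \cong K^*_{Q}(X) \oplus K^{[\tilde{Q}_{sign}]+*}_{Q}(X)$, as claimed.
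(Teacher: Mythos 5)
Your proposal is correct and follows essentially the same route as the paper: both apply the {\'A}ngel--G{\'o}mez--Uribe decomposition formula to the central subgroup $l(\Z/2)$, observe that centrality makes the $G$-action on $\Irr(\Z/2)=\{1,\mathrm{sign}\}$ trivial so that both stabilizers equal $G$ and both quotients equal $Q$, and thereby collapse the sum to the two stated summands. The only cosmetic difference is that the paper sidesteps the cocycle-matching step you flag by \emph{defining} $[\tilde{Q}_{\mathrm{sign}}]$ as the class of the $\mathbb{T}$-central extension $\tilde{Q}_{\mathrm{sign}}$ produced by the pullback construction in that decomposition formula, so the identification of the twist is definitional rather than something to verify.
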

\begin{proof}
As given in \cite[Section 2.1]{ngel2017EquivariantCB}, there is a well-defined $G$-action on the irreducible $\Z/2$-representations by \[(g\cdot \rho)(a) = \rho (g^{-1} a g) = \rho (a),\] for any $g\in G$, $a\in \Z/2$ and any irreducible $\Z/2$-representation $\rho$.  

Since the irreducible representations $(\rho, V_{\rho})$ of $\mathbb{Z}/2$ are all 1-dimensional and fixed by $G$,  
 the group $PU(1)$ of inner automorphism of $U(1)$ consists of exactly one element, i.e. the identity map. As in \cite[(1), page 6]{ngel2017EquivariantCB}, we use the symbol $\tilde{G}_{\rho}$ to denote the pullback  
\[ \xymatrix{\tilde{G}_{\rho} \ar[r]^{\tilde{f}} \ar[d]^{\tau_{\rho}} &U(1) \ar[d] \\
G\ar[r] &PU(1)} \] We have $\tilde{G}_{\rho} = G\times U(1)$. The map $\tau_{\rho}$ is the projection map to $G$ and $\tilde{f}$ is the projection map to $U(1)$. 

Then we consider the commutative diagram
\[ \xymatrix{\Z/2 \ar[r]^{\tilde{l}} \ar[d]_{=} &\tilde{G}_{\rho} \ar[d] \\ \Z/2 \ar[r]^{l} &G}\] where $\tilde{l}$ is defined to be the unique map so that $\rho=\tilde{f}\circ \tilde{l}$. Thus, 
$\tilde{l}$ is the product of $l$ and the representation $\rho$. 

Then
we consider the commutative diagram \begin{equation} \label{qrho:plbk}
    \xymatrix{  &\Z/2 \ar[d]^{\tilde{l}} & \Z/2\ar[d]^{l} \\ \T \ar[r] & \tilde{G}_{\rho} \ar[d]^{\tilde{\pi}} \ar[r] &G \ar[d]^{\pi} \\
\T\ar[r]^{i_Q} & \tilde{Q}_{\rho} \ar[r]^{p_Q} & Q   }
\end{equation} where the vertical sequences are both exact, the horizontal sequences are $\T$-central extensions and the square is a pullback square. 
If $\rho$ is the trivial representation of $\Z/2$, $\tilde{Q}_{\rho} \cong Q\times \T$ and, by \cite[Proposition 2.2]{ngel2017EquivariantCB}, $\rho$ extends to an irreducible representation of $G$. However, if $\rho$ is the sign representation of $\Z/2$,  it may not extend to the whole group $G$. And the central extension \[ \xymatrix{1\ar[r] & \T \ar[r]^{i_Q} & \tilde{Q}_{\rho}  \ar[r]^{p_Q}
& Q \ar[r] &1}\] may correspond to a nontrivial element $[\tilde{Q}_{\rho}]$  in $H^3(BQ; \mathbb{Z})$.




By \cite[Corollary 3.7]{ngel2017EquivariantCB},
\begin{equation}
K^*_{G}(X) \cong \bigoplus_{\rho\in G \slash Irr(\Z/2)} K^{[\tilde{Q}_{\rho}] +\ast}_{Q_{\rho}} (X), \label{decomp:z2:cor}\end{equation}
where $\rho$ runs over representatives of the orbits of the $G$-action on the set of isomorphism classes of irreducible $\Z/2$-representations, i.e. $\{1, sign\}$, the action of \[Q_{\rho} = G_{\rho}/(\Z/2)\] on $X$ is induced from the $G$-action on $X$, and $G_{\rho}$ is the isotropy group of $\rho$ under the $G$-action. Note that the two irreducible $\Z/2$-representations  are fixed by the $G$-action and $G_{\rho} = G$ for each $\rho$. Thus, the isomorphism \eqref{decomp:z2:cor} is exactly 
\[K^*_{G}(X)\cong K^*_{Q}(X)\oplus K^{[\tilde{Q}_{sign}]+*}_{Q}(X)\] In each component, the $Q$-action on $X$ is induced from the quotient map $\pi: G\longrightarrow Q$.

\end{proof}

Let
\[ 1 \longrightarrow \Z/2  \buildrel{l}\over\longrightarrow G \buildrel{\pi}\over\longrightarrow Q \longrightarrow 1 \] be a short exact sequence of compact groups and $l(A)$ is contained in the center of $G$.
For any torsion element $\alpha$ in $G$, we have the short exact sequence
\[0\longrightarrow \mathbb{Z}/2\buildrel{i}\over\longrightarrow \Lambda_{G}(\alpha)\buildrel{[\pi, id]}\over\longrightarrow \Lambda_{Q}(\pi(\alpha))\longrightarrow 0\] with \[i(\Z/2)=\{[\beta, 0] \in \Lambda_G(\alpha) \mid \beta\in l(\Z/2)\}  \] contained in the center of $\Lambda_{G}(\pi(\alpha))$. 
In addition, $X^\alpha$ is a $ \Lambda_{G}(\alpha)$-space with the action  by 
$i(\Z/2)$ trivial.

Especially, if $\alpha$ is the nontrivial element in $l(\Z/2)$, then $\pi(\alpha) = 1$ and we have \[\Lambda_{Q}(\pi(\alpha))\cong Q\times \T; \quad
\widetilde{\Lambda_{Q}(\pi(\alpha))}_{\rho}  \cong  \tilde{Q}_{\rho}\times \T. \]
In this case, the central extension  \[ \xymatrix{1\ar[r] & \T \ar[r] & \widetilde{\Lambda_{Q}(\pi(\alpha))}_{\rho}  \ar[r]
& \Lambda_{Q}(\pi(\alpha)) \ar[r] &1} \] is completely determined by \[ \xymatrix{1\ar[r] & \T \ar[r]^{i_Q} & \tilde{Q}_{\rho}  \ar[r]^{p_Q}
& Q\ar[r] &1}, \] thus, by the 3-cocycle $[\tilde{Q}_{\rho}]$.

Then we can get a corollary of Lemma \ref{dcl}.
\begin{lemma} Let
\[ 1 \longrightarrow \Z/2  \buildrel{l}\over\longrightarrow G \buildrel{\pi}\over\longrightarrow Q \longrightarrow 1 \] be a short exact sequence of compact groups and  $l(A)$ is contained in the center of $G$. Let $X$ be a $G$-space with $l(\Z/2)$ acting on
it trivially. For any torsion element $\alpha$ in $G$,
we have the isomorphism
\[ K^*_{\Lambda_G(\alpha)}(X^{\alpha}) \cong K^*_{\Lambda_{Q}(\pi(\alpha))}(X^{\alpha})\oplus K^{[\widetilde{\Lambda_{Q}(\pi(\alpha))}_{sign}]+*}_{\Lambda_{Q}(\pi(\alpha))}(X^{\alpha}).\]

Especially,  if $\alpha$ is the nontrivial element in $l(\Z/2)$, \[K^*_{\Lambda_G(\alpha)}(X^{\alpha}) \cong K^*_{Q}(X^{\alpha})\otimes \Z[q^{\pm}]\oplus K^{[\tilde{Q}_{sign}]+*}_{Q}(X^{\alpha}) \otimes \Z[q^{\pm}]. \]

\label{dcld2}
\end{lemma}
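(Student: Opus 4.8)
The plan is to derive Lemma \ref{dcld2} as a direct application of Lemma \ref{dcl}, but applied to the pair of enhanced centralizers $\Lambda_G(\alpha)$ and $\Lambda_Q(\pi(\alpha))$ in place of $G$ and $Q$. The input I would use is already assembled in the discussion preceding the statement: for any torsion element $\alpha \in G$, the quotient map $\pi$ induces a short exact sequence
\[
0\longrightarrow \Z/2\buildrel{i}\over\longrightarrow \Lambda_{G}(\alpha)\buildrel{[\pi, id]}\over\longrightarrow \Lambda_{Q}(\pi(\alpha))\longrightarrow 0,
\]
whose kernel $i(\Z/2) = \{[\beta,0] \mid \beta \in l(\Z/2)\}$ lies in the center of $\Lambda_{G}(\alpha)$ and acts trivially on $X^\alpha$. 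These are precisely the standing hypotheses of Lemma \ref{dcl}, so the whole strategy is to feed this sequence into that lemma.

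First I would confirm that the three hypotheses transfer. Centrality of $i(\Z/2)$ holds because $l(\Z/2)$ is central in $G$, hence commutes with every $h \in C_G(\alpha)$, while the $\R$-coordinate commutes with everything by construction of $\Lambda_G(\alpha)$. Triviality of the $i(\Z/2)$-action on $X^\alpha$ is inherited from triviality of the $l(\Z/2)$-action on $X$ together with the formula \eqref{lambda_act_def} for the $\Lambda$-action, since $[\beta,0]$ acts through $\beta$. With these checked, Lemma \ref{dcl} applied verbatim (with $G, Q, X$ replaced by $\Lambda_G(\alpha), \Lambda_Q(\pi(\alpha)), X^\alpha$) produces the first displayed isomorphism, the twist being the class $[\widetilde{\Lambda_Q(\pi(\alpha))}_{sign}]$ of the $\T$-central extension built from the sign representation of $\Z/2$.

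For the special case where $\alpha$ is the nontrivial element of $l(\Z/2)$, I would specialize using $\pi(\alpha)=1$, so that $\Lambda_Q(1) \cong Q \times \T$. The untwisted summand then becomes
\[
K^*_{\Lambda_Q(1)}(X^\alpha) \cong K^*_{Q \times \T}(X^\alpha) \cong K^*_Q(X^\alpha) \otimes \Z[q^{\pm}],
\]
using $K_\T(\pt) \cong \Z[q^{\pm}]$. For the twisted summand, the product splitting $\widetilde{\Lambda_Q(1)}_\rho \cong \tilde{Q}_\rho \times \T$ established in the discussion before the statement shows that the extension class $[\widetilde{\Lambda_Q(1)}_{sign}]$ is pulled back from $[\tilde{Q}_{sign}] \in H^3(BQ;\Z)$, whence that factor is $K^{[\tilde{Q}_{sign}]+*}_Q(X^\alpha) \otimes \Z[q^{\pm}]$. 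Assembling the two summands yields the second isomorphism.

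The argument is essentially formal once Lemma \ref{dcl} is available; the only point requiring genuine care is the identification of the twisting data in the special case, namely seeing that $[\widetilde{\Lambda_Q(1)}_{sign}]$ reduces to $[\tilde{Q}_{sign}]$ tensored with a trivial $\T$-factor. I expect this to be the main (though mild) obstacle, and it is dispatched by the product decomposition of the enhanced groups recorded before the statement.
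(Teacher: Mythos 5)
Your proposal is correct and follows essentially the same route as the paper: the paper also derives this lemma by feeding the short exact sequence $0\to \Z/2\to \Lambda_G(\alpha)\to \Lambda_Q(\pi(\alpha))\to 0$ (with central kernel acting trivially on $X^\alpha$) into Lemma \ref{dcl}, and handles the special case via $\Lambda_Q(1)\cong Q\times\T$ and the splitting $\widetilde{\Lambda_Q(1)}_\rho\cong \tilde{Q}_\rho\times\T$. Your explicit verification of the transferred hypotheses and of the identification of the twist class is exactly the content the paper records in the paragraph preceding the statement.
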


\section{An application of Real Mackey-type decomposition } \label{Real:decomp:dcl}

In this section we give a corollary of \cite[Theorem 1.10]{huanyoung2022}, which is a Real generalization of the Mackey-type decomposition of complex $K$-theory \cite[\S 5]{freed2011b} and, when it is specialized to the complex case, we get \cite[Theorem 3.6, Corollary 3.7]{ngel2017EquivariantCB}. And then we apply it in the computation of Real quasi-elliptic cohomology of $4$-spheres.

First we recall the setting of the theorem. 
Let
\begin{equation}\label{eq:gradedSESLift} 1 \longrightarrow H \buildrel{l}\over\longrightarrow \hat{G} \buildrel{p}\over\longrightarrow \hat{Q} \longrightarrow 1 \end{equation} be an exact sequence of $\Z/2$-graded compact Lie groups where $\hat{Q}$ is nontrivially graded. The ungraded groups of $\hat{G}$ and $\hat{Q}$ are denoted by $G$ and $Q$ respectively.
Given $\epsilon \in \Z/2$ and a complex vector space $V$, write
\begin{equation}\label{eq:conjNotation}
{^{\epsilon}}V
=
\begin{cases}
V & \mbox{if } \epsilon =1, \\
\overline{V} & \mbox{if } \epsilon = -1,
\end{cases}
\end{equation}
where $\overline{V}$ is the complex conjugate vector space of $V$.

The group $\hat{G}$ acts on the set $\Irr (H)$ of isomorphism classes of irreducible unitary representations of $H$: for an irreducible   $H$-representation $\rho_V$ and $\omega \in \hat{G}$,  $\omega \cdot \rho_V$ is defined by
\[
(\omega \cdot \rho_V)(h) = \rho_{{^{\pi(\omega)}}V}( \omega^{-1} h \omega),
\qquad \text{ for any }
h \in H.
\]
For any $x \in H$, the map $\rho_V \rightarrow x \cdot \rho_V$ is an $H$-equivariant isometry. In particular, $H$ acts trivially on $\Irr(H)$ and there is an induced action of $\hat{Q}$ on $\Irr(H)$. 

Fix a representative $V$ of each $[V] \in \Irr(H)$.  By Schur's Lemma, for any representative $W$ of $\omega \cdot [V]$, \[L_{[V],\omega} := \hom_{H}(W, \omega \cdot V)\] is a hermitian line. Following \cite[Section 9.4]{freed2013b}, the composition maps
\begin{equation}
\label{eq:nuComposition}
L_{\omega_1 \cdot [V],\omega_2} \otimes {^{\pi(\omega_2)}}L_{[V], \omega_1}
\longrightarrow
L_{[V], \omega_2 \omega_1},
\qquad
f_2 \otimes f_1
\mapsto
 (\omega_2 \cdot f_1) \circ f_2
\end{equation}
define a $\pi$-twisted extension of $\Irr (H) \git \hat{G}$. For $q \in \hat{Q}$, let \[\mathbb{L}_{[V],q}\]  be the set of all sections $s$ of
\[
\bigcup_{ \omega \in p^{-1}(q)} L_{[V],\omega} \longrightarrow p^{-1}(q) \subset \hat{G}
\]
such that the image of $\rho_W(h) \otimes s(\omega)$ under \eqref{eq:nuComposition} is $s(h\omega)$ for all $h \in H$, where $W$ is the representative of $q \cdot V$. Exactness of the sequence \eqref{eq:gradedSESLift} implies that $\mathbb{L}_{[V],q}$ is one dimensional. The maps \eqref{eq:nuComposition} induce on $\{\mathbb{L}_{[V],q}\}_{[V],q}$ the structure of a $\pi$-twisted extension of $\Irr(H) \git \hat{Q}$, which we denote by \[\hat{\nu}_{\hat{G}}.\] Then we have the decomposition formula.

\begin{theorem}
\label{thm:KRMackeyDecomp}
Let $1 \rightarrow H \rightarrow \hat{G} \rightarrow \hat{Q} \rightarrow 1$ be an exact sequence of $\Z/2$-graded compact Lie groups with $\hat{Q}$ non-trivially graded. Let $\hat{G}$ act on a compact Hausdorff space $X$ with contractible local slices\footnote{\emph{Existence of contractible local slices} means that each $x \in X$ admits a closed $\hat{G}$-stable neighbourhood of the form $\hat{G} \times_{\Stab_{\hat{G}}(x)} S_x$ for a slice $S_x$ which is $\Stab_{\hat{G}}(x)$-equivariantly contractible.} such that $H$ acts trivially. Then there is an isomorphism
\[
KR^{\ast}_{G}(X)
\cong
KR^{\ast +\hat{\nu}_{\hat{G}}}_{Q, \cpt}(X \times \Irr(H)),
\]
where $\hat{Q}$ acts diagonally on $X \times \Irr(H)$, the pullback of $\hat{\nu}_{\hat{G}}$ along $(X \times \Irr(H)) \git \hat{Q} \longrightarrow \Irr (H) \git \hat{Q}$ is again denoted by $\hat{\nu}_{\hat{G}}$ and $KR_{\cpt}(-)$ is $KR$-theory with compact supports.
\end{theorem}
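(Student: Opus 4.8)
The plan is to follow the proof of the complex Mackey-type decomposition \cite[Theorem 3.6]{ngel2017EquivariantCB}, upgrading each step so that it respects the antilinear structure carried by the $\Z/2$-grading. The key point is that, since $H$ acts trivially on $X$, every Real $\hat{G}$-equivariant bundle $E\to X$ --- a complex bundle equipped with maps $E_x\to E_{\omega x}$ that are $\C$-linear when $\pi(\omega)=1$ and antilinear when $\pi(\omega)=-1$ --- restricts on each fibre to a unitary $H$-representation. First I would form the fibrewise $H$-isotypic decomposition: for each $[V]\in\Irr(H)$ the multiplicity space $\hom_H(V,E_x)$ varies continuously in $x$, assembling into a complex bundle $M_{[V]}$ on $X$ with $E\cong\bigoplus_{[V]}M_{[V]}\otimes V$. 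Because $H$ acts trivially on $\Irr(H)$, the residual group $\hat Q=\hat G/H$ acts on the index set $\Irr(H)$ and merely permutes the $M_{[V]}$; packaging them as a single bundle $M$ on $X\times\Irr(H)$, with fibre $\hom_H(V,E_x)$ over $(x,[V])$, converts this permutation into a genuine action of $\hat Q$ for the diagonal action on $X\times\Irr(H)$.

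The next step is to pin down the precise twist. Lifting $q\in\hat Q$ to $\omega\in\hat G$ and using the fixed representatives $V$, the action of $\omega$ on $E$ identifies $M_{[V]}$ with $M_{\omega\cdot[V]}$ only after choosing an element of the hermitian line $L_{[V],\omega}=\hom_H(W,\omega\cdot V)$; the failure of these identifications to compose strictly is measured exactly by the $\pi$-twisted extension $\hat\nu_{\hat G}$ of $\Irr(H)\git\hat Q$ determined by the cocycle \eqref{eq:nuComposition}. When $\pi(\omega)=-1$ the map on $E$ is antilinear, so the induced identification of multiplicity spaces is antilinear, and this is precisely what the conjugation ${}^{\pi(\omega_2)}L_{[V],\omega_1}$ in \eqref{eq:nuComposition} records; it is what makes $M$ a \emph{Real} $\hat\nu_{\hat G}$-twisted $\hat Q$-bundle rather than a merely projective one. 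Thus $E\mapsto M$ yields a map $KR^0_G(X)\to KR^{\hat\nu_{\hat G}}_{Q,\cpt}(X\times\Irr(H))$, and I would produce its inverse by the reassembly $M\mapsto\bigoplus_{[V]}M_{[V]}\otimes V$, the $\hat G$-action being reconstructed from the $\hat Q$-action on $M$ together with the tautological $H$-action on the $V$'s; well-definedness holds precisely because the two occurrences of the twist cancel. The passage from $KR^0$ to all degrees then follows formally, carrying the graded formalism of \cite[Section 9.4]{freed2013b} through each construction.

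Two technical inputs deserve attention. The contractible-local-slice hypothesis is what makes the fibrewise isotypic decomposition continuous with locally constant multiplicities: over a $\Stab_{\hat G}(x)$-equivariantly contractible slice every equivariant bundle is determined by its fibre, so $x\mapsto\hom_H(V,E_x)$ is a genuine bundle and the local pieces glue. Compact supports enter because $X$ is compact and a given $E$ has only finitely many nonzero isotypic components, so $M$ is supported on $X\times S$ for a finite $S\subset\Irr(H)$; conversely the compactly supported classes are exactly those reassembling to finite-rank bundles. The main obstacle, and where I expect to spend the most care, is the consistent bookkeeping of the antilinear structure through the twist: one must check that the conjugations in \eqref{eq:nuComposition} match the antilinearity of the Real bundle maps exactly, so that the forward and backward assignments both land in $KR$ with the twist $\hat\nu_{\hat G}$ and are mutually inverse as Real structures. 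This is where the nontriviality of the grading on $\hat Q$ is essential, and where the argument genuinely departs from the complex case of \cite{ngel2017EquivariantCB}.
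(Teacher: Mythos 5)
The paper does not prove Theorem \ref{thm:KRMackeyDecomp} itself; it imports the statement from \cite[Theorem 1.10]{huanyoung2022} and refers the reader to Section 1.5 of that paper for the argument. Your outline --- fibrewise $H$-isotypic decomposition, packaging the multiplicity spaces into a single bundle over $X \times \Irr(H)$ twisted by the lines $L_{[V],\omega}$, tracking the antilinearity of odd-graded elements through the cocycle \eqref{eq:nuComposition}, and using compactness of $X$ to land in compactly supported classes --- is precisely the strategy of that cited proof (itself the Real upgrade of \cite[Theorem 3.6]{ngel2017EquivariantCB} following \cite[Section 9.4]{freed2013b}), so your proposal is correct and takes essentially the same route.
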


We refer the readers \cite[Section 1.5]{huanyoung2022} for the proof of the theorem and more details.

We are especially in the case when $H$ is $\Z/2$.       The irreducible unitary representations of $\Z/2$ are $1$ and the sign representation $sign$. They are both of the real type. Thus, $ \hat{G}$ acts trivially on $\Irr(\Z/2)$. So $\hat{G}$ acts trivially on the product $S^0\times \Irr(\Z/2)$. 
     Thus, $ \Irr (H) \git \hat{Q} = \{1\} \git \hat{Q} \sqcup \{sign\}\git \hat{Q}$. And we use \[\hat{\nu}_{\hat{G}, 1}, \quad \hat{\nu}_{\hat{G}, sign} \] to denote the restriction of $\pi$-twisted extension of $\hat{\nu}_{\hat{G}}$ to the components $\{1\} \git \hat{Q}$ and $ \{sign\}\git \hat{Q}$ respectively.   In addition, $\hat{\nu}_{\hat{G}, 1}$ gives the trivial twist.
     Thus, 
     by Theorem \ref{thm:KRMackeyDecomp}, 
     \[
KR^{\ast}_{G}(S^0)
\cong
KR^{\ast +\hat{\nu}_{\hat{G}}}_{Q}(S^0 \times \Irr(\Z/2)) \cong \prod_1^2 KR^{\ast}_{Q}(\pt)\oplus KR^{\ast +\hat{\nu}_{\hat{G}, sign}}_{Q}(\pt).
\] So we get the corollary below. 

\begin{corollary} \label{RM:Z2}
    Let $1 \rightarrow \Z/2 \rightarrow \hat{G} \rightarrow \hat{Q} \rightarrow 1$ be an exact sequence of $\Z/2$-graded compact Lie groups with $\hat{Q}$ non-trivially graded. Let $\hat{G}$ act on $S^0$ trivially. Then we have the isomorphism \[
KR^{\ast}_{G}(S^0)
\cong
\prod_1^2 KR^{\ast}_{Q}(\pt)\oplus KR^{\ast +\hat{\nu}_{\hat{G}, sign}}_{Q}(\pt).
\] 
\end{corollary}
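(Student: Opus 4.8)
The plan is to deduce the corollary by specializing the Real Mackey-type decomposition formula, Theorem~\ref{thm:KRMackeyDecomp}, to the case $H = \Z/2$ and $X = S^0$. First I would verify the hypotheses: the given sequence $1 \rightarrow \Z/2 \rightarrow \hat{G} \rightarrow \hat{Q} \rightarrow 1$ is an exact sequence of $\Z/2$-graded compact Lie groups with $\hat{Q}$ nontrivially graded, and $S^0$ is a compact Hausdorff $\hat{G}$-space on which $\hat{G}$---and hence $H = \Z/2$---acts trivially. Since $S^0$ is finite and discrete, the existence of contractible local slices is immediate, so Theorem~\ref{thm:KRMackeyDecomp} applies and yields $KR^{\ast}_{G}(S^0) \cong KR^{\ast + \hat{\nu}_{\hat{G}}}_{Q, \cpt}(S^0 \times \Irr(\Z/2))$.

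Next I would determine the two data entering the right-hand side, namely the $\hat{Q}$-set $\Irr(\Z/2)$ and the $\pi$-twisted extension $\hat{\nu}_{\hat{G}}$ over it. Here $\Irr(\Z/2) = \{1, sign\}$ consists of two classes, both of real type, so complex conjugation fixes each; together with the general fact that $H$ acts trivially on $\Irr(H)$, this forces the $\hat{G}$-action, and thus the induced $\hat{Q}$-action, on $\Irr(\Z/2)$ to be trivial. Consequently $\Irr(\Z/2) \git \hat{Q}$ splits as $\{1\} \git \hat{Q} \sqcup \{sign\} \git \hat{Q}$, and I would write $\hat{\nu}_{\hat{G}, 1}$ and $\hat{\nu}_{\hat{G}, sign}$ for the corresponding restrictions of $\hat{\nu}_{\hat{G}}$.

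Substituting these into the isomorphism, I would use that $S^0 \times \Irr(\Z/2)$ is a finite discrete set with trivial $\hat{G}$-action, so its compactly supported $KR$-theory is the direct sum over its points. The two points of $S^0$ contribute identical factors, producing the overall $\prod_1^2$, while on each such factor the two points of $\Irr(\Z/2)$ contribute one summand twisted by $\hat{\nu}_{\hat{G}, 1}$ and one twisted by $\hat{\nu}_{\hat{G}, sign}$.

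The one genuine point, as opposed to bookkeeping, is to check that $\hat{\nu}_{\hat{G}, 1}$ is the trivial twist, so that its summand is plain $KR^{\ast}_{Q}(\pt)$; I expect this to be the main obstacle. It should follow from the construction of $\hat{\nu}_{\hat{G}}$ from the hermitian lines $L_{[V], \omega} = \hom_{H}(W, \omega \cdot V)$: for the trivial representation $V = 1$ each $\omega \cdot 1$ is canonically isomorphic to $1$, which furnishes canonical trivializations of the lines compatible with the composition maps~\eqref{eq:nuComposition}, whence $\hat{\nu}_{\hat{G}, 1}$ is trivializable. Assembling everything gives exactly $KR^{\ast}_{G}(S^0) \cong \prod_1^2 KR^{\ast}_{Q}(\pt) \oplus KR^{\ast + \hat{\nu}_{\hat{G}, sign}}_{Q}(\pt)$.
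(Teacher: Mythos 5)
Your proposal is correct and follows essentially the same route as the paper: specialize Theorem~\ref{thm:KRMackeyDecomp} to $H=\Z/2$, observe that both irreducible representations are of real type so the $\hat{Q}$-action on $\Irr(\Z/2)$ is trivial, split the right-hand side over the four points of $S^0\times\Irr(\Z/2)$, and note that $\hat{\nu}_{\hat{G},1}$ is the trivial twist. You actually supply slightly more justification than the paper does (the contractible-local-slices hypothesis and the canonical trivialization of the lines $L_{[1],\omega}$ over the trivial representation), but the structure of the argument is identical.
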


\bibliographystyle{amsalpha}
\bibliography{QEllR_comp.bib}

\end{document}